%% file: main.tex
\documentclass[preprint,12pt]{elsarticle}

\usepackage{amsmath, amssymb}
\usepackage{amsthm}
\usepackage{graphicx}
\usepackage{booktabs}
\usepackage{adjustbox}
\usepackage{float}
\usepackage{placeins}
\usepackage{hyperref}

\newtheorem{lemma}{Lemma}
\newtheorem{theorem}{Theorem}

\newtheorem{proposition}{Proposition}

\makeatletter\def\ps@pprintTitle{\let\@oddhead\@empty\let\@evenhead\@empty\let\@oddfoot\@empty\let\@evenfoot\@empty}\makeatother

\begin{document}

\begin{frontmatter}

\title{Learned Monotone Recurrent Features in Governed Credit Scoring: The Price of the Frame and the Necessity of Macro Conditioning}
\author{Yew Lee Tan\fnref{fn1}}
\ead{yewlee001@e.ntu.edu.sg}
\fntext[fn1]{This work was conducted in a personal capacity; the views expressed are the author's own.}

\begin{abstract}
Regulated credit scoring requires scores monotone non-decreasing in every exposure input. Deployed pipelines---hand-crafted monotone aggregates feeding sign-constrained gradient boosting---already meet this by composition; the open question is what \emph{learned} temporal aggregation is worth inside one. We answer on five production-scale credit datasets at matched admissibility (one priced baseline convention excepted), with a monotone recurrent architecture whose per-input guarantee we extend, with proofs, to vector-valued inputs and to exogenously macro-conditioned decay gates, severities, thresholds, and peak memory. Two findings result. First, a \emph{strictness ladder}: the value of learned monotone features rises with governance-frame strictness---zero on unconstrained engineered panels, maximal in summaries-only frames---replicated across two datasets and an official temporal-stability metric, though unconditioned features degrade on externally adjudicated later weeks. Second, a \emph{conditioning--delivery asymmetry} under regime shift. On a train-on-boom, test-on-crisis mortgage design, two public macroeconomic series hurt as input columns, yet conditioning the recurrence on them delivers the paper's only learned-block crisis-cohort uplifts. The confirmed effect: $+0.006$ to $+0.013$ AUC on an internally pre-registered Freddie Mac replication, at all five held-out seeds. The discovery estimate: $+0.015$ to $+0.021$ on Fannie Mae (three of five seeds post hoc), worth $10$--$27$ basis points of defaulted balance at an $80\%$ approval cutoff, and grows with early-prepaid loans excluded. A state-level test identifies the mechanism: between-cohort calibration transfer. A pandemic-band episode bounds scope: under forbearance-distorted labels the gain generalizes at a quarter to a third of crisis size on Fannie Mae, on Freddie Mac only against the capacity control.
\end{abstract}

\begin{keyword}
risk analysis \sep credit scoring \sep monotonicity constraints \sep model governance \sep macroeconomic conditioning
\end{keyword}

\end{frontmatter}

\section{Introduction}
\label{sec:intro}

Exposure time series arise throughout consumer and mortgage credit---utilization, balances, delinquency depth, or limit usage over time---and larger exposures are expected to correspond to greater risk. In regulated risk scoring this expectation is formalized as a monotonicity constraint: increasing any exposure input must never decrease the resulting score. Supervisory guidance requires risk models to be conceptually sound, explainable, and verifiable through effective challenge \citep{fed2011_sr11_7,occ2011_12}, and a monotone exposure--score relationship is a standard way to meet that expectation. The constraint is normative rather than statistical. A model that violates it is inadmissible regardless of its predictive accuracy, so it must hold exactly, for every input, not merely on average over a validation set.

Deployed practice already meets this constraint, but through \emph{hand-crafted} features. Credit pipelines collapse exposure histories into windowed aggregates---trailing means and maxima, breach counts, last values---of the kind packaged as bureau trended-data attributes \citep{experian_trended3d,vantagescore4_userguide}, and feed them to a scorecard model, increasingly a gradient-boosted ensemble with per-feature monotone constraints \citep{chen2016xgboost,ke2017lightgbm,lightgbm_monotone}. Each such aggregate is monotone in every exposure input, and the constrained ensemble is monotone in every feature, so the pipeline is monotone end-to-end by composition. The guarantee, by itself, is available to engineered features today. What the engineered pipeline cannot do is \emph{adapt} its temporal aggregation: every window length, decay, and threshold is fixed ex ante by an analyst.

This paper asks the two questions that deployment forces. First, what is learned monotone temporal aggregation worth inside a governed pipeline---and does its value depend on how strictly the surrounding feature frame is governed? Second, does the learned aggregation survive a genuine macroeconomic regime shift, and if it does not, what repairs it?

We answer on five production-scale credit datasets using a measurement instrument developed in a companion study \citep{instrument2026}: a recurrent architecture whose hidden state consists of classical risk statistics---a discounted exposure level (an exponentially weighted moving average, EWMA) and a high-water mark (HWM)---and whose sequence-to-scalar score is monotone non-decreasing in every exposure input by construction, a guarantee that survives stochastic inference. Because the engineered incumbent is monotone end-to-end by the same composition, every comparison in this paper runs at \emph{matched admissibility}---any measured difference is the value of learning the aggregation, not the value of the constraint---up to one disclosed baseline convention (a sign-free volatility aggregate inside the supervised summary scalars) that strengthens the baseline and is priced in Section~\ref{sec:real-setup}. The companion study validates the instrument on a controlled synthetic suite with known monotone mechanisms---zero violations, with the constraint's cost bounded below $0.0017$ AUC---and establishes the mechanism behind our panel results: by direct functional regression, the learned channel spans the geometrically weighted separable family of hand-crafted statistics, failing only at consecutivity, time localization, and the exposure floor (statistics requiring consecutive runs of breaches, weight profiles tied to specific positions in the sequence, and the running minimum of exposure). The division of contribution is thus: the companion supplies the instrument, its synthetic validation, and its expressiveness boundary; this paper supplies the two extensions with their proofs (Section~\ref{sec:method}) and all real-data measurement.

\paragraph{Finding 1: the strictness ladder.}
The accuracy value of learned monotone recurrent features is a monotone function of the strictness of the governance frame that consumes them. The uplift is statistically zero when the downstream model is free to use a large unconstrained engineered panel. It is boundary-level at fifteen pre-engineered channels when every retained feature must carry a signed monotonicity commitment (intervals graze zero; reported, not claimed) and zero at thirteen raw ones. It is maximal---confidence intervals clear of zero in every seed pair on the in-distribution regimes---when only auditable summaries are admitted. The pattern replicates across two datasets and an official temporal-stability metric. The uplift requires channel breadth---absent from two through eight channels, clear from thirteen up---and is an order of magnitude larger at thirteen raw event channels than at fifteen pre-engineered ones, a gap confounded with channel rawness and metric scale, so the clean breadth evidence is the low-breadth absence.

\paragraph{Finding 2: conditioning beats delivery under regime shift.}
On a train-on-boom, test-on-crisis mortgage design (a $7{\times}$ default-rate swing), unconditioned learned recurrence does not survive: the paired contrast changes sign with the seed, and neither seed-ensembling nor doubled channel breadth revives it. On externally adjudicated later weeks of the Home Credit Stability benchmark, the unconditioned features' temporal-stability contribution likewise degrades outright (Section~\ref{sec:real-hcs}). Handing the downstream model two public macroeconomic series directly as columns \emph{hurts} under the shift (CI-clear negative; the term is defined in Section~\ref{sec:real-setup}). The same two series, conditioning the recurrence structurally through gates, severities, and thresholds, deliver a $+0.015$ to $+0.021$ AUC crisis-cohort uplift at every seed under the pre-registered ensembled protocol (\emph{pre-registered} throughout means internally frozen before the corresponding fits ran; Section~\ref{sec:real-setup}), beating a matched static bank at every one (three of the five seeds are a post-hoc robustness expansion). In distribution the ordering inverts. A pre-registered state-level geography test bounds the mechanism: the gain is between-cohort calibration transfer---exactly what an exogenous regime handle should buy, and something post-hoc recalibration cannot supply, since recalibration needs crisis labels after the fact while conditioning transfers before them. A pre-registered replication on a second GSE portfolio (Freddie Mac; same crisis, independent loan population and data pipeline) confirms the conditioning uplift at all five RNN seeds under the strictest decision bar used in this paper---$+0.006$ to $+0.013$ AUC, the paper's confirmed effect size, with the larger Fannie range read as the discovery estimate. A pre-registered pandemic-band study (2020--2021 vintages, forbearance-distorted labels, caveats declared before any fit) then bounds episode generality: on Fannie Mae the same two contrasts clear at all five seeds at a quarter to a third of the crisis magnitude; on Freddie Mac only the conditioning-versus-capacity contrast clears---the episode premium is crisis-specific (Section~\ref{sec:fannie-pandemic}).

In one sentence: the value of learned monotone recurrence equals the price of the freedom the governance frame removes---and under temporal regime shift it must be exogenously conditioned, or its temporal contribution should not be trusted beyond the training window.

\paragraph{Contributions.}
\begin{itemize}
  \item \textbf{The strictness ladder} (Sections~\ref{sec:real-strictness}, \ref{sec:real-hcs}): the first measurement, to our knowledge, of learned-monotone-feature value as a function of governance-frame strictness, replicated across two datasets and an official temporal-stability metric, with the frame's own accuracy price measured alongside (the top rung buys back about five percent of the summaries frame's surrendered accuracy).
  \item \textbf{The conditioning--delivery asymmetry under regime shift} (Section~\ref{sec:real-fannie}): a demonstration that exogenous macro series help a governed scorecard under crisis shift \emph{only} as structural conditioning of the recurrence---as free columns they hurt (exploratory at the national level, pre-registered and replicated in the state-granularity test)---together with a pre-registered state-level test establishing the mechanism as between-cohort calibration transfer, an exploratory content control attributing the gain to the macro content itself (a calendar-drift handle through the identical bank is CI-clear negative under the shift), scope cells bounding when the gain is crisis-sized, a pre-registered population replication on Freddie Mac, and a pre-registered second-episode test on the pandemic band.
  \item \textbf{Two admissible extensions of the instrument, with proofs} (Sections~\ref{sec:vector}, \ref{sec:conditioning}): vector-valued exposure inputs with a coordinatewise monotonicity guarantee, and exogenous conditioning of decay gates, severities, thresholds, and peak memory that is shape-free in context without weakening the guarantee. Both extensions are exercised by the real-data studies; neither is used by the companion synthetic suite.
  \item \textbf{An evaluation template for governed pipelines}: out-of-fold learned features, matched-admissibility baselines including a supervised hand-aggregate champion and a $420$-column hand-interaction control, paired-bootstrap inference, and internal pre-registration with a public audit trail (Appendix~\ref{app:prereg-audit}).
\end{itemize}

\paragraph{Paper organization.}
Section~\ref{sec:related-work} reviews related work, including the macro-conditioned credit-risk literature this paper's second finding extends. Section~\ref{sec:method} recaps the base architecture and presents the two extensions with their guarantees. Section~\ref{sec:real} reports the empirical studies on American Express, Home Credit, Fannie Mae, Freddie Mac, and Home Credit Stability data. Section~\ref{sec:discussion} discusses magnitudes, economic translation, and limitations, and Section~\ref{sec:conclusion} concludes. Full proofs are in Appendices \ref{app:bases} and \ref{app:cond-proofs} and the pre-registration audit trail in Appendix~\ref{app:prereg-audit}; supporting tables (the strictness ladder in numbers, the exploratory UCI probe, the Fannie Mae amendment chronicle) and reproducibility details are in the Online Supplement.

\section{Related Work}
\label{sec:related-work}

\subsection{Credit scoring, model governance, and trended data}
\label{sec:rw-governance}

Credit scoring has a mature methodological literature spanning scorecard construction, performance assessment, and portfolio-level modeling \citep{thomas2017credit_scoring,crook2007consumer_credit}, in which monotone relationships between risk drivers and scores are an entrenched expectation: a scorecard in which higher utilization lowers estimated risk would be rejected by validation teams regardless of its fit statistics. Temporal information enters these pipelines as hand-crafted attributes. Bureau trended-data products synthesize 24-month histories of balances, limits, and payments into windowed aggregates---average utilization over set timeframes, counts of months over limit, changes in balances---consumed as static columns \citep{experian_trended3d,vantagescore4_userguide}. Notably, many trended attributes (utilization slopes, months since last delinquency, balance changes) are \emph{not} monotone in exposures and are therefore inadmissible under the governance constraint of Section~\ref{sec:setup}; under that constraint the engineered library shrinks to its windowed-aggregate core, which is precisely the kind of grid the companion study uses as its engineered baseline \citep{instrument2026}. Supervisory guidance reinforces this expectation: the model risk management frameworks of SR 11-7 and OCC 2011-12 \citep{fed2011_sr11_7,occ2011_12} elevate monotonicity from a modeling preference to a governance requirement that must be demonstrated, documented, and preserved across model updates. Our setting follows the common pipeline structure of this domain: the exposure score is an upstream feature feeding downstream probability-of-default models, so its guarantees must be unconditional---a violation introduced upstream cannot be repaired by a downstream model.

\subsection{Monotone model classes}
\label{sec:rw-monotone}

Architectural enforcement of monotonicity in neural networks has a long history, from non-negative-weight feed-forward networks \citep{archer1993monotonic} and min--max networks \citep{sill1997monotonic} through lattice methods \citep{gupta2016lattice,you2017deep} to modern constructions that expand expressiveness while retaining the guarantee \citep{wehenkel2019umnn,liu2020certified_monotonic,runje2023cmnn}. In credit specifically, monotonic neural additive models target regulated scoring directly \citep{chen2022mnam}, and monotonicity has been framed as a first-class model-risk-management property with dedicated evaluation methodology \citep{chen2023monotonicity_mrm}; both operate on static feature vectors rather than exposure sequences. Industrial practice has extended the weight-constraint mechanism to recurrences: weight-constrained monotonic recurrent networks for credit-risk indicators are patented and deployed \citep{boardman2024mrnn}---direct evidence that this domain demands the guarantee at the sequence level---and the companion study evaluates that construction head-to-head against the instrument used here \citep{instrument2026}. In production tabular learning, gradient-boosted trees \citep{chen2016xgboost,ke2017lightgbm} support per-feature monotone constraints enforced during tree construction \citep{lightgbm_monotone}, which has made monotone boosting the de facto standard in credit modeling pipelines. Unconstrained deep sequence models have also been applied to credit directly: convolutional networks over transaction histories for mortgage default \citep{kvamme2018mortgage}, recurrent networks over card-transaction sequences for loan applications \citep{babaev2019etrnn}, and a systematic benchmark finding deep architectures no better than gradient boosting on standard tabular credit data \citep{gunnarsson2021deep}. None of these carries a monotonicity guarantee, so under the constraint of Section~\ref{sec:setup} they are inadmissible as score components; their free-model verdict is nonetheless congruent with this paper's free-frame null---against a mature engineered panel, learned sequence features add little---which makes the governed-frame question, where this paper locates the value, the one that literature has not asked. The recurrent instrument used here comes from the companion study \citep{instrument2026}: its hidden state consists of interpretable classical risk statistics (a discounted level and a running extreme, in the lineage of exponential smoothing \citep{hyndman2002ets}, RiskMetrics decay \citep{riskmetrics1996}, and the contractual high-water mark \citep{goetzmann2003hwm}), its guarantee holds per input and per MC-dropout sample \citep{gal2016dropout,daw2019pga_uq}, and its predictive cost, interpretability, and expressiveness boundary are characterized there on a controlled synthetic suite. This paper recaps the architecture only to the depth the empirical studies require (Section~\ref{sec:method}) and contributes the two extensions the real-data protocol needs: vector-valued inputs and exogenous conditioning.

\subsection{Macroeconomic conditioning in credit risk}
\label{sec:rw-macro}

Macroeconomic variation has long entered credit-risk models, but at the portfolio or hazard level rather than inside feature construction. Wilson's portfolio credit risk framework conditions sectoral default rates on macroeconomic indices through logistic regressions driven by observed macro factors \citep{wilson1998portfolio}. The Vasicek single-factor model loads each obligor's default probability on one systematic factor, and underlies the asymptotic capital formulas of banking regulation \citep{vasicek2002distribution}. In consumer credit, survival models add macroeconomic covariates to the default hazard and improve out-of-time performance \citep{bellotti2009credit}; in mortgage credit specifically, default and prepayment are competing risks driven by the same macroeconomic option values \citep{deng2000mortgage}, which is why Section~\ref{sec:real-fannie} tests the headline gain against the label's prepayment composition. Rating practice distinguishes point-in-time calibration, which tracks current macroeconomic conditions, from through-the-cycle calibration, which averages across them; the choice governs how quickly scores respond to the regime \citep{aguais2004pit}. Two supervisory frameworks have since made macroeconomic conditioning mandatory practice rather than a modeling choice: expected-credit-loss provisioning under IFRS~9 and CECL, whose point-in-time default probabilities load on forward-looking macroeconomic scenarios \citep{krueger2018provisioning}, and supervisory stress testing, in which consumer-credit models with macroeconomic covariates are re-evaluated under prescribed adverse macro paths \citep{bellotti2013stress}.

These lines share a delivery mechanism: the macroeconomic series enters as a covariate, factor, or scenario input of the risk model itself. None of them conditions the feature-level temporal aggregation of an exposure history; none carries a per-input monotonicity guarantee through the conditioning; and none runs the control this paper runs---the same series delivered to the downstream model as free columns versus the same series conditioning a guaranteed recurrence. Section~\ref{sec:real-fannie} shows the two deliveries have opposite signs under crisis shift, which is invisible to any design that only ever delivers macro information one way.

\paragraph{Positioning.}
The literature provides monotone architectures for static inputs, monotone boosting over hand-crafted temporal aggregates, and macro-conditioned portfolio and hazard models. To our knowledge, no prior work measures the value of \emph{learned} monotone temporal aggregation as a function of governance-frame strictness, and no prior work conditions a monotonicity-guaranteed recurrence on exogenous macroeconomic series with a delivery control. Those two measurements, with the admissibility proofs that make the second one well-posed, are the contribution of this paper.

\section{Monotone Exposure Scoring: Architecture and Extensions}
\label{sec:method}

This section states the base architecture and its guarantee compactly---the full development, including the admissible basis library, expressiveness analysis, and synthetic validation, is in the companion study \citep{instrument2026}---and then presents the two extensions this paper contributes and its empirical studies exercise: vector-valued exposure inputs (Section~\ref{sec:vector}) and exogenous macroeconomic conditioning (Section~\ref{sec:conditioning}).

\subsection{Problem setup and monotonicity constraint}
\label{sec:setup}

We observe a non-negative exposure time series $x_{1:T} = (x_1, \dots, x_T)$ with $x_t \ge 0$ and compute a scalar exposure score $s=f(x_{1:T})$ intended for use as an upstream feature in a downstream risk model.

Monotonicity is enforced as a structural constraint: for the coordinatewise partial order
\begin{equation}
  x_{1:T} \preceq x'_{1:T} \iff x_t \le x'_t \; \forall t,
\end{equation}
we require
\begin{equation}
  x_{1:T} \preceq x'_{1:T} \Rightarrow f(x_{1:T}) \le f(x'_{1:T}).
\end{equation}

\paragraph{Governance admissibility.}
We treat the monotonicity requirement as a normative commitment rather than an empirical hypothesis: a component may enter the score only if it is provably monotone non-decreasing in every exposure input. We call such components \emph{admissible}. Statistics that are informative but not monotone (e.g., variance, trend, or time since last breach) may be used for diagnostics, but are excluded from the score itself. This separation lets the model remain auditable---each score component carries a one-line monotonicity proof---without discarding non-monotone information from the analysis.

\subsection{Base architecture and guarantee}
\label{sec:architecture}

The base model uses a low-dimensional, economically interpretable recurrent state $h_t = [e_t, m_t]^\top$, where $e_t$ is a recency-weighted exposure (EWMA-like) state and $m_t$ is a high-water mark (HWM) state:
\begin{align}
  e_t &= \alpha e_{t-1} + g_1(x_t), \qquad 0<\alpha<1, \\
  m_t &= \max\big(m_{t-1},\; g_2(x_t)\big).
\end{align}
The final score is computed as $s = F(e_T,m_T)$. The two states have direct economic readings: $e_t$ captures the level, persistence, and recency of exposure through exponential decay, while $m_t$ records the worst exposure observed so far, so that severe events permanently affect the score.

\paragraph{Admissible transforms and head.}
The transforms $g_1,g_2:\mathbb{R}_{\ge 0}\to\mathbb{R}$ are monotone non-decreasing (and typically non-negative). Admissible parametric forms include the identity, the logarithm $\log(1+x)$, the hinge $(x-\tau)^+$ with $\tau\ge 0$, and the power $x^p$ with $p \ge 1$. A learned alternative is a \emph{monotone MLP}: a feed-forward network whose weights are constrained non-negative through a parameterization such as $W=\mathrm{softplus}(\Theta)$ and whose activations are monotone non-decreasing; biases are unconstrained. We use tanh activations: with non-negative weights, convex activations confine the network to convex shapes, whereas non-convex monotone activations remove the restriction without weakening the guarantee \citep{runje2023cmnn,instrument2026}. The head $F$ is monotone non-decreasing in each argument (a non-negative linear map or a monotone MLP). The decay $\alpha$ may be fixed in $(0,1)$ or learned via $\alpha=\sigma(\alpha_{\mathrm{raw}})$.

\begin{lemma}[EWMA state is monotone]
\label{lem:ewma}
Assume $g_1$ is monotone non-decreasing and $0<\alpha<1$. Then for all $t$,
$x_{1:t} \preceq x'_{1:t} \Rightarrow e_t(x_{1:t}) \le e_t(x'_{1:t})$.
\end{lemma}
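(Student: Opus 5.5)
The plan is induction on $t$, with an initial condition $e_0$ that is fixed and independent of the input (typically $e_0=0$). Both sequences start from the same $e_0$, so the base case is trivial. It is cleanest to start at $t=0$, treating the empty prefix as giving $e_0(x_{1:0}) = e_0 = e_0(x'_{1:0})$. Alternatively one can start at $t=1$: there $e_1 = \alpha e_0 + g_1(x_1)$, and $x_1 \le x'_1$ together with monotonicity of $g_1$ gives $e_1(x_{1:1}) \le e_1(x'_{1:1})$.

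For the inductive step, assume $x_{1:t} \preceq x'_{1:t}$. Coordinatewise order on the full prefix implies it on the shorter prefix $x_{1:t-1} \preceq x'_{1:t-1}$, so the induction hypothesis gives $e_{t-1}(x_{1:t-1}) \le e_{t-1}(x'_{1:t-1})$. Since $\alpha > 0$, multiplying preserves the inequality: $\alpha e_{t-1} \le \alpha e'_{t-1}$. Since $x_t \le x'_t$ and $g_1$ is non-decreasing, $g_1(x_t) \le g_1(x'_t)$. Adding the two inequalities yields $e_t \le e'_t$.

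As a cross-check, one can unroll the recursion to the closed form
\begin{equation*}
e_t = \alpha^t e_0 + \sum_{k=1}^t \alpha^{t-k} g_1(x_k).
\end{equation*}
This is a fixed term plus a sum of non-negative weights times non-decreasing functions, each depending on a single coordinate, so it is monotone termwise.

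There is no real obstacle. The only point to state explicitly is that $e_0$ does not depend on $x$ and is shared by both runs. Only $\alpha \ge 0$ is actually used, and the condition $\alpha<1$ plays no role. Non-negativity of $g_1$ is also not needed.
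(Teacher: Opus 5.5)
Your proof is correct and is the paper's argument: induction on $t$, scaling the inductive hypothesis by $\alpha>0$ and adding $g_1(x_t)\le g_1(x'_t)$. Your extra remarks are also accurate and are left implicit in the paper: that $e_0$ is input-independent and shared by both runs, that only $\alpha\ge 0$ is used, and that neither $\alpha<1$ nor $g_1\ge 0$ is needed. The paper makes the analogous point about $a_t<1$ only for gated decay, in Appendix~\ref{app:cond-proofs}.
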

\begin{proof}
By induction on $t$: if $e_{t-1}(x)\le e_{t-1}(x')$ and $x_t\le x'_t$, then
$e_t(x)=\alpha e_{t-1}(x)+g_1(x_t)\le \alpha e_{t-1}(x')+g_1(x'_t)=e_t(x')$.
\end{proof}

\begin{lemma}[HWM state is monotone]
\label{lem:hwm}
Assume $g_2$ is monotone non-decreasing. Then for all $t$,
$x_{1:t} \preceq x'_{1:t} \Rightarrow m_t(x_{1:t}) \le m_t(x'_{1:t})$.
\end{lemma}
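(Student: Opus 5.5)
The plan mirrors the proof of Lemma~\ref{lem:ewma}: induction on $t$, using that $\max$ is monotone non-decreasing in each argument. The comparison is between the two sequences $x_{1:t}\preceq x'_{1:t}$. Since truncation preserves the partial order, the hypothesis gives $x_{1:t-1}\preceq x'_{1:t-1}$ and $x_t\le x'_t$ at every stage.

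\emph{Base case.} Fix the initial state $m_0$ as a constant shared by both sequences, either $m_0=0$ or $m_0=-\infty$ with the convention $m_1=g_2(x_1)$. Then $m_1(x)=\max(m_0,g_2(x_1))\le \max(m_0,g_2(x'_1))=m_1(x')$, because $g_2$ is non-decreasing and $x_1\le x'_1$.

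\emph{Inductive step.} Assume $m_{t-1}(x_{1:t-1})\le m_{t-1}(x'_{1:t-1})$. Monotonicity of $g_2$ gives $g_2(x_t)\le g_2(x'_t)$. Each argument of the max is then dominated by the corresponding argument for $x'$, so
\[
m_t(x)=\max\bigl(m_{t-1}(x),g_2(x_t)\bigr)\le \max\bigl(m_{t-1}(x'),g_2(x'_t)\bigr)=m_t(x').
\]
The fact used here is elementary: if $a\le a'$ and $b\le b'$, then $\max(a,b)\le\max(a',b')$. To see it, note that both $a$ and $b$ are at most $\max(a',b')$.

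There is no real obstacle. The only point needing care is that the initial state $m_0$ must not depend on the input, so that both recursions start from the same value. Alternatively, one can bypass induction by unrolling the recursion to $m_t=\max(m_0,\max_{s\le t} g_2(x_s))$, which is a maximum of coordinatewise non-decreasing functions and hence non-decreasing.
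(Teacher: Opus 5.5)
Your proof is correct and takes the same approach as the paper, which argues by induction on $t$ using that $\max(\cdot,\cdot)$ is non-decreasing in each argument. Your remark that $m_0$ must not depend on the input states explicitly an assumption the paper leaves implicit.
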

\begin{proof}
By induction on $t$, using that $\max(\cdot,\cdot)$ is non-decreasing in each argument.
\end{proof}

\begin{theorem}[Full score is monotone]
\label{thm:score}
Assume $g_1$ and $g_2$ are monotone non-decreasing and $F$ is monotone non-decreasing in each argument. Then
$x_{1:T} \preceq x'_{1:T} \Rightarrow f(x_{1:T}) \le f(x'_{1:T})$, where $f(x_{1:T}) := F(e_T(x_{1:T}), m_T(x_{1:T}))$.
\end{theorem}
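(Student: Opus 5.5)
The plan is to compose Lemma~\ref{lem:ewma} and Lemma~\ref{lem:hwm} with the coordinatewise monotonicity of the head $F$. I will fix two sequences with $x_{1:T} \preceq x'_{1:T}$. Restricting to the first $T$ coordinates, which is the whole sequence, Lemma~\ref{lem:ewma} at $t=T$ gives $e_T(x_{1:T}) \le e_T(x'_{1:T})$. Lemma~\ref{lem:hwm} at $t=T$ gives $m_T(x_{1:T}) \le m_T(x'_{1:T})$. One point to note is that both lemmas must be started from a common initial state $(e_0,m_0)$ that does not depend on the input; the architecture implicitly assumes this, and I will state it explicitly as the base case of both inductions.

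Next I pass the two componentwise inequalities through $F$ one argument at a time, using an intermediate point:
\begin{equation*}
F(e_T, m_T) \le F(e'_T, m_T) \le F(e'_T, m'_T),
\end{equation*}
where $e_T = e_T(x_{1:T})$, $e'_T = e_T(x'_{1:T})$, and similarly for $m$. The first inequality uses that $F$ is non-decreasing in its first argument with the second held fixed. The second uses that $F$ is non-decreasing in its second argument with the first held fixed. Chaining the two gives $f(x_{1:T}) \le f(x'_{1:T})$ by the definition of $f$.

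The only step needing care is this last one: "monotone in each argument" is a separate-coordinate hypothesis. Joint monotonicity on $\mathbb{R}^2$ with the product order has to be derived, and the intermediate point $(e'_T, m_T)$ is what derives it. I will also remark that the two admissible head choices satisfy the hypothesis. A non-negative linear map does so trivially. A monotone MLP with non-negative weights and non-decreasing activations is a composition of coordinatewise non-decreasing maps, so it does as well. The guarantee holds deterministically for any parameter values, and therefore also for each MC-dropout sample, since dropout only zeroes non-negative weights.
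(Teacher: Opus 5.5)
Your proposal is correct and follows the paper's proof. You apply Lemmas~\ref{lem:ewma} and~\ref{lem:hwm} at $t=T$ to order the terminal states componentwise, then pass that order through $F$; the intermediate point $(e'_T, m_T)$ simply spells out the step the paper compresses into ``applying $F$ preserves order.''
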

\begin{proof}
By Lemmas \ref{lem:ewma}--\ref{lem:hwm}, $(e_T(x),m_T(x))\preceq(e_T(x'),m_T(x'))$. Since $F$ is monotone in both inputs, applying $F$ preserves order.
\end{proof}

When $g_1$, $g_2$, or $F$ is a monotone MLP, the assumptions of Theorem~\ref{thm:score} are met because an affine map with elementwise non-negative weights preserves the coordinatewise order, elementwise monotone activations preserve it as well, and compositions of order-preserving maps are order-preserving. The guarantee also holds per MC-dropout sample, since dropout only rescales non-negative weights by non-negative constants \citep{gal2016dropout,instrument2026}.

\paragraph{Composition with engineered features.}
The same composition argument extends beyond this architecture: any monotone non-decreasing consumer of any fixed monotone featurization---windowed means and maxima, breach counts, or a gradient-boosted tree ensemble with per-feature non-decreasing constraints \citep{lightgbm_monotone}---yields an end-to-end monotone score. The by-construction guarantee is therefore \emph{not} unique to the learned channel; it is equally available to the deployed pipeline of hand-crafted aggregates feeding a constrained scorecard model. What the architecture adds over that pipeline is (i) \emph{learned} aggregation: the decay $\alpha$, the transforms $g_1, g_2$, and their composition with $F$ are fitted jointly rather than fixed ex ante, and (ii) a closed-form readable score: a trained instance collapses to a monotone function of one EWMA and one HWM with plottable one-dimensional transforms, rather than an ensemble over a multi-column feature library. The companion study quantifies what (i) is worth under controlled synthetic mechanisms \citep{instrument2026}; Section~\ref{sec:real} measures it on production data.

\paragraph{Stability by construction.}
The EWMA component is bounded for $\alpha\in(0,1)$ (for non-negative $g_1$ and $e_0=0$, the geometric sum gives $e_T \le \max_t g_1(x_t)/(1-\alpha)$), while the HWM component is bounded by the maximum transformed exposure. This mitigates numerical instability that can occur in naive monotone RNNs without explicit decay.

\subsection{Vector-valued exposure inputs}
\label{sec:vector}

The base model consumes one scalar exposure per step, but real portfolios carry several exposure channels at once, and the governance constraint extends coordinatewise: for a multivariate exposure path with $x_t \in \mathbb{R}^{C}_{\ge 0}$, the score must be monotone non-decreasing in \emph{every} coordinate $x_{t,i}$, under the order $x_{1:T} \preceq x'_{1:T} \iff x_{t,i} \le x'_{t,i}$ for all $t, i$. The recurrences of Section~\ref{sec:architecture} are unchanged; only the arity of the increment transforms generalizes.

\paragraph{Admissible multivariate increments.}
The admissible increment class under vector input is the non-negative-weight monotone MLP $g:\mathbb{R}^{C}_{\ge 0}\to\mathbb{R}^{J}_{\ge 0}$---non-negative weight matrices, monotone non-decreasing activations, and a final softplus for non-negativity---so that each output coordinate $g_j$ is an isotone map of the full input vector and drives one channel's scalar recurrence exactly as $g(x_t)$ does in the scalar model. The resulting states are temporal aggregates of \emph{learned cross-channel interactions}, which is what distinguishes the vector model from a bank of per-channel scalar models. The parametric scalar transforms (identity, log, hinge, power) are not admissible increments here: applied coordinatewise they produce a vector where the recurrence needs a scalar.

\paragraph{Permitted bases.}
Under vector input the state may consist only of EWMA, HWM, and \emph{low-water-mark} (LWM) channels, the last defined by $\ell_t = \min\big(\ell_{t-1},\, g_j(x_t)\big)$ and monotone by the same argument as the HWM with $\min$ in place of $\max$; the LWM is equally admissible in the scalar model (Lemma~\ref{lem:lwm}, Appendix~\ref{app:bases}). The remaining scalar bases of the admissible library (cumulative sums, breach counts, cumulative excesses, run lengths; see \citealp{instrument2026}) consume the raw step value or threshold comparisons on it, which are not defined on $\mathbb{R}^{C}$; they are excluded.

\paragraph{Sign orientation.}
Coordinatewise monotonicity presumes every input channel is oriented so that larger means riskier. A channel whose empirical risk direction is decreasing enters through the fixed order-reversing map $x' = \max(0,\, c - x)$, where $c$ is a constant (an upper quantile of the observed values) frozen on development data and applied as preprocessing outside the model. The model is then monotone non-decreasing in $x'$ and hence non-increasing in the raw channel---the prespecified economic direction; because $c$ is a dataset statistic rather than a per-example quantity, no dependence on the protected path is introduced.

\begin{proposition}[Vector-input monotonicity]
\label{prop:vector}
Let $g^{(1)}, g^{(2)}:\mathbb{R}^{C}_{\ge 0}\to\mathbb{R}^{J}_{\ge 0}$ be non-negative-weight monotone MLPs, let the state consist of EWMA, HWM, and/or LWM channels each driven by one output coordinate of $g^{(1)}$ or $g^{(2)}$, and let the head $F$ be monotone non-decreasing in every channel. Then the score is monotone non-decreasing in every coordinate $x_{t,i}$ of the exposure path.
\end{proposition}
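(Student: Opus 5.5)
The plan is to reduce the vector case to the scalar argument of Lemmas~\ref{lem:ewma}--\ref{lem:hwm} and Theorem~\ref{thm:score}, by isolating the one new ingredient: each output coordinate of a non-negative-weight monotone MLP is an isotone map $\mathbb{R}^{C}_{\ge 0}\to\mathbb{R}$ under the coordinatewise order. Once that is established, each state channel is a scalar recurrence driven by an isotone scalar increment sequence. The existing inductions then apply with the scalar $x_t$ replaced by $g_j(x_t)$.

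\textbf{Step 1 (isotone increments).} Fix $u\preceq u'$ in $\mathbb{R}^{C}_{\ge 0}$. I would show layer by layer that the hidden pre-activations and activations remain ordered.
\begin{itemize}
\item For an affine layer $z=Wu+b$ with $W\ge 0$ elementwise, $z'-z=W(u'-u)\ge 0$ coordinatewise, since a non-negative matrix times a non-negative vector is non-negative. The shared bias cancels.
\item Elementwise monotone non-decreasing activations (tanh, and the final softplus) preserve each coordinate inequality.
\item By induction over depth, every output coordinate satisfies $g_j(u)\le g_j(u')$.
\end{itemize}
This is the same observation stated after Theorem~\ref{thm:score}, now made explicit for vector inputs. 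Under MC dropout the masks rescale weights by non-negative constants, so the argument holds per sample.

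\textbf{Step 2 (channelwise state monotonicity).} Let $x_{1:T}\preceq x'_{1:T}$ in the coordinatewise order over all $t,i$. By Step 1, $a_t:=g_j(x_t)\le a'_t:=g_j(x'_t)$ for every $t$ and every channel's driving coordinate. I would then argue by induction on $t$ for each state channel:
\begin{itemize}
\item \emph{EWMA:} $e_t=\alpha e_{t-1}+a_t$ is non-decreasing in $e_{t-1}$ because $\alpha>0$, and in $a_t$. This is the argument of Lemma~\ref{lem:ewma}.
\item \emph{HWM:} $\max(m_{t-1},a_t)$ is non-decreasing in both arguments, as in Lemma~\ref{lem:hwm}.
\item \emph{LWM:} $\min(\ell_{t-1},a_t)$ is likewise non-decreasing in both arguments, as in Lemma~\ref{lem:lwm}.
\end{itemize}
For the base case, the initial states $e_0,m_0,\ell_0$ are fixed constants independent of the input, so equality holds. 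Both $g^{(1)}$ and $g^{(2)}$ are covered, since the argument only uses that the driving coordinate is isotone.

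\textbf{Step 3 (head).} The terminal state vector of $x$ is then coordinatewise below that of $x'$. Since $F$ is non-decreasing in every channel, a chain of single-coordinate changes gives $F(h_T(x))\le F(h_T(x'))$, exactly as in Theorem~\ref{thm:score}. Monotonicity in each individual coordinate $x_{t,i}$ is the special case where $x'$ differs from $x$ only in that entry.

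I expect no real obstacle. The only point needing care is Step 1, namely checking that unconstrained biases do no harm and that the output non-negativity (the softplus) does not interfere with order preservation. I would also note explicitly that the sign-orientation map is applied outside the model, so it plays no role in the proof.
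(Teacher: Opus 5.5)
Your proposal is correct and follows the paper's proof essentially step for step. The paper likewise isolates isotonicity of non-negative-weight MLPs on vectors as a separate lemma, then treats each channel as the scalar EWMA/HWM/LWM recurrence driven by $u^{(j)}_t = g_j(x_t)$ and composes with the head as in Theorem~\ref{thm:score}; the only cosmetic point is that your name $a_t$ for the increments collides with the paper's gate notation in Section~\ref{sec:conditioning}.
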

\begin{proof}
See Appendix~\ref{app:bases}.
\end{proof}

\subsection{Exogenous macroeconomic conditioning}
\label{sec:conditioning}

The base recurrence applies one fixed decay rate to the entire history and one fixed increment scale to every step. Under a macroeconomic regime shift both choices are wrong in a specific way: how long exposure history stays informative, and how much a given exposure counts when it occurs, are properties of the regime, not of the borrower. This subsection conditions the recurrence on an \emph{exogenous context} $z_t$---calendar time, position, or published macroeconomic series---that is never a function of the protected exposure path $x_{1:T}$. We prove that the per-input guarantee survives, for \emph{any shape} of the conditioning, and we show that exogeneity is the load-bearing assumption rather than a convenience. Main-text proofs are one-line sketches; full proofs are in Appendix~\ref{app:cond-proofs}.

\paragraph{Gated decay.}
A \emph{gated EWMA} replaces the constant decay with a per-step gate:
\begin{equation}
\label{eq:gated-ewma}
  e_t = a_t\, e_{t-1} + g_1(x_t), \qquad a_t = a(x_t,\ t,\ z_t) \in (0,1), \qquad e_0 \ge 0.
\end{equation}
The gate is admissible under two conditions: \textbf{(D1)} range, $0 < a_t < 1$ (e.g., $a_t = \sigma(\cdot)$), which keeps the channel contractive and bounded; and \textbf{(D2)} protected-input monotonicity, $a_t$ non-decreasing in $x_t$ and seeing $x_t$ only at its own step, while its dependence on the exogenous arguments $(t, z_t)$ is unconstrained in shape. Economically, the gate spans three mechanisms in one construction: exposure-gated persistence (``running up balances makes history stickier''), position-dependent decay, and macro-conditioned decay (``in a downturn, old exposure stays informative longer''). The plain EWMA is the constant-gate member $a_t \equiv \alpha$.

\begin{lemma}[Gated EWMA is monotone in the exposure path]
\label{lem:gated-ewma}
Under (D1)--(D2), with $g_1$ monotone non-decreasing and non-negative, the terminal state $e_T$ of Equation~\eqref{eq:gated-ewma} is non-decreasing in every $x_k$.
\end{lemma}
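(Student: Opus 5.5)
We prove the claim by induction on $t$. The hypothesis is that $e_t$ is non-decreasing in each $x_k$ with $k\le t$, and is non-negative. Using the same comparison scheme as Lemma~\ref{lem:ewma}, we fix two paths $x_{1:T}\preceq x'_{1:T}$ and show $0\le e_t(x)\le e_t(x')$ for all $t$. Evaluated at $t=T$, this comparison is monotonicity of $e_T$ in every $x_k$. The context $z_t$ and the position $t$ are identical on both sides, because they are exogenous and never functions of the exposure path. So $a_t$ changes between the two paths only through its own-step argument $x_t$.

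\textbf{Base case.} With $e_0\ge 0$ shared by both paths, the claim holds trivially at $t=0$.

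\textbf{Non-negativity.} If $e_{t-1}\ge 0$, then $e_t=a_te_{t-1}+g_1(x_t)$ is non-negative, since $a_t>0$ by (D1) and $g_1\ge 0$. This must be carried along the induction, because the next step multiplies $e_{t-1}$ by a gate that itself moves with the input.

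\textbf{Inductive step.} We chain through an intermediate term, writing $a_t=a(x_t,t,z_t)$ and $a'_t=a(x'_t,t,z_t)$:
\[
e_t(x)=a_t e_{t-1}(x)+g_1(x_t)\;\le\; a_t e_{t-1}(x')+g_1(x_t)\;\le\; a'_t e_{t-1}(x')+g_1(x'_t)=e_t(x').
\]
The first inequality uses the inductive hypothesis $e_{t-1}(x)\le e_{t-1}(x')$ together with $a_t>0$. The second uses three facts:
\begin{itemize}
\item $a_t\le a'_t$, by (D2), since $x_t\le x'_t$ and the gate sees the exposure only at its own step;
\item $e_{t-1}(x')\ge 0$, by the non-negativity part of the hypothesis;
\item $g_1(x_t)\le g_1(x'_t)$, by monotonicity of $g_1$.
\end{itemize}

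\textbf{Main obstacle.} The delicate point is the product term $a_te_{t-1}$, where both factors vary with the path. Monotonicity of a product of two non-decreasing factors requires both factors to be non-negative. That is exactly why the lemma assumes $g_1\ge 0$ and $e_0\ge 0$, and why (D2) forbids the gate from seeing past exposures. Suppose $a_t$ depended on $x_{t-1}$. Then the inductive hypothesis would need to control $a_t$ as well, and a gate decreasing in past inputs could break the argument.

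\textbf{Upper bound and exogeneity.} The upper bound $a_t<1$ in (D1) is not needed for monotonicity; it only gives boundedness. We would add a remark that exogeneity of $z_t$ is what lets us treat the $(t,z_t)$ arguments as fixed constants across the two paths, so no shape restriction on that dependence is required.
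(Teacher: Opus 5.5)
Your proof is correct, but it takes a different route from the paper.

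\textbf{The paper's route.} The paper unrolls the recurrence into the closed form $e_T=\bigl(\prod_{r=1}^T a_r\bigr)e_0+\sum_{s=1}^T\bigl(\prod_{r=s+1}^T a_r\bigr)g_1(x_s)$. It then isolates one coordinate by writing $e_T=a_k\,C_k e_{k-1}+C_k\,g_1(x_k)+R$, where $C_k=\prod_{r=k+1}^T a_r$ and $R$ does not involve $x_k$. From this it reads off $\partial e_T/\partial x_k=C_k\bigl[e_{k-1}\,\partial a_k/\partial x_k+g_1'(x_k)\bigr]\ge 0$, and adds a separate argument for non-differentiable gates.

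\textbf{Your route.} You run a pathwise comparison induction in the style of Lemma~\ref{lem:ewma}. You carry non-negativity of $e_t$ along the induction, so that the product $a_t e_{t-1}$ of two path-dependent factors can be bounded in two steps.

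\textbf{What each buys.} Your approach is more elementary. It needs no differentiability, so there is no separate ReLU case. It also proves order preservation under simultaneous perturbation of all coordinates directly, not one coordinate at a time. The paper's approach gives an explicit sensitivity formula. That formula shows exactly which term a decreasing gate would drive negative, which sets up the counterexample that follows the lemma. It also shows why a purely exogenous gate is shape-free: the gate term vanishes. The same factorization is reused in the severity lemma.

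\textbf{A correction to your ``main obstacle'' paragraph.} Your chain of inequalities only uses $a_t(x)\le a_t(x')$. That holds whenever the gate is non-decreasing in whatever protected inputs it reads, past or current. No inductive control of $a_t$ is needed, because $a_t$ is a direct function of the inputs. So the own-step clause of (D2) is not what your argument relies on. What matters is the direction of the dependence, not its timing. Your proof in fact covers non-decreasing dependence on past exposures as well. It is the paper's factorization that uses causal-currency, to ensure $x_k$ appears only in $a_k$ and $g_1(x_k)$.
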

\begin{proof}[Proof sketch]
Unrolling gives $e_T = \big(\prod_{r=1}^T a_r\big) e_0 + \sum_{s=1}^T \big(\prod_{r=s+1}^T a_r\big) g_1(x_s)$; the input $x_k$ appears only in the gate $a_k$ and the increment $g_1(x_k)$, each carrying a non-negative coefficient, and collecting terms yields $\partial e_T/\partial x_k = C_k\big[\, e_{k-1}\,\partial a_k/\partial x_k + g_1'(x_k)\,\big] \ge 0$ with $C_k = \prod_{r=k+1}^T a_r \ge 0$. Full proof in Appendix~\ref{app:cond-proofs}.
\end{proof}

Two remarks calibrate the conditions. First, \emph{exogenous gating is shape-free}: if the gate depends only on $(t, z_t)$, then $\partial a_k/\partial x_k = 0$ and monotonicity holds for any shape of $a(t, z_t)$---this is what licenses macro-conditioned decay with full functional freedom. Second, \emph{the direction condition (D2) is necessary}: a gate that decreases in $x_t$ (the naive ``forget faster when recent exposure is high'') breaks monotonicity, because raising a late input can shrink the retained weight on a large accumulated past by more than its own increment adds; Appendix~\ref{app:cond-proofs} gives a two-step counterexample.

\paragraph{Leaky peak memory and the gated leaky-max.}
The HWM's ``extremes are never forgotten'' reading can be relaxed to ``extremes fade with recency'' through the \emph{leaky-max}
\begin{equation}
  r_t = \max\big(\beta\, r_{t-1},\; g_2(x_t)\big), \quad 0 < \beta \le 1
  \quad\Longrightarrow\quad
  r_T = \max_t\, \beta^{T-t} g_2(x_t),
\end{equation}
the recency-discounted peak; the plain HWM is the $\beta = 1$ member. Replacing the constant $\beta$ with an exogenous gate $b(z_t) \in (0,1)$ gives the \emph{gated leaky-max}: in a downturn $b \to 1$ and peaks linger, in recovery they fade fast.

\begin{lemma}[Gated leaky-max is monotone]
\label{lem:gated-leakymax}
Let $r_t = \max\big(b(z_t)\, r_{t-1},\; g_2(x_t)\big)$ with $b(z_t) \in (0,1]$ exogenous (the pinned HWM is the $b \equiv 1$ member), $r_0 = 0$, and $g_2$ monotone non-decreasing and non-negative. Then $r_T = \max_{1 \le t \le T} \big(\prod_{s=t+1}^{T} b(z_s)\big)\, g_2(x_t)$, and $r_T$ is non-decreasing in every $x_k$, for any shape of $b$.
\end{lemma}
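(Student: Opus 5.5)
The plan has two parts: first establish the closed form by induction on $T$, then read monotonicity off it, with a direct inductive argument in the style of Lemma~\ref{lem:hwm} as a cross-check. Write $B_{t,T} := \prod_{s=t+1}^{T} b(z_s)$, with the empty product $B_{T,T}=1$. For the closed form, the base case $T=1$ gives $r_1 = \max(b(z_1)\cdot 0,\, g_2(x_1)) = g_2(x_1)$, because $g_2 \ge 0$. For the inductive step, assume $r_{T-1} = \max_{t\le T-1} B_{t,T-1}\, g_2(x_t)$. Multiplying by $b(z_T) > 0$ commutes with the maximum, since positive scaling is order-preserving. Hence $b(z_T) r_{T-1} = \max_{t \le T-1} B_{t,T}\, g_2(x_t)$, and taking the maximum with $g_2(x_T) = B_{T,T}\, g_2(x_T)$ yields the stated formula. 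The only facts used here are $b>0$ and $r_0 = 0 \le g_2(\cdot)$, so that the initial term never dominates.

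Monotonicity then follows from the closed form. Fix $k$ and suppose $x_k \le x'_k$ with all other inputs equal, or more generally $x_{1:T} \preceq x'_{1:T}$. Each term $B_{t,T}\, g_2(x_t)$ is a non-negative constant times a non-decreasing function of $x_t$. The coefficient $B_{t,T}$ depends only on the exogenous context $z$, which does not move when the exposure path changes. So each term is non-decreasing in its own input and constant in all others. A maximum of coordinatewise non-decreasing functions is non-decreasing, which gives the claim. Nothing in this argument constrains how $b$ depends on $z$, and that is the sense of ``any shape of $b$.''

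As an independent check that avoids the closed form, I would also argue by induction directly, mirroring Lemma~\ref{lem:hwm}. If $r_{t-1}(x) \le r_{t-1}(x')$, then $b(z_t) r_{t-1}(x) \le b(z_t) r_{t-1}(x')$, because $b(z_t) \ge 0$ and it is the same value for both paths. Also $g_2(x_t) \le g_2(x'_t)$, and $\max$ is non-decreasing in each argument, so $r_t(x) \le r_t(x')$.

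The only real obstacle is making exogeneity explicit. The step ``$b(z_t)$ is the same for $x$ and $x'$'' is exactly where the proof would fail if $b$ depended on the path. I would state this step carefully and remark that, unlike (D2) for the gated EWMA, no direction condition on $b$ is needed here, because $b$ never sees $x$. The $b \equiv 1$ case recovers the HWM with $B_{t,T}=1$.
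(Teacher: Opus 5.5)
Your proposal is correct and follows essentially the same route as the paper: unroll to the closed form by induction, using $r_0 = 0 \le g_2(x_1)$ so the initial branch never dominates, then read monotonicity off the closed form as a maximum of non-negative, exogenously fixed multiples of the non-decreasing $g_2(x_t)$. Your additional direct induction in the style of Lemma~\ref{lem:hwm} is a valid cross-check that the paper does not spell out, and its order-theoretic form avoids the paper's loose appeal to $\partial r_T/\partial x_k$ for a non-smooth maximum.
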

\begin{proof}[Proof sketch]
Each closed-form term is a non-negative multiple of a monotone non-decreasing $g_2(x_t)$, and a max of non-decreasing functions is non-decreasing; by exogeneity, raising $x_k$ moves no gate factor. Full proof in Appendix~\ref{app:cond-proofs}.
\end{proof}

\paragraph{The pinned HWM channel.}
At the permanence boundary the two recurrences behave oppositely, and this asymmetry dictates the bank structure used in Section~\ref{sec:real-fannie}. The leaky-max is \emph{bounded} at $\beta = 1$: $r_T = \max_t g_2(x_t)$, independent of $T$---an admissible, qualitatively distinct member that a sigmoid-learned $\beta < 1$ never attains. One channel is therefore \emph{pinned} at $\beta = 1$ (fixed, non-trainable), so the never-forgotten peak is guaranteed by construction rather than learned. The EWMA admits no such anchor: at $\alpha = 1$ it becomes the unbounded cumulative sum and violates the contraction the channel exists to provide, so the EWMA bank consists solely of learned rates interior to $(0,1)$, with no pin.

\paragraph{Economy-conditioned severity.}
The decay gate conditions how long a channel remembers; \emph{severity} conditions how much a step counts when it happens. Replace a channel's increment $g(x_t)$ with $c(z_t)\, g(x_t)$, where $c:\mathcal{Z}\to\mathbb{R}_{\ge 0}$ is a non-negative multiplier driven by the same exogenous context. Economically, a fixed dollar exposure is more severe in a downturn: $c(z_{\text{recession}}) \gg c(z_{\text{boom}})$ makes a \$1k spike in a bad month count like a \$10k spike in a good one, and a max channel freezes the occurrence-time severity into the stored peak.

\begin{lemma}[Exogenous severity is monotone]
\label{lem:severity}
With $g$ monotone non-decreasing and non-negative and $c(z_t) \ge 0$ exogenous, substituting the increment $u_t = c(z_t)\, g(x_t)$ into the (gated) EWMA or the (gated) leaky-max leaves the terminal state non-decreasing in every $x_k$, for any shape of $c$.
\end{lemma}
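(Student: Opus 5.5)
The plan is a reduction: Lemma~\ref{lem:severity} should follow from Lemmas~\ref{lem:gated-ewma} and~\ref{lem:gated-leakymax} once we check that the scaled increment $u_t = c(z_t)\,g(x_t)$ is itself an admissible increment at step $t$. Fix a realization of the exogenous context $z_{1:T}$. Because $z$ is never a function of the protected path $x_{1:T}$, the scalars $c(z_t)$, $a_t$ (when the gate depends only on $(t,z_t)$) and $b(z_t)$ are constants with respect to every $x_k$. We therefore define, for each step, the step-indexed transform $\tilde g_t(x) := c(z_t)\,g(x)$. This is a non-negative constant times a non-negative, non-decreasing function, so $\tilde g_t$ is non-negative and non-decreasing in $x$. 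This holds for any shape of $c$, since only $c(z_t)\ge 0$ is used.

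\textbf{EWMA case.} Unroll exactly as in the sketch of Lemma~\ref{lem:gated-ewma}:
\[
e_T = \Big(\prod_{r=1}^T a_r\Big) e_0 + \sum_{s=1}^T \Big(\prod_{r=s+1}^T a_r\Big)\, c(z_s)\, g(x_s).
\]
The lemmas were stated with a single time-invariant transform, so rather than cite them verbatim I would rerun their arguments with $\tilde g_t$ in place of $g_1$; the proofs never use time-invariance. Under the exogenous gate, $x_k$ enters only through the term $C_k\,c(z_k)\,g(x_k)$, whose coefficient $C_k c(z_k)$ is non-negative, so that term is non-decreasing in $x_k$. The gate may also satisfy (D2) with $x_k$-dependence. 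In that case I would use the derivative form $C_k[\,e_{k-1}\,\partial a_k/\partial x_k + c(z_k)\,g'(x_k)\,] \ge 0$, which needs $e_{k-1}\ge 0$. That holds because $e_0\ge 0$ and all increments $u_t\ge0$. If smoothness is not assumed, the order-preserving induction of Lemma~\ref{lem:ewma} works directly instead. Raising $x_k$ cannot decrease $e_k$: the term $a_k e_{k-1}$ is non-decreasing in $a_k$ (because $e_{k-1}\ge0$), and $u_k$ is non-decreasing. Later steps then propagate the order because each $a_t>0$ and each $u_t$ is fixed.

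\textbf{Leaky-max case.} Substitute into the closed form of Lemma~\ref{lem:gated-leakymax}:
\[
r_T=\max_t \Big(\prod_{s=t+1}^T b(z_s)\Big)\,c(z_t)\,g(x_t).
\]
Each term is a non-negative constant times a non-decreasing function of a single $x_t$, and a pointwise max of non-decreasing functions is non-decreasing. The pinned HWM is the case $b\equiv 1$. Validity of the closed form requires $r_0=0$ and non-negative increments, both of which hold.

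\textbf{Main obstacle.} The only delicate point is the $x$-dependent gate in the EWMA. There, non-negativity of the accumulated state is load-bearing, and multiplying by $c\ge0$ is exactly what preserves it. I would make this explicit, noting that a signed $c$ would break both the non-negativity of $e_{k-1}$ and the monotonicity of the increment. The remaining steps simply reuse the earlier lemmas with a step-indexed transform.
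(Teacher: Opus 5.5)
Your proof is correct and matches the paper's argument: $c(z_t)$ is a non-negative constant in every $x_k$, so $x_k$ enters the unrolled (gated) EWMA exactly where $g(x_k)$ did, scaled by $c(z_k)\ge 0$, and $c\ge 0$ also keeps the accumulant $e_{k-1}\ge 0$ for an $x$-dependent gate; the leaky-max case is then a max of non-negative, non-decreasing terms. The differences are cosmetic: you route the leaky-max through the closed form of Lemma~\ref{lem:gated-leakymax} where the paper applies the composition argument of Lemma~\ref{lem:hwm} step by step, and you add a derivative-free induction where the paper relies on the affine-combination remark in its gated-EWMA proof.
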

\begin{proof}[Proof sketch]
$c(z_k)$ is constant in $x_k$, so $x_k$ appears exactly where $g(x_k)$ did, scaled by a non-negative constant; non-negativity of $c$ is necessary because it keeps the increment, hence the accumulant, non-negative. Full proof in Appendix~\ref{app:cond-proofs}.
\end{proof}

\paragraph{Economy-conditioned threshold.}
The hinge increment may take an exogenous, regime-dependent threshold in place of a fixed constant: $g(x_t) = \max\big(0,\, x_t - \tau(z_t)\big)$ with $\tau:\mathcal{Z}\to\mathbb{R}_{\ge 0}$. This is a re-baselining lever, complementary to severity's re-scaling: in a downturn $\tau$ drops, so a moderate spike that was normal in good times now clears the bar and registers positive excess. Severity multiplies all exposure proportionally; threshold shifts the activation point and so bites small exposures disproportionately. The two compose.

\begin{lemma}[Exogenous threshold is monotone]
\label{lem:threshold}
$\max(0,\, x - \tau)$ has slope in $\{0, 1\}$, hence is non-decreasing in $x$ and non-negative for any $\tau$; with $\tau(z_t)$ exogenous, substituting this increment into either channel preserves monotonicity in every $x_k$, for any shape of $\tau$.
\end{lemma}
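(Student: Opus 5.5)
The plan is a reduction: show that the threshold hinge is itself an admissible increment, then quote Lemma~\ref{lem:severity}, Lemma~\ref{lem:gated-ewma} and Lemma~\ref{lem:gated-leakymax} for the rest. The only new work is a scalar fact about the hinge at a frozen threshold.

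\emph{Step 1: scalar properties of the hinge.} Fix any $\tau\in\mathbb{R}$. The map $h_\tau(x)=\max(0,x-\tau)$ is the pointwise maximum of the constant $0$ and the identity shifted by $-\tau$. Both pieces are non-decreasing in $x$, so their maximum is non-decreasing. It is non-negative because one argument is $0$. Its slope is $0$ for $x<\tau$ and $1$ for $x>\tau$, so it is Lipschitz with a one-sided derivative in $\{0,1\}$. None of this uses the sign or size of $\tau$.

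\emph{Step 2: freezing the exogenous context.} Fix the context path $z_{1:T}$. Since $z_t$ is never a function of $x_{1:T}$, the thresholds $\tau_t:=\tau(z_t)$ are constants with respect to the protected path. The increment at step $t$ is then $g_t(x_t)=h_{\tau_t}(x_t)$. By Step 1 this is a step-indexed family of monotone non-decreasing, non-negative scalar maps.

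\emph{Step 3: feeding the increment into the channels.} Lemma~\ref{lem:gated-ewma} and Lemma~\ref{lem:gated-leakymax} are stated for a single fixed $g$. Their closed-form unrollings, however, use only that each $g(x_s)$ is non-negative and non-decreasing in its own $x_s$, and that $x_k$ enters nothing else. For the gated EWMA the unrolling is
\[
e_T=\Big(\prod_{r=1}^T a_r\Big)e_0+\sum_{s=1}^T\Big(\prod_{r=s+1}^T a_r\Big)g_s(x_s).
\]
For the gated leaky-max it is $r_T=\max_t\big(\prod_{s=t+1}^T b(z_s)\big)g_t(x_t)$. In the leaky-max the gate factors are exogenous and $x_k$ enters only through $g_k(x_k)$, so raising $x_k$ raises that one non-negatively weighted term and leaves the others unchanged. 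In the EWMA the gate $a_k$ may also depend on $x_k$ under (D2). The derivative argument of Lemma~\ref{lem:gated-ewma} then gives $\partial e_T/\partial x_k = C_k\big[e_{k-1}\,\partial a_k/\partial x_k+g_k'(x_k)\big]\ge 0$, using $g_k'\in\{0,1\}$ almost everywhere. Alternatively, I would repeat that induction directly, which avoids derivatives at the kink $x=\tau$.

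If severity is also present, the increment becomes $c(z_t)h_{\tau_t}(x_t)$. Lemma~\ref{lem:severity} applies verbatim with $g_t$ in place of $g$, since it too only uses per-step non-negativity and monotonicity. Because the argument holds for every frozen $z_{1:T}$, it holds for any shape of $\tau$.

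\emph{Main obstacle.} The subtle point is that the earlier lemmas assume one time-invariant $g$, whereas here the transform varies with $t$ through $\tau(z_t)$. I would handle this by re-reading the inductive proofs and checking that each step only invokes properties of $g_t$ at that same step. The kink of the hinge is a second, minor concern for the derivative-form argument of Lemma~\ref{lem:gated-ewma}. The induction on $t$, using only order preservation as in Lemma~\ref{lem:ewma}, sidesteps it, so I would present that version.
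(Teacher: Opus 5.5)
Your proposal is correct and follows essentially the paper's route: the paper also states the scalar hinge facts at a fixed $\tau$, uses exogeneity to treat $\tau(z_t)$ as a constant in every $x_s$, and then defers to the composition argument in the proof of Lemma~\ref{lem:severity} for both channels. Your explicit check that the channel proofs tolerate a step-indexed increment $g_t$, and your choice of the order-preserving induction over the derivative form at the kink, are sound elaborations of points the paper leaves implicit.
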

\begin{proof}[Proof sketch]
Since $\tau(z_t)$ does not depend on $x_t$, the per-step derivative in $x_t$ is unaffected by its value; the constraint $\tau \ge 0$ is a semantic convention (keep the threshold in the non-negative exposure range), not a monotonicity requirement.
\end{proof}

\begin{theorem}[The conditioned bank is monotone]
\label{thm:cond-bank}
Let the model carry a bank of channels combining gated or constant-decay EWMA channels (Lemma~\ref{lem:gated-ewma}) and gated or constant leaky-max channels (Lemma~\ref{lem:gated-leakymax}), with increments optionally severity-scaled (Lemma~\ref{lem:severity}) and/or threshold-shaped (Lemma~\ref{lem:threshold}), all under the stated conditions. If the score head $F$ is monotone non-decreasing in all channels, then $s = F(\dots)$ is monotone non-decreasing in the entire exposure path.
\end{theorem}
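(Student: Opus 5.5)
The plan mirrors the proof of Theorem~\ref{thm:score}: first show that every channel's terminal state is non-decreasing in the exposure path, then push the resulting order through the head $F$.

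\textbf{Step 1: reduce the increment variants.} Every channel in the bank has increment $u_t = c(z_t)\,g(x_t)$, where $g$ is either a monotone non-decreasing, non-negative transform or the exogenous hinge $\max(0, x_t - \tau(z_t))$. The plain, unscaled case is $c \equiv 1$. By Lemma~\ref{lem:threshold}, the hinge is non-negative and non-decreasing in $x_t$ for each fixed exogenous value of $\tau(z_t)$. Multiplying by the exogenous factor $c(z_t) \ge 0$ preserves both properties. The key observation is that the context path $z_{1:T}$ does not depend on $x_{1:T}$, so it can be held fixed when comparing two exposure paths $x \preceq x'$. With $z$ fixed, each $u_t$ is a single non-negative, non-decreasing function of $x_t$ alone, so it satisfies the hypotheses that Lemmas~\ref{lem:gated-ewma} and~\ref{lem:gated-leakymax} place on $g_1$ and $g_2$. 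This is exactly the reasoning of Lemma~\ref{lem:severity}, extended to the composite severity-times-threshold increment.

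\textbf{Step 2: each channel is monotone.} For an EWMA channel, whether gated under (D1)--(D2) or with constant decay $a_t \equiv \alpha$, Lemma~\ref{lem:gated-ewma} applies with $g_1$ replaced by the composite increment, so $e_T$ is non-decreasing in every $x_k$. For a leaky-max channel, whether gated with exogenous $b(z_t)$, constant $\beta \in (0,1]$, or pinned at $b \equiv 1$, Lemma~\ref{lem:gated-leakymax} gives the closed form as a maximum of non-negative multiples of the increments, hence monotonicity. In the vector-input setting, each coordinate of the monotone MLP increment is isotone in the whole vector $x_t$, so the same argument runs coordinatewise, as in Proposition~\ref{prop:vector}.

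\textbf{Step 3: compose with the head.} Step 2 gives, for $x \preceq x'$, that the vector of terminal channel states satisfies $S(x) \preceq S(x')$ coordinatewise. Since $F$ is non-decreasing in each argument, changing one coordinate at a time along a chain yields $F(S(x)) \le F(S(x'))$.

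\textbf{Main obstacle.} The delicate point is that Lemma~\ref{lem:gated-ewma} is stated with $\partial a_k/\partial x_k$ and $g_1'$, which presumes differentiability, while hinge increments and MLP-driven gates may be only piecewise smooth. I would avoid derivatives and use an order induction instead. Assume $0 \le e_{t-1}(x) \le e_{t-1}(x')$ and $x_t \le x'_t$. Then
\[
a(x_t,t,z_t)\,e_{t-1}(x) \;\le\; a(x'_t,t,z_t)\,e_{t-1}(x),
\]
by (D2) and non-negativity of $e_{t-1}(x)$, and
\[
a(x'_t,t,z_t)\,e_{t-1}(x) \;\le\; a(x'_t,t,z_t)\,e_{t-1}(x'),
\]
by $a_t > 0$. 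Adding the increment inequality $u_t(x_t) \le u_t(x'_t)$ completes the induction step. Non-negativity of $e_{t-1}$ comes from $e_0 \ge 0$ together with the non-negative increments, and it is precisely here that $c \ge 0$ is needed. The leaky-max case is handled by the same induction, using that $\max$ is monotone in each argument and $b \ge 0$.
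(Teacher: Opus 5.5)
Your proposal is correct and follows the paper's own proof: each channel is shown monotone by the four channel lemmas (Lemmas~\ref{lem:gated-ewma}--\ref{lem:threshold}), with severity and threshold increments absorbed into the increment, and the order is then pushed through the monotone head $F$ exactly as in Theorem~\ref{thm:score}. Your derivative-free order induction for the EWMA step is valid, but it re-proves Lemma~\ref{lem:gated-ewma} rather than repairing it: the paper's proof of that lemma already covers the non-smooth case by writing $e_T = a_k P + C_k\, g_1(x_k) + R$ with $P, C_k \ge 0$ independent of $x_k$.
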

\begin{proof}[Proof sketch]
Each channel is monotone in every input by the four lemmas; monotone composition through $F$ as in Theorem~\ref{thm:score}.
\end{proof}

\paragraph{Exogeneity is essential.}
If $c$, $\tau$, or $b$ depended on a protected input, the gated-decay failure family reappears: a severity \emph{decreasing} in $x_t$, a threshold \emph{increasing} in $x_t$, or a leaky-max gate \emph{decreasing} in $x_t$ can each make the relevant derivative negative. The shape-freedom of Lemmas~\ref{lem:severity}--\ref{lem:threshold} and~\ref{lem:gated-leakymax} holds precisely because $z_t$ is exogenous and the score is never differentiated through it. This is the answer to the natural governance question---``how can arbitrary macro dependence be safe inside a guaranteed score?''---the guarantee protects the exposure inputs, and the macro series is not one of them.

\paragraph{Boundary with inadmissible attention.}
Gating is \emph{not} data-dependent attention, and the difference is exactly why one is admissible and the other is not. A gate multiplies a single channel's own non-negative accumulant $e_{t-1}$; it never forms a convex combination ($\sum_j w_j = 1$) across channels, so it has no sign-indefinite reshuffle term. Attention's derivative contribution $\sum_j (\partial w_j/\partial x_t)\, S_j$ is sign-indefinite precisely because $\sum_j \partial w_j/\partial x_t = 0$: raising $x_t$ moves a fixed weight budget from one channel onto another. The gate's per-step contribution $e_{t-1}\, (\partial a_t/\partial x_t) \ge 0$ carries no such sum-to-zero constraint---it raises or holds the persistence of a channel's own past without reallocating anything across channels.

\paragraph{Decay over missing periods.}
With period-anchored sampling, a missing interior period is an unobserved step. Such a step may either freeze the state ($e_t = e_{t-1}$) or decay it for the elapsed period ($e_t = a_t\, e_{t-1}$, the gated update with the increment dropped; $r_t = b(z_t)\, r_{t-1}$ for the leaky-max). Decay is the calendar-faithful choice---it lets recency fade across a gap rather than treating the gap as if no time passed---and it preserves monotonicity: the gap update introduces no dependence on the protected exposure, and dropping a non-negative increment only removes a non-negative term. A pinned $\beta = 1$ channel is unchanged by the gap, preserving the never-forgotten peak. This choice matters empirically: the short-history population is where the conditioned bank posts its largest gains (Section~\ref{sec:real-fannie}).

\paragraph{The deployed bank.}
The Fannie Mae and Freddie Mac studies (Section~\ref{sec:real-fannie}) instantiate Theorem~\ref{thm:cond-bank} as a per-channel \emph{macro-conditioned multihead bank}. For each oriented exposure channel, the bank comprises a severity sub-bank ($c(z)$ scaling the increments), a decay-gate sub-bank ($a(z)$ replacing the constant $\alpha$), a threshold sub-bank ($\tau(z)$ hinge increments), a macro-gated leaky-max ($b(z)$), and the pinned $\beta = 1$ HWM channel, combined by a monotone head into one per-channel score. In the released configuration, each of the three conditioned EWMA sub-banks carries four heads, alongside a plain four-head static-decay EWMA sub-bank, four macro-gated leaky-max channels, and the pinned HWM---twenty-one states per channel, combined by a hierarchical monotone head ($[16,8]$ hidden units, tanh) over monotone-MLP increments ($[16,8]$, tanh) on log-transformed exposure. Each conditioner is a zero-initialized affine map of the exogenous context alone: $c(z) = \operatorname{softplus}(c_0 + w_c^\top z)$, $a(z) = \sigma(a_0 + w_a^\top z)$ anchored on the learned static decay logits, $\tau(z) = \operatorname{softplus}(\tau_0 + w_\tau^\top z)$, and $b(z) = \sigma(b_0 + w_b^\top z)$. The zero initialization makes a fresh conditioned bank coincide exactly with its static anchor ($c = 1$, $\tau \approx 0$, decays at their learned constants), so the conditioning departs from the static family only as training learns it. The context $z$ holds two published series per loan-month (the national unemployment rate and a lagged home-price-index log-change; state-level variants in the geography test). The \emph{static bank} used as a control is the same family with all conditioning removed---four static-decay EWMA heads, four learned leaky-max heads, and the pinned HWM, sharing increments, head, input transform, and gap handling, with no $z$; at any fixed $z$ each conditioned sub-bank collapses to a member of this static class, so the conditioning contrast measures the bank's response to the regime rather than the presence of a bank.

\section{Empirical Studies}
\label{sec:real}

This section reports what the guarantee and the learned aggregation are worth on production-scale credit data. Sections~\ref{sec:real-uplift}--\ref{sec:real-hc} evaluate the vector-input extension of Section~\ref{sec:vector} on two public credit datasets (plus an exploratory single-channel probe on a third; Online Supplement \S S2), with leakage-safe out-of-fold features and paired-bootstrap inference throughout. Two pre-registered extensions then probe the weakest axes of what those studies find on independent datasets: a severe, genuine vintage shift on Fannie Mae loan performance (Section~\ref{sec:real-fannie}), where the conditioning extension of Section~\ref{sec:conditioning} enters, is replicated on Freddie Mac, and is scope-tested on the pandemic band; and a replication of the strictness ladder at thirteen-channel breadth on the 2024 Home Credit Stability competition, scored by its official temporal-stability metric (Section~\ref{sec:real-hcs}). Section~\ref{sec:real-synthesis} assembles the verdicts.

\subsection{Setup}
\label{sec:real-setup}

\paragraph{Datasets, regimes, and metric.}
The primary dataset is the American Express default-prediction panel: $458{,}913$ customers, up to $13$ monthly statements each, a standard $188$-column engineered feature panel, and the official competition metric $M = \tfrac{1}{2}(\text{normalized weighted Gini} + \text{top-}4\%\text{ default capture})$ with negatives downweighted $20{\times}$. We evaluate on three regimes: \textsc{iid} (random held-out customers), \textsc{time} (a short-history cohort, customers with fewer than the full $13$ statements---a deployment-onboarding shift), and \textsc{peak} (the top $30\%$ of a composite high-water severity index---a stress-cohort shift). The replication dataset is Home Credit: a credit-card panel ($86{,}905$ clients, $8$ monthly balance-activity channels) and an installments-payments panel ($291{,}643$ clients, $2$ raw timing channels: days late and payment shortfall), evaluated by AUC on random and shortest-history-cohort splits.

\paragraph{The learned features.}
Fifteen AMEX delinquency- and risk-category channels are first \emph{oriented} by the dev-frozen antitone map of Section~\ref{sec:vector} (two channels flip on the \textsc{iid} and \textsc{time} development pools, four on \textsc{peak}; zero of eight on Home Credit). Orientation is a development-pool statistic, not a runtime choice: each regime's development pool fixes the map ex ante, exactly as a production system would freeze it on its own development data (\textsc{peak} under the \textsc{iid} orientation is not run). A single vector-input model consumes all fifteen oriented trajectories through monotone-MLP increments ($[32,16]$, tanh) and maintains one EWMA, one HWM, and one LWM state; the three terminal states enter downstream models as three features (``joint''). A separate bank of per-channel scalar models supplies $45$ per-channel states (three per channel: the terminal EWMA, HWM, and LWM states of a scalar model fit to that channel alone) where noted. All recurrent features are cross-fit out-of-fold: each customer's features come from a model whose training folds excluded that customer, and test-cohort features from a dev-pool refit. Downstream consumers are monotone-constrained XGBoost models (all features signed $+1$), so every pipeline built on the learned states and level aggregates is monotone end-to-end by Proposition~\ref{prop:vector} and the composition argument of Theorem~\ref{thm:score}; pipelines containing the supervised summary scalars inherit one disclosed exception, the sign-free volatility aggregate inside each scalar (see \emph{Baselines} below). Per-coordinate bump tests on every deployed vector-input model (nine AMEX refits across regimes and RNN seeds, two Home Credit joints) record a violation rate of exactly zero, consistent with Proposition~\ref{prop:vector}.

\paragraph{Baselines.}
The incumbent to beat is \textsc{sup15}: for each of the fifteen channels, an out-of-fold monotone-XGBoost scalar trained on that channel's five hand aggregates (last, mean, min, max, std)---supervised compression of hand-crafted aggregates, the strongest summary baseline we found and the real-data analog of the engineered feature grids studied synthetically in the companion paper \citep{instrument2026}. One convention inside these scalars is disclosed rather than hidden: the four level aggregates are constrained $+1$, but the volatility aggregate is deliberately left sign-free, because volatility is informative yet not monotone in the exposure path. Strictly, therefore, a pipeline containing a summary scalar is not provably monotone end-to-end---an upstream violation cannot be repaired downstream (Section~\ref{sec:rw-governance})---and the same convention builds the mortgage and stability summaries of Sections~\ref{sec:real-fannie} and~\ref{sec:real-hcs}. The convention strengthens the baseline, and a sensitivity run over the same confirmatory seed grid as the primary contrast prices it: the volatility freedom is worth $+0.0093$ to $+0.0100$ $M$ to the champion on \textsc{iid} and $+0.0109$ to $+0.0114$ on \textsc{peak}, CI-clear at every one of the six seed points, and against the fully admissible four-aggregate champion (last, mean, min, max, all $+1$) the joint-feature uplift widens to $+0.0088$ to $+0.0110$ on \textsc{iid} and $+0.0061$ to $+0.0116$ on \textsc{peak}, likewise CI-clear at every seed point; on \textsc{time} both contrasts straddle zero at every seed point, consistent with that column's instability. Quoting the primary contrast against the stronger sign-free champion throughout is thus the conservative choice. A second convention is priced rather than assumed: the champion \emph{compresses} each channel's aggregates into one scalar, and a compression control in Section~\ref{sec:real-strictness} hands the downstream the same aggregates uncompressed---a stronger frame whose own contrast bounds what compression could account for. A hand-interaction control (\textsc{product aggregates}) applies the same window aggregates to all $105$ pairwise product series of the oriented channels ($420$ columns), testing whether hand-built cross-channel interactions replicate the learned ones. The saturation reference is the full $188$-column engineered panel, consumed unconstrained.

\paragraph{Protocol discipline.}
The primary contrast---\textsc{sup15}${}+{}$joint versus \textsc{sup15}, $18$ versus $15$ features, both fully signed---was identified on an initial seed and then confirmed on independent seeds along both axes: the confirmatory grid of Table~\ref{tab:real-amex-uplift} (three RNN seeds at the anchor XGBoost seed, and three further XGBoost seeds at the anchor RNN seed) was fixed before any of its fits ran. No analysis in this paper was externally registered; \emph{prespecified} and \emph{pre-registered} throughout mean fixed internally---committed to a version-controlled design file---before the corresponding fits ran. Contrasts marked exploratory below were selected after seeing results. One conservative asymmetry is documented rather than repaired: the out-of-fold scalars are cross-fit on \textsc{iid} folds regardless of evaluation regime, which under the shift regimes mildly flatters the \emph{baseline} (its aggregates are cheaper to re-fit than the recurrent features), so the reported uplifts are, if anything, understated.

\paragraph{The inference standard.}
Throughout the paper, a contrast is \emph{CI-clear} when its paired-bootstrap $95\%$ confidence interval excludes zero, and the primary decision criterion is that a contrast is CI-clear at every seed (or seed pair) in a prespecified grid. This criterion is a \emph{seed-robustness standard}, not a familywise-error-controlled joint test. Each interval controls its own coverage; requiring all of them to clear zero guards against seed-specific artifacts, but it does not bound the joint type-I error of the family of comparisons, and the implied joint test is conservative in some directions and liberal in others. For the headline mortgage contrasts, we complement the per-seed standard with the formal alternative: a normal--normal random-effects model over the five-RNN-seed grid (DerSimonian--Laird $\hat\tau^2$ with within-seed standard errors recovered from the bootstrap intervals, and modified Knapp--Hartung intervals on $t_4$ quantiles, appropriate at five seeds), reported in Table~\ref{tab:seed-meta}. Its pooled interval carries one caveat in the other direction: the per-seed contrasts share test rows, so the shared sampling component does not average away across seeds and the pooled interval is anti-conservative on it; the between-seed standard deviation $\hat\tau$ is the quantity the per-seed standard cannot report, and $\hat\tau = 0$ rows should be read as no \emph{detectable} seed heterogeneity at $k = 5$, not none. Where the text says a result ``clears zero at every seed,'' it should be read as robustness evidence under this standard, and the pre-registration audit trail (Appendix~\ref{app:prereg-audit}) records which contrasts were confirmatory under it.

\subsection{The admissible-frame uplift}
\label{sec:real-uplift}

Table~\ref{tab:real-amex-uplift} reports the primary contrast across three RNN seeds and four XGBoost seeds. Adding the three joint features to the fifteen-summary champion is worth ${+}0.0043$--$0.0051\,M$ on \textsc{iid} and ${+}0.0026$--$0.0047\,M$ on \textsc{peak}, with every $95\%$ confidence interval excluding zero across both seed axes---a robust, confirmatory result. On \textsc{time} the point estimates are positive at every seed, but one RNN reseed collapses the effect to a statistical zero ($+0.0009$), and at higher bootstrap resolution ($B = 10{,}000$; Online Supplement \S S4) a second cell's lower bound crosses zero, so we report the \textsc{time} column as marginal and unstable under reseeding, and claim nothing for it.

Two controls locate what the three features contain. First, the hand-interaction control fails to replicate them: the $420$ product-aggregate columns never significantly beat \textsc{sup15} (Table~\ref{tab:real-amex-controls}, third row), while the learned block does---so the uplift is not ``interactions, available by hand''; it is learned temporal aggregation of cross-channel structure that the windowed products miss. Second, in exploratory contrasts (selected post hoc; Table~\ref{tab:real-amex-controls}, first two rows) the wider $48$-feature block beats \textsc{sup15} on \textsc{iid} and \textsc{peak} and beats the product-aggregate control outright on \textsc{peak}, but loses badly on \textsc{time}---the per-channel states, not the joint features, are the drag, which is why the confirmatory contrast isolates the three joint features.

\begin{table}[t]
\centering
\caption{AMEX primary contrast: paired $\Delta M$ from adding the three joint cross-channel features to the fifteen-summary champion (\textsc{sup15}${}+{}$joint $-$ \textsc{sup15}, $18$ vs.\ $15$ features, both fully signed), across three RNN seeds (at XGB seed $0$) and three further XGBoost seeds (at RNN seed $0$), with paired-bootstrap 95\% CIs. \textsc{time} is reported, not claimed (Section~\ref{sec:real-uplift}).}
\label{tab:real-amex-uplift}
\footnotesize
\setlength{\tabcolsep}{4pt}
\adjustbox{max width=\linewidth}{\input{tables/real_amex_uplift}}
\end{table}

\begin{table}[t]
\centering
\caption{AMEX controls (exploratory; the confirmatory contrast is Table~\ref{tab:real-amex-uplift}). The $48$-feature learned block ($45$ per-channel states plus $3$ joint features) against the supervised summaries and against the $420$-column hand product-aggregate control, and the control against the summaries.}
\label{tab:real-amex-controls}
\footnotesize
\setlength{\tabcolsep}{4pt}
\adjustbox{max width=\linewidth}{\input{tables/real_amex_controls}}
\end{table}

\subsection{No uplift on the unconstrained panel}
\label{sec:real-panel}

We state the negative result plainly before interpreting it. Added to the unconstrained $188$-column panel---the frame a production incumbent would actually use---the recurrent features contribute nothing significant: all $48$ features move $M$ by $+0.0011$ (\textsc{iid}), $+0.0020$ (\textsc{time}), and $-0.0012$ (\textsc{peak}), every interval straddling zero (Table~\ref{tab:real-amex-panel}). On \textsc{time}, the one regime the panel does not saturate, the three joint features alone do clear zero ($+0.0032$)---but so does the \textsc{sup15} block ($+0.0048$), and given the summaries the joint features add exactly nothing ($+0.0000$): on the free panel the learned features are a \emph{substitute} for supervised hand aggregates, not a complement. Saturation is a property of the frame, not a failure specific to the recurrence---even the hand-crafted champion block adds nothing significant to the free panel on \textsc{iid} and \textsc{peak}. The free frame's accuracy is not free of admissibility, however: under a matched per-coordinate bump protocol, the unconstrained panel model moves its score \emph{against} the development-frozen monotone direction on $9.1$--$15.2\%$ of upward perturbations (evaluated on both the $28$-column confident and $87$-column broad sign sets, across all three regimes), where every sign-constrained arm records exactly zero. The saturation the free panel offers is bought by violating roughly a tenth of the monotone prior its own development data supports.

\begin{table}[t]
\centering
\caption{AMEX full-panel honesty check: paired $\Delta M$ from adding blocks to the unconstrained $188$-column panel (Section~\ref{sec:real-panel}).}
\label{tab:real-amex-panel}
\footnotesize
\setlength{\tabcolsep}{4pt}
\adjustbox{max width=\linewidth}{\input{tables/real_amex_panel}}
\end{table}

\subsection{The strictness ladder}
\label{sec:real-strictness}

The two preceding results---a robust uplift at matched admissibility, none on the free panel---are reconciled by a single experiment. We evaluate the same three joint features in five frames of decreasing freedom: the \emph{free} $188$-column panel; \emph{confident} ($28$ columns signed where the empirical monotone direction is unambiguous, the rest free); \emph{broad} ($87$ columns signed); \emph{strict} (only the $87$ signable columns retained, all signed); and \emph{summaries} (the fifteen supervised summaries, all signed). The summaries frame is not a straw construction: it is the sub-score and trended-attribute shape in which temporal information already ships to production scorecards (Section~\ref{sec:rw-governance}), and the strict frame is the panel analog in which every retained column carries a signed commitment.

The result is a ladder, and the ladder has a price attached. Figure~\ref{fig:strictness} shows the result; Supplement Table~S1 reports the numbers. On the two claimed regimes the uplift from the three learned features rises with strictness: statistically zero across the free, confident, and broad frames, boundary in the strict frame (intervals graze zero; we do not claim these cells), and CI-clear in the summaries frame. On \textsc{time} the ladder is flatter: every frame's cell is positive at the anchor seed and even the free-panel cell clears zero (Section~\ref{sec:real-panel}), but the column remains unclaimed for its reseeding instability (Section~\ref{sec:real-uplift}). The pattern holds for the hand-crafted block too: the \textsc{sup15} summaries, worthless on the free panel on \textsc{iid} and \textsc{peak}, re-become additive in the strict frame ($+0.0039$ on \textsc{iid}, interval clear of zero). The lower panel prices the frames themselves: strictness costs base accuracy (the strict frame gives up ${\sim}0.014\,M$ on \textsc{iid} relative to free, the summaries frame ${\sim}0.090\,M$), and it is exactly this surrendered freedom that the learned features partially buy back. The value of learned monotone recurrence is the price of the freedom the model is not allowed to have: where governance binds---every feature signed, or only auditable summaries admitted---the learned features carry signal nothing else in the frame supplies; where the model is free, a large enough engineered panel absorbs it. The ladder is not the mechanical headroom of a weaker baseline, under which any informative addition would clear zero against the weakest frame: the $420$-column hand-interaction control faces the same summaries frame and never clears (Table~\ref{tab:real-amex-controls}), and on the free panel the same three features are subsumed rather than weak (mechanism paragraph below)---what rises with strictness is the residual content of exactly these features.

\paragraph{The compression control: aggregates uncompressed.}
The summaries frame compresses each channel's five aggregates into one supervised scalar, so its uplift could in principle be recovery of information the sub-score bottleneck discards rather than learned temporal content. Handing the downstream the same aggregates uncompressed refutes that reading and inserts a measured rung between summaries and strict. Under the champion's own sign convention (level aggregates signed, volatility free), the $75$-column aggregate frame is a stronger incumbent than the compressed champion (base $M$ $0.7050$ on \textsc{iid}, ${+}0.0096$ to ${+}0.0121$ above \textsc{sup15} and ${+}0.0080$ to ${+}0.0091$ on \textsc{peak}, CI-clear at all six seed points of the confirmatory grid)---and the three joint features add nothing to it on \textsc{iid} (straddling at all six points). Dropping the fifteen sign-free volatility columns---the fully admissible frame, sixty columns all signed---restores the uplift at every seed point: ${+}0.0059$ to ${+}0.0086$ on \textsc{iid} and ${+}0.0047$ to ${+}0.0090$ on \textsc{peak}, roughly one-and-a-half to two times the summaries-frame contrast (CI-clear at all six points on \textsc{iid}; all six for the fully signed frame and five of six for the std-free frame on \textsc{peak}). The uplift is therefore not compression recovery---it survives, and grows, when the aggregates arrive raw at full admissibility---and the aggregate family reproduces the ladder in miniature: freeing fifteen volatility columns absorbs exactly what the three learned features carry. On \textsc{time} the joint features \emph{hurt} the std-free aggregate frame (CI-clear negative at five of six points) and straddle zero on the fully signed one, consistent with that column's disclosed instability. The practitioner readings are two-sided: a regime that tolerates the sign-free volatility convention should ship the uncompressed aggregates and expect nothing further from these features in distribution, while a fully signed pipeline gains more from the three features over the raw aggregates than over the compressed sub-scores.

\paragraph{Mechanism: subsumption, not weakness.}
Why are the learned features substitutes on a free panel and complements in a strict frame? The companion study answers constructively, by direct functional regression \citep{instrument2026}: a single learned monotone channel reproduces the entire geometrically weighted separable family of hand-crafted statistics---EWMAs at arbitrary decays, cumulative sums, breach counts, cumulative excesses---to rank equivalence (Spearman $\rho = 1.000$ on the separable family), with its boundary at consecutivity, time localization, and the exposure floor (the running minimum). A saturated engineered panel therefore already carries most of what the learned channel computes, and the free-frame zero is subsumption rather than absence of signal; a strict frame removes the redundant carriers, and the learned features become the only remaining source of that content.

The ladder's cross-dataset replication---at thirteen channels, under an official temporal-stability metric---is deferred to Section~\ref{sec:real-hcs}, because its temporal half belongs to the regime-shift story of Section~\ref{sec:real-fannie}.

\begin{figure}[t]
\centering
\includegraphics[width=\linewidth]{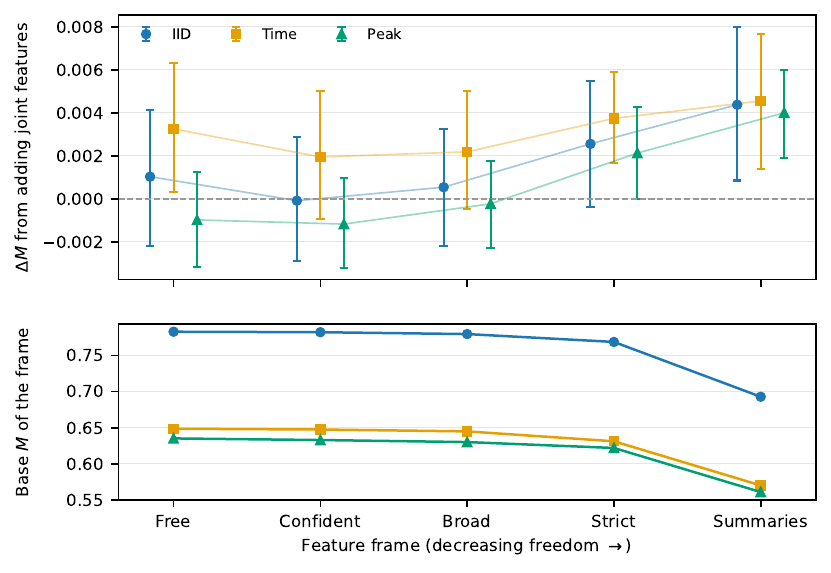}
\caption{The strictness ladder on AMEX. Top: paired $\Delta M$ (with $95\%$ CIs) from adding the three joint cross-channel features to each of five feature frames of decreasing freedom, per regime. Bottom: base $M$ of each frame. On the claimed regimes (\textsc{iid}, \textsc{peak}) the uplift rises as the frame surrenders accuracy to constraint; the unclaimed \textsc{time} column is positive in every frame (Supplement Table~S1).}
\label{fig:strictness}
\end{figure}

\subsection{Home Credit: the breadth boundary}
\label{sec:real-hc}

Home Credit replicates the protocol where its assumptions fail, and the failures are as informative as the AMEX success (Table~\ref{tab:real-hc}). On the $8$-channel card panel, hand aggregates of the raw channels are strongly additive over static covariates ($+0.023$ AUC), but per-channel recurrent features \emph{hurt} under the short-history shift ($-0.0031$ and $-0.0022$, intervals entirely negative)---short histories starve recency states. The joint cross-channel features are the best recurrent rung (significantly above the per-channel states under shift, $+0.0017$) yet only claw back to neutral. On the $2$-channel installments panel the verdict is sharper: hand aggregates of the two raw timing signals are unambiguously real ($+0.0094$ and $+0.0069$ over covariates), while every recurrent rung is a \emph{precisely measured zero}---intervals of half-width ${\sim}0.001$ or less around zero at $291{,}643$ clients. The null is not a power artifact; at $3.4{\times}$ the card sample it is the tightest estimate in the study. Two readings follow. The cross-channel uplift requires channel \emph{breadth}: fifteen interacting delinquency channels carry learnable cross-channel temporal structure; two nearly-redundant timing channels do not. And the hypothesis that AMEX's engineered features had merely ``lost'' recoverable raw-signal information is refuted---given supervised aggregates of the raw signal, learned recurrence adds nothing on these short, narrow panels either.

\begin{table}[t]
\centering
\caption{Home Credit replication (paired $\Delta$AUC with $95\%$ CIs). Card panel ($8$ channels): recurrent rungs versus the hand-aggregate baseline. Installments panel ($2$ raw timing channels): the same ladder at $291{,}643$ clients (Section~\ref{sec:real-hc}).}
\label{tab:real-hc}
\footnotesize
\setlength{\tabcolsep}{4pt}
\adjustbox{max width=\linewidth}{\input{tables/real_hc_ladders}}
\end{table}

An exploratory single-channel probe on the UCI Taiwan card panel (Online Supplement \S S2) bounds the boundary from below: at breadth one the learned-versus-aggregate direction is a property of the channel, not of the model class.

\subsection{Fannie Mae: a severe vintage shift}
\label{sec:real-fannie}

\subsubsection{Design and pre-registration}

The AMEX \textsc{time} regime is a history-length proxy for temporal shift, and its uplift was unstable under reseeding; Fannie Mae single-family loan performance data supplies the real thing. We snapshot every loan at seasoning month twelve and predict 90-plus-days delinquency within the following twenty-four months, from monotone monthly channels plus origination covariates (FICO, LTV, DTI, balance). The label is the standard fixed-horizon ever-90 definition, so a loan that terminates inside the window without reaching 90 days---prepayment included---counts as a non-default; the competing prepayment risk \citep{deng2000mortgage} is absorbed into the label rather than modeled, and its composition shifts with the refinancing environment across cohorts. Sharing the label across both arms makes every paired contrast well-defined, but does not by itself rule out a gain earned by tracking the non-default class's prepayment composition rather than credit risk---the macro series that condition the recurrence also drive refinancing---so Section~\ref{sec:fannie-macro} re-scores the headline contrast with early-prepaid loans excluded; absolute AUCs read against the fixed-horizon definition. A 1-in-10 loan subsample of the 2004Q1--2010Q4 acquisition quarters yields $1.04$M labelable loans (a seeded $300$k experiment subsample). The shift is as severe as deployment ever gets: the 24-month forward default rate runs $1.3\%$ for 2004 originations, peaks at $9.8\%$ for 2007, and collapses to $0.4\%$ by 2010.

Every design in this section was frozen by internal pre-registration before the corresponding fits ran: the base design, the breadth amendment, the macro-conditioning amendment, four follow-on studies, the Freddie Mac replication, and the pandemic-band episode. The freeze-commit audit trail is in Appendix~\ref{app:prereg-audit}; the amendment chronicle and its intermediate tables are in the Online Supplement (\S S3). Splits: \textsc{iid}; \textsc{vintage} (develop on originations through 2005, test on the 2007--2009 crisis cohort, 2006 excluded); secondary \textsc{short} (the $8.6\%$ of loans with incomplete report histories). Channels: three at the base design (delinquency depth, remaining-balance ratio, modification flag; oriented empirically per regime), widened to six by a pre-registered amendment after a label-free audit found the scheduled-versus-actual paydown axis to be the crisis-era archive's only other live content (a behind-schedule unpaid-principal-balance (UPB) gap against the level-payment annuity, its curtailment mirror, and the servicer's months-behind measure). Inference: paired-bootstrap CIs over an RNN-by-XGBoost seed grid, with a pre-specified five-seed ensembled downstream (unweighted probability mean, which preserves the guarantee) as the production protocol.

\subsubsection{Unconditioned recurrence does not survive the crisis shift}
\label{sec:fannie-uncond}

At three channels the pre-registered primary contrast---adding the three joint features to covariates plus supervised summaries---fails informatively. On \textsc{iid} it is zero-to-negative at every seed pair. On \textsc{vintage} it \emph{changes sign with the XGBoost seed}, with all four intervals individually clear of zero on opposite sides, because the shared baseline rung itself moves by $0.014$ AUC across downstream seeds under the crisis shift. Under a shift this severe, the paired-contrast methodology loses its footing: the quantity being contrasted is not seed-stable, so no per-seed interval licenses a direction. Seed-ensembling the downstream does not revive the claim---the sign still flips, now with the RNN seed. Neither does channel breadth: at six channels every cell sits within $\pm 0.002$ of zero on \textsc{iid} and \textsc{vintage}---the sign still seed-dependent inside that band, the magnitude no longer material (Table~\ref{tab:real-fannie-wide}).

What breadth does buy is the \emph{frame}. The six-channel supervised-summary baseline gains $+0.016$ AUC on the crisis cohort, and its cross-seed instability collapses from $0.014$ to $0.001$ AUC---channel breadth pays the governed frame first, consistent with the ladder. The one all-seeds-robust unconditioned gain is the \textsc{short} split ($+0.0039$ to $+0.0232$ across designs and seeds, CI-clear everywhere): on loans with gappy histories, masked recurrence aggregates what fixed windows cannot---the same population where per-channel recurrent features failed on Home Credit's card panel.

\subsubsection{Macro-conditioned recurrence clears the crisis bar}
\label{sec:fannie-macro}

The vintage shift is a macroeconomic regime change, and the exogenous conditioning of Section~\ref{sec:conditioning} is shape-free in exogenous context. A pre-registered amendment joins two public series per loan-month---the unemployment rate and the two-month-lagged twelve-month log-change of the national home-price index---and fits, per oriented channel, the macro-conditioned multihead bank of Section~\ref{sec:conditioning}: severity $c(z)$, decay-gate $a(z)$, and threshold $\tau(z)$ sub-banks, plus a macro-gated leaky-max $b(z)$ and the pinned HWM channel, against a matched static bank that isolates conditioning from capacity. Six per-channel scores enter the ladder as a $+1$-constrained block.

\emph{The decision rule is met---the first CI-clear crisis-cohort uplift from a learned block in any of this paper's mortgage studies} (the six-channel frame gain of Section~\ref{sec:fannie-uncond} is engineered content, not a learned-feature contrast) (Table~\ref{tab:real-fannie-macro}). Under the ensembled downstream, the primary contrast clears zero at every one of five RNN seeds ($+0.015$ to $+0.021$ AUC on the crisis cohort; random-effects pooled $+0.0170$ $[+0.0139, +0.0200]$, $\hat\tau = 0.002$, Table~\ref{tab:seed-meta}). The base macro amendment pre-registered RNN seeds $0$ and $1$; the three further seeds are a post-hoc hardening under the same frozen protocol. The conditioning contrast against the static bank clears at every seed ($+0.015$ to $+0.020$; the static bank alone contributes $-0.001$ to $+0.002$ across seeds, with no stable sign), so conditioning---not capacity---carries the gain. The block stays additive on top of the joint features in this episode (the pandemic band reverses that; Section~\ref{sec:fannie-pandemic}). The pre-declared extrapolation risk did not bind: gates trained on $4.4$--$5.6\%$ unemployment and rising prices generalized to $10\%$ unemployment and $-13\%$ year-over-year prices. Nor is the gain a prepayment-composition artifact of the shared label: one third of the crisis test cohort terminates by prepayment inside the label window ($33.1\%$, essentially all labeled non-default), and re-scoring the same ensembled scores with those loans excluded---an evaluation-only sensitivity, no refit---\emph{raises} the primary contrast at every seed, to $+0.019$ to $+0.026$ AUC on the $90{,}009$ retained loans ($6.4\%$ default rate, every interval CI-clear): the conditioning gain concentrates in loans that stay in the book. Table~\ref{tab:real-fannie-econ} (discussed in Section~\ref{sec:discussion}) translates the gain into fixed-cutoff approval economics on the same scores.

\subsubsection{Controls and follow-ups}
\label{sec:fannie-controls}

Four controls and follow-up studies sharpen the claim.

\paragraph{Replacement, confirmed on held-out seeds.}
Six conditioned scores \emph{replacing} the six-summary frame at equal feature budget clear zero on the crisis cohort and on \textsc{iid} at all three RNN seeds never touched by the exploratory analysis. The unconditioned bank as replacement loses decisively everywhere but \textsc{short}. Conditioning, not recurrence, carries the replacement (Supplement Table~S5).

\paragraph{Delivery control.}
This control is exploratory at the national level (selected after the macro result; its pre-registered form is the state-granularity delivery control below, which replicates it). Handing the same two series to the downstream model as eight free columns \emph{hurts} the frame under the shift ($-0.0020$, CI-clear negative---tree splits on macro levels frozen in the boom mis-extrapolate), while conditioning beats the $z$-augmented frame decisively. In distribution the ordering reverses: on the added-to-frame basis the free columns are worth $+0.0167$ against the conditioned block's $+0.013$ to $+0.014$ (point estimates), and on the replacement basis the conditioned block trails the $z$-augmented frame by $-0.007$, CI-clear. Figure~\ref{fig:conditioning-delivery} shows the asymmetry: the same two series, delivered two ways, sit on opposite sides of zero under the shift. The conditioning advantage is shift-specific by measurement, not assumption.

\begin{figure}[tp]
\centering
\includegraphics[width=\linewidth]{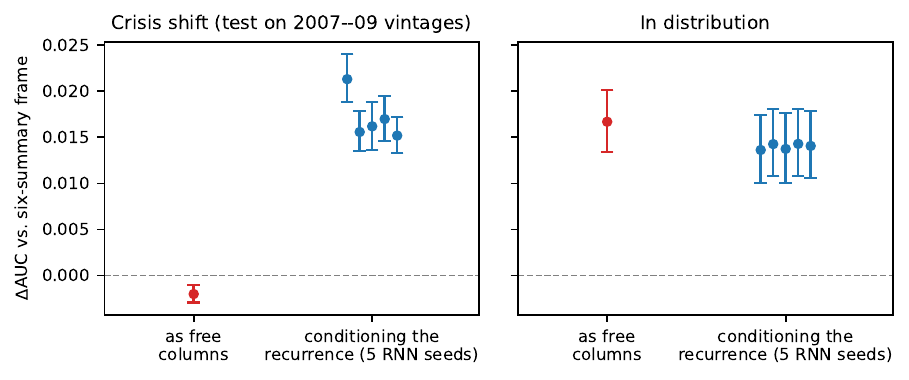}
\caption{The conditioning--delivery asymmetry. The same two macroeconomic series (unemployment, lagged home-price change) added to the six-summary frame two ways: as eight free input columns to the downstream model (red; RNN-independent) versus conditioning the recurrence of the monotone bank (blue; the five ensembled RNN seeds). Both panels plot the added-to-frame contrast; whiskers are paired-bootstrap $95\%$ CIs. Under the crisis shift (left) the two deliveries sit on opposite sides of zero; in distribution (right) the ordering reverses on point estimates (the CI-clear in-distribution reversal is on the replacement basis; Section~\ref{sec:fannie-controls}).}
\label{fig:conditioning-delivery}
\end{figure}

\paragraph{Content control: macro series versus calendar drift (exploratory).}
The between-cohort mechanism invites a deflationary reading: cohorts are ordered in time, so perhaps any smooth exogenous drift handle---calendar time itself---would deliver the same transfer, making the specific macro content incidental. An architecturally identical bank conditioned on a standardized $(t, t^2)$ calendar basis in place of the two macro series (same z-slot, dimensionality, and standardization frame; selected post hoc, run at both base RNN seeds under the ensembled protocol) answers no, in both directions. Under the crisis shift the calendar-conditioned bank is CI-clear \emph{negative} against the frame at both seeds ($-0.021$/$-0.046$) and against the static bank ($-0.023$/$-0.047$), while the macro bank beats it by $+0.042$/$+0.062$: a drift gate learned on the boom extrapolates monotonically into test vintages whose realized risk ordering peaks at 2007 and falls thereafter, so conditioning on time is worse than not conditioning at all---the same mis-extrapolation the delivery control exposes for downstream tree splits, reproduced inside the recurrence. In distribution the ordering reverses: the calendar bank clears ($+0.013$/$+0.015$) and is practically indistinguishable from the macro bank ($-0.0012$/$+0.0011$, opposite signs at the two seeds), because with cohorts mixed in support a calendar handle is just an in-support cohort index---the same face the modern-band \textsc{iid} cell shows. The crisis gain is therefore attributable to the macro \emph{content} of the conditioning, not to the presence of an exogenous handle; the guarantee is indifferent to the substitution, and the deployed checks record zero violations.

\paragraph{Mechanism, established by a pre-registered geography test.}
Replacing the national index with each loan's state-level series is the one intervention that could convert regime-level calibration into within-cohort borrower ranking. It leaves the pooled crisis gain intact, but the pre-specified within-cohort criterion straddles zero at both seeds, with the state increment a tight zero under the shift and CI-clear in distribution (Supplement Table~S6). The crisis gain \emph{is} between-cohort calibration transfer: what an exogenous regime handle should be expected to buy, and what post-hoc recalibration cannot mimic (Section~\ref{sec:discussion}).

\paragraph{Short histories.}
The conditioned block is worth $+0.055$ on \textsc{short}---calendar-faithful gap decay under the true macro regime (Section~\ref{sec:conditioning}) is exactly what incomplete histories were missing.

\subsubsection{Scope cells}
\label{sec:fannie-scope}

Two further pre-registered cells bound the claim (Table~\ref{tab:real-fannie-bands}). On a modern mild-shift band (develop 2015, test 2017; overlap with the CARES---Coronavirus Aid, Relief, and Economic Security---Act forbearance window pre-declared), the same block is worth a quarter of its crisis size on the vintage analog ($+0.0048$/$+0.0055$, CI-clear). On the modern \textsc{iid} split, where cohorts mix, it inflates to $+0.053$/$+0.055$ as an in-support cohort index---the between-cohort mechanism showing its other face. A dev-cutoff sweep (2004/2005/2006 against the fixed crisis test) finds the primary contrast CI-clear positive in all six cutoff-by-seed cells, so the result is not an artifact of the pre-registered boundary. Magnitude peaks at the anchor: the 2006 pool has the crisis onset in support (less to add), and the 2004 single-cohort pool spans too little macro variation to learn conditioning from. Conditioning needs regime variation in its training range before it can pay under regime shift, and pays crisis-sized only under a crisis---where \emph{crisis} means a realized credit-outcome regime shift, a macro shock that reaches default outcomes, not macro-series severity alone (the pandemic band of Section~\ref{sec:fannie-pandemic} makes that distinction sharp).

\subsubsection{Confirmatory population replication on Freddie Mac}
\label{sec:fannie-freddie}

The macro-conditioning result rests on one dataset, so a final pre-registered amendment (frozen, committed, and pushed before any Freddie fit; Appendix~\ref{app:prereg-audit}) replicates the confirmatory set on the Freddie Mac single-family archive: same band, unit, label, eligibility, splits, channels, macro series, and producers, on an independently collected loan population. Scope is declared in the amendment: the two GSEs' 2004--2010 books share the crisis, so this is a \emph{population} replication within the same macro episode---across loan populations and data-collection pipelines, not across regimes. Two channels required reconstruction from Freddie's thinner performance file (the servicer months-behind measure is rebuilt in closed form from published fields; a pre-registered five-channel sensitivity arm drops it entirely). The decision bar is the strictest used in this paper: the ensembled contrast must clear zero on the crisis cohort at \emph{all five} RNN seeds, every seed confirmatory since none was pre-seen.

\emph{The macro replication rule is met on both channel arms} (Table~\ref{tab:real-freddie}). The primary contrast clears on the crisis cohort at all five seeds ($+0.006$ to $+0.013$ AUC; five-channel arm $+0.005$ to $+0.011$; random-effects pooled $+0.0102$ $[+0.0072, +0.0133]$, Table~\ref{tab:seed-meta}). The conditioning contrast against the static bank clears at all five on both arms, and \textsc{iid} ($+0.013$) and \textsc{short} ($+0.041$ to $+0.046$) are CI-clear in every cell. Magnitudes are roughly half to two-thirds of Fannie's, on a base frame of comparable strength (\textsc{iid} base AUC $0.890$ vs.\ Fannie's $0.885$), with per-coordinate bump tests recording zero violations across all thirty joint fits.

\emph{The replacement rule is not met}, and the failure is population-sensitive in an informative way. At equal feature budget the conditioned block's crisis-cohort contrast changes sign with the RNN seed on Freddie (one seed clearly negative, three straddling, one clearly positive on the six-channel arm; the five-channel arm is likewise sign-unstable, three seeds clearly negative and one clearly positive), where it had confirmed at every held-out seed on Fannie. Everywhere else it replicates Fannie's shape: clear positive on \textsc{iid} at all five seeds, decisively positive on \textsc{short}, the unconditioned control decisively negative everywhere but \textsc{short}. The additive claim---conditioning \emph{added to} the governed frame pays under crisis shift at every seed---is therefore what replicates across GSE populations; the stronger replacement form does not transfer, and is bounded accordingly.

\subsubsection{A second macro episode: the pandemic band}
\label{sec:fannie-pandemic}

Every macro verdict above rests on the 2007--2009 episode, so a final amendment (frozen before any pandemic fit; Appendix~\ref{app:prereg-audit}) extends the design to the archive's only other severe macroeconomic event. Development pool: 2015--2016 originations, whose label windows close by December 2019, entirely pre-CARES. Test: 2020--2021 originations. Origination years 2017--2019 are excluded outright because their label windows cross the 2020Q2 CARES boundary---a three-year buffer forced by the label horizon, not chosen. Everything else is the crisis design unchanged (subsample rule, six channels, macro series, producers, ensembled downstream), run on both GSEs with the same two-contrast decision rule at RNN seeds $\{0,1\}$ and a pre-registered extension to seeds $\{2,3,4\}$ if met. The episode is structurally unlike 2008, and the amendment declares the measurement caveat central: unemployment spiked to $14.8\%$ and reversed within months, house prices boomed, and CARES forbearance changed how distress is recorded, so the ever-90-plus-days-delinquency label under this regime is not measured as in any other band (direction of distortion unassumed). The declared readouts locate that distortion. Per-vintage default rates are flat ($0.83$--$0.89\%$ in the development vintages, $0.68\%$ and $1.00\%$ in the 2020 and 2021 test cohorts---no crisis-style surge), and forbearance-family workout codes appear in $7.2\%$ of 2020Q1-origination feature windows, decaying to ${\sim}0.5\%$ by the 2021 vintages ($1.2\%$ of test-cohort loans pooled), so feature-window contamination concentrates in the earliest pandemic cohort.

\emph{The rule is met on Fannie Mae, at modern-band size, and the extension holds at all five seeds} (Table~\ref{tab:real-fannie-pandemic}). On the pandemic cohort the primary contrast is $+0.0044$ to $+0.0050$ AUC (random-effects pooled $+0.0048$ $[+0.0039, +0.0057]$, $\hat\tau = 0$, Table~\ref{tab:seed-meta}), the conditioning contrast $+0.0036$ to $+0.0041$ (pooled $+0.0039$ $[+0.0030, +0.0048]$), and the additive joint-feature secondary $+0.0028$ to $+0.0041$---every pandemic-cohort and \textsc{iid} cell CI-clear; the descriptive \textsc{short} cells are mostly clear with three straddles. Two magnitude readings matter. The gain is a quarter to a third of the crisis size; and, unlike the crisis, it is no larger than the same band's in-distribution gain ($+0.0066$ to $+0.0080$)---under flat default prevalence and a transient spike, conditioning pays roughly its in-support value.

The pre-declared per-origination-year readout says the same thing mechanistically: within-cohort gains ($+0.0031$ to $+0.0047$ inside the 2020 vintages, $+0.0053$ to $+0.0060$ inside 2021, from the persisted ensembled scores) bracket the pooled contrast, so this episode's gain lives \emph{within} cohorts---the in-support face of conditioning---where the crisis gain was between-cohort transfer with mixed within-year rungs. One computed contrast runs the other way and is reported for completeness: macro-over-joint, additive in the crisis, is a tight zero-to-negative here (CI-clear negative at one seed, $-0.0009$; straddling at the other four)---in this episode the conditioned bank and the unconditioned joint features are substitutes, not complements. Both readings strengthen the scope sentence of Section~\ref{sec:fannie-scope}.

\emph{On Freddie Mac the rule is not met, informatively.} The conditioning contrast clears at both seeds ($+0.0046$ at each), but the primary straddles zero at both ($+0.0009$ and $+0.0024$): on this population the static bank is CI-clear \emph{negative} against the pandemic-cohort frame ($-0.0036$/$-0.0021$), and conditioning only wins back what the bank costs. What transfers across GSEs in this episode is conditioning-over-capacity; the block-over-frame primary, which transferred cleanly in the crisis, is population-sensitive here---and both GSEs' verdicts, the Fannie positive as much as the Freddie null, read under the same CARES-measured label. (Freddie's first-payment-derived origination assignment conservatively trims the band's edge vintages, as in the crisis replication; realized cohort sizes accompany the artifacts. The free-columns delivery arm was not run in this band, so the conditioning--delivery asymmetry is evidenced in the crisis episode only.)

\begin{table}[tp]
\centering
\caption{Fannie Mae six-channel study (pre-registered breadth amendment; paired $\Delta$AUC of the primary contrast, 95\% CIs, full seed grid (seed pairs written XGB/RNN) and 5-seed ensembled downstream; Section~\ref{sec:fannie-uncond}).}
\label{tab:real-fannie-wide}
\footnotesize
\setlength{\tabcolsep}{4pt}
\adjustbox{max width=\linewidth}{\input{tables/real_fannie_wide}}
\end{table}

\begin{table}[tp]
\centering
\caption{Fannie Mae macro-conditioned recurrence (pre-registered amendment; paired $\Delta$AUC, 95\% CIs). Top: the primary contrast (six macro-conditioned per-channel bank scores added to the six-summary frame) over the full single-seed grid (written XGB/RNN; unstable under the crisis shift, six of ten cells clear positive) and the 5-seed ensembled downstream. RNN seeds $0$/$1$ are the pre-registered pair; seeds $2$--$4$ are the post-hoc hardening under the same frozen protocol. Middle: against the matched static bank (ensembled). Bottom: on top of the joint features (ensembled).}
\label{tab:real-fannie-macro}
\footnotesize
\setlength{\tabcolsep}{4pt}
\adjustbox{max width=\linewidth}{\input{tables/real_fannie_macro}}
\end{table}

\begin{table}[tp]
\centering
\caption{Fannie Mae scope cells (2026-07-07 amendment; macro primary and conditioning contrasts, 5-seed ensembled downstream). Modern band: develop 2015 / test 2017 originations (CARES caveat pre-declared). Dev-cutoff sweep: cutoffs 2004--2006 against the fixed 2007--2009 crisis test on identical test rows.}
\label{tab:real-fannie-bands}
\footnotesize
\setlength{\tabcolsep}{4pt}
\adjustbox{max width=\linewidth}{\input{tables/real_fannie_bands}}
\end{table}

\begin{table}[tp]
\centering
\caption{Freddie Mac confirmatory replication (pre-registered amendment 2026-07-08; paired $\Delta$AUC, 95\% CIs, 5-seed XGBoost-ensembled downstream, all five RNN seeds confirmatory). Top blocks: the six-channel arm---macro primary, conditioning, and replacement contrasts with the static-bank control. Bottom blocks: the five-channel sensitivity arm (reconstructed months-behind channel dropped). Verdicts in Section~\ref{sec:fannie-freddie}.}
\label{tab:real-freddie}
\footnotesize
\setlength{\tabcolsep}{4pt}
\adjustbox{max width=\linewidth}{\input{tables/real_freddie}}
\end{table}

\begin{table}[tp]
\centering
\caption{Pandemic-band macro episode (pre-registered amendment 2026-07-11; paired $\Delta$AUC, 95\% CIs, 5-seed XGBoost-ensembled downstream). Vintage here is the pandemic shift: development pool 2015--2016, test 2020--2021 originations, 2017--2019 excluded because their label windows cross the CARES boundary. Blocks per GSE: the macro primary, the conditioning contrast against the static bank, and the additive joint-feature secondary. RNN seeds $0$/$1$ are the pre-registered pair; the Fannie extension to seeds $2$--$4$ is triggered by its rule being met (Section~\ref{sec:fannie-pandemic}); \textsc{short} is descriptive. Labels are measured under CARES forbearance (caveat pre-declared, direction unassumed).}
\label{tab:real-fannie-pandemic}
\footnotesize
\setlength{\tabcolsep}{4pt}
\adjustbox{max width=\linewidth}{\input{tables/real_fannie_pandemic}}
\end{table}

\begin{table}[tp]
\centering
\caption{Random-effects pool of the headline macro contrasts over the five-RNN-seed grid (ensembled downstream; DerSimonian--Laird $\hat\tau^2$ from within-seed SEs recovered from the paired-bootstrap CIs; modified Knapp--Hartung $95\%$ intervals on $t_4$ quantiles, the wider of the KH and classical half-widths). $\hat\tau$ is the between-seed SD; caveats and reading rules in Section~\ref{sec:real-setup}.}
\label{tab:seed-meta}
\footnotesize
\setlength{\tabcolsep}{4pt}
\adjustbox{max width=\linewidth}{\input{tables/real_seed_meta}}
\end{table}

\subsection{Home Credit Stability: the ladder under an official stability metric}
\label{sec:real-hcs}

The strictness ladder is the paper's central real-data figure, but on AMEX it is a single-dataset result scored by a single competition metric. The 2024 Home Credit ``Credit Risk Model Stability'' competition supplies an independent replication instrument with two properties AMEX lacks: thirteen monotone event channels (dated delinquency, overdue-amount, and debt records from previous applications and credit-bureau contracts, monthly-binned against the decision date with post-decision records excluded), and an official metric that prices exactly what this paper claims monotone summaries provide---$\text{stability} = \text{mean}_w(\text{gini}_w) + 88\cdot\min(0, \text{slope}) - 0.5\cdot\text{std}(\text{resid})$ over weekly cohort ginis, penalizing a deteriorating weekly trend. On a seeded $300$k-case subsample we rebuild the four-frame ladder (free $229$-column panel; sign-confident constraints on the $147$ dev-signable columns; those columns only; supervised summaries only) and add the three joint features to each frame, on a temporal split (develop through week $64$, test on weeks $65$--$91$) and an \textsc{iid} split, over the same $2{\times}2$ seed grid.

The ladder replicates, and the pre-registered decision rule is met on \textsc{iid} (Figure~\ref{fig:hcs-ladder}; Table~\ref{tab:hcs-ladder}). The joint-feature uplift is statistically zero in the free, confident, and strict frames on both regimes and both metrics---and in the summaries frame it clears zero \emph{on the official stability metric in all four seed pairs} ($+0.046$ to $+0.092$) and on AUC in all four ($+0.026$ to $+0.041$). At thirteen channels the summaries-frame uplift is an order of magnitude larger than AMEX's $+0.004\,M$ at fifteen, on one third the features---though that comparison crosses metrics and channel types (raw event channels against pre-engineered composites), so the clean breadth evidence remains the two-to-eight-channel absence against the thirteen-to-fifteen-channel presence.

The temporal regime adds the metric's own verdict on stability: the thirteen supervised summaries alone have a \emph{falling} weekly gini trend (base stability $-0.09$ to $-0.12$ despite an AUC near $0.69$), and adding the three joint features reverses the trend in every seed pair, worth $+0.22$ to $+0.30$ stability in point estimate. We report that reversal as directional support only: the $88{\times}$slope term makes the statistic's bootstrap intervals wide enough to straddle zero in all four pairs, while the corresponding AUC uplift clears zero in three of four ($+0.013$ to $+0.037$). The pre-registered prediction that the stability metric would favor the monotone features more than raw AUC holds in point estimate on the temporal split and is indistinguishable from the AUC-implied effect on \textsc{iid}.

The 5-seed downstream ensemble pre-specified for the Fannie Mae study strengthens rather than weakens this replication: the summaries-frame stability uplift stays clear of zero at both RNN seeds on \textsc{iid} ($+0.062$ and $+0.067$), and the temporal AUC uplift---three of four seed pairs in the single-seed grid---clears zero at both RNN seeds under the ensemble ($+0.016$ and $+0.030$), while the large temporal stability gains remain directional ($+0.26$ and $+0.29$).

The competition's late-submission scoring supplies what no internal split can: an externally adjudicated readout on hidden test weeks \emph{later than all training data}, scored by the official metric. We submitted both summaries-frame arms exactly as validated internally (the deployable pipeline reproduces the internal test scores to machine precision; dev-frozen orientation, scalars, and downstream; no refitting). The external result is a point estimate with no CI machinery, and it is \emph{inconsistent} with the internal picture: the base arm scores $0.260$ public / $0.209$ private, the joint arm $0.132$ / $0.044$---adding the three states \emph{degrades} external stability by $-0.13$ to $-0.17$. (Absolute levels sit far below full-panel leaderboard entries by design: both arms are thirteen-channel, summaries-frame pipelines; the readout is the paired difference, not the level.) The degradation is not a single-seed artifact: rebuilding the joint arm at the second RNN seed (the base arm is seed-independent and reproduces its score exactly) replicates the sign at roughly half the magnitude ($0.213$ / $0.137$, a $-0.05$ to $-0.07$ degradation), so the direction is seed-robust while its size is not.

A post-hoc per-week decomposition of the internal test window locates the mechanism. The joint features' stability gain is earned early: in weeks 65--77 the base frame's weekly gini falls at $-0.021$ per week and the states flatten the fall to $-0.011$. In the final stretch (weeks 78--91) the base frame \emph{recovers} ($+0.005$ per week) and the joint arms stay flat ($-0.003$ and $-0.001$ at the two RNN seeds). The trend that extrapolates into the hidden later weeks therefore favors the base arm, the $88{\times}$slope penalty compounds against the joint arm there, and the steeper public-to-private drop is consistent with exactly that. We report this asymmetry as the study's production-shift bottom line: the joint features' in-distribution content is real (four-of-four CI-clear on \textsc{iid}, externally unfalsified on AUC-like level differences), while their \emph{temporal stability} contribution is fragile beyond the training horizon---the same unconditioned-recurrence fragility the Fannie crisis test exposes through seed instability, on the one dataset where the repair that worked there (exogenous macro conditioning) is structurally unavailable because the loans' country is undisclosed.

\begin{figure}[t]
\centering
\includegraphics[width=\linewidth]{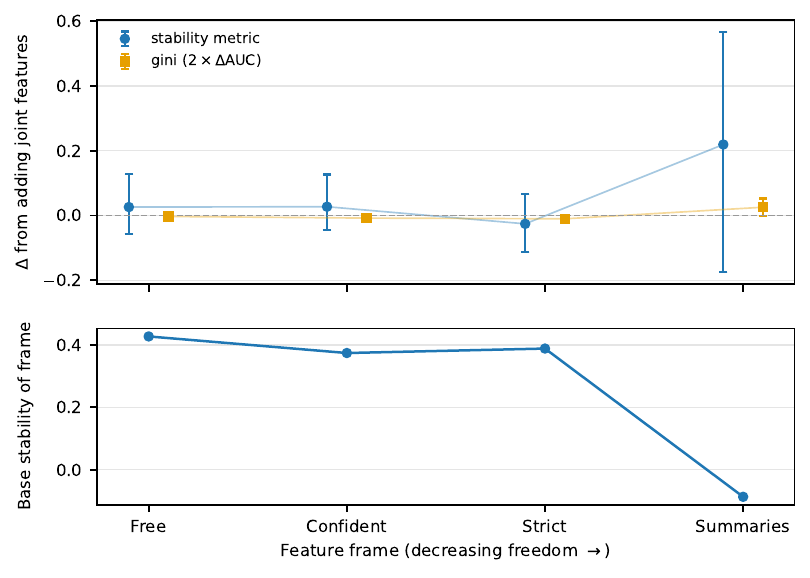}
\caption{The strictness ladder replicated on Home Credit Stability (temporal split, seed pair $0/0$). Top: uplift from adding the three joint features to each frame, under the official stability metric and under AUC rescaled to the gini scale. Bottom: base stability of each frame. As on AMEX, the uplift is zero while the frame is free and concentrates in the summaries frame.}
\label{fig:hcs-ladder}
\end{figure}

\begin{table}[tp]
\centering
\caption{Home Credit Stability ladder (seed pair $0/0$; base value of each frame and paired joint-feature contrast with 95\% CIs, under the official stability metric and raw AUC; bottom block: the pre-specified 5-seed XGBoost-ensembled downstream per RNN seed, stability metric). Seed-pair and ensemble verdicts in Section~\ref{sec:real-hcs}.}
\label{tab:hcs-ladder}
\footnotesize
\setlength{\tabcolsep}{4pt}
\adjustbox{max width=\linewidth}{\input{tables/hc_stability_ladder}}
\end{table}

\subsection{What the real-data studies establish}
\label{sec:real-synthesis}

Seven studies---the AMEX uplift, panel, and strictness-ladder contrasts, Home Credit, Fannie Mae, its Freddie Mac replication, and Home Credit Stability---plus the pandemic-band episode and the exploratory UCI probe combine into one conditional claim: learned cross-channel monotone recurrence earns a real, seed-robust accuracy premium precisely where deployment constraints bind, and under temporal regime shift it must be exogenously conditioned. This subsection assembles the claim from its parts; Table~\ref{tab:decision} compresses it for practice.

\paragraph{The ladder replicates and scales with breadth.}
The strictness-ladder pattern is not an AMEX artifact: it replicates on an independent dataset, at thirteen channels, under an official temporal-stability metric, with the summaries-frame uplift clear of zero in every seed pair (Section~\ref{sec:real-hcs}), while at Fannie Mae's three channels it is absent---exactly as the breadth boundary predicts. The Fannie breadth amendment adds a refinement: widening three channels to six moved the value into the admissible frame itself rather than into the joint-feature contrast. Channel breadth pays the governed frame first, and the learned recurrence only where histories are incomplete.

\paragraph{Unconditioned recurrence fails the crisis test.}
Under a $7{\times}$ default-rate swing, the paired baseline becomes seed-unstable, the contrast changes sign with the downstream seed, and neither five-seed ensembling nor doubled channel breadth revives it (Section~\ref{sec:fannie-uncond}). The Home Credit Stability leaderboard supplies the same verdict from the other side: on hidden weeks beyond the training horizon, the unconditioned joint states degrade the official stability score at both RNN seeds---on precisely the dataset where conditioning is structurally unavailable (Section~\ref{sec:real-hcs}). The two external tests agree on the boundary: unconditioned recurrence carries no trustworthy temporal signal beyond its training window.

\paragraph{Conditioned recurrence clears the bar, and the mechanism is identified.}
Macro-conditioned recurrence does survive: gates, severities, and thresholds driven by two published exogenous series clear the pre-registered bar on the crisis cohort at every seed, beating the matched static bank (Table~\ref{tab:real-fannie-macro}). The mechanism the shift breaks---a fixed notion of how long exposure history matters and how much a step counts---is exactly what conditioning repairs, at no cost to the guarantee (the gates are shape-free in exogenous context). The controls of Section~\ref{sec:fannie-controls} pin the attribution from three sides: the replacement form is confirmed at all held-out RNN seeds; the state-level test establishes the gain as between-cohort calibration transfer; and the content control shows the transfer requires the macro content itself. The conditioning-versus-delivery asymmetry replicates under pre-registration at state granularity.

\paragraph{The claim transfers across populations; its strongest form does not.}
On Freddie Mac's independently collected book of the same crisis, the additive conditioning contrast clears zero on the crisis cohort at all five RNN seeds on both channel arms; the stronger replacement form proves population-sensitive and is bounded accordingly (Section~\ref{sec:fannie-freddie}). Scope is likewise measured rather than assumed (Section~\ref{sec:fannie-scope}): the gain is robust to the development-cutoff choice, and it is crisis-sized only under a crisis---on a modern mild-shift band the same block is worth a quarter of its crisis value, while in distribution it inflates into an in-support cohort index, the between-cohort mechanism showing its other face.

\paragraph{What survives temporal stress without conditioning.}
Two results survive. The joint features reverse the falling weekly gini trend of the summaries frame on Home Credit Stability in every seed pair (directionally; the metric's slope penalty term is too high-variance for significance). And on loans with incomplete histories---the population where fixed-window aggregates degrade---they deliver Fannie Mae's one all-seeds-robust unconditioned gain, which conditioning more than doubles (Sections~\ref{sec:fannie-uncond}, \ref{sec:fannie-controls}).

\paragraph{The production story.}
The story this supports is conditional but concrete, and its in-distribution half is small in absolute terms: the ladder's top rung buys back about five percent of the summaries frame's own surrendered accuracy, and its temporal component should be presumed fragile unless conditioned. A governed scorecard restricted to admissible components can add three auditable learned features---each a monotone state with a one-line guarantee, versus $420$ hand-interaction columns that do not replicate them---for a gain that scales with channel breadth. An unconstrained pipeline with a mature engineered panel should expect no accuracy gain from the same features, and the guarantee itself---which the panel frame cannot offer---becomes the entire value proposition there. And any deployment that must cross a macroeconomic regime boundary should condition the recurrence on published exogenous series, delivered structurally rather than as columns.

\section{Discussion and Limitations}
\label{sec:discussion}

\paragraph{Scope of the evidence.}
The studies cover two consumer-credit panels, two GSE mortgage archives, one competition dataset with an official stability metric, and an exploratory single-channel probe on a small public card panel (UCI Taiwan). On this evidence, learned monotone recurrence pays when the admissibility frame binds and enough channels interact (the strictness ladder and the breadth boundary), and it does not pay against a free, saturated engineered panel. The guarantee is never the differentiator---it is equally available to engineered pipelines by composition---and the accuracy value of learning the aggregation is conditional on governance strictness and channel breadth. The mechanism-level complement, established under controlled synthetic conditions, is in the companion study \citep{instrument2026}. What remains outside both is any portfolio whose label mechanism, shift structure, or channel interactions differ materially from the datasets studied; and the severe-shift evidence covers two macroeconomic episodes---the 2007--2009 crisis (with the Freddie Mac result a population replication inside it) and the pandemic band (Section~\ref{sec:fannie-pandemic}), where the episode rule is met on one GSE population while on the other only the conditioning-versus-capacity contrast transfers.

\paragraph{How large are these effects?}
The paper's headline magnitudes span three metrics and are easiest to read against internal anchors. The AMEX joint-feature uplift is ${+}0.0044\,M$ in the summaries frame, whose own price relative to the free panel is large (${\sim}0.090\,M$): the three features buy back about five percent of that frame's surrendered accuracy. In the strict frame, whose price is the smaller ${\sim}0.014\,M$, the corresponding point estimate is ${+}0.0026\,M$---about a fifth of that frame's price, though its interval grazes zero and we do not claim it (Figure~\ref{fig:strictness}, Supplement Table~S1). The Home Credit Stability summaries-frame uplift ($+0.046$ to $+0.092$ on the official stability metric, every seed pair) is an order of magnitude larger at thirteen raw channels---and on the temporal split it separates a frame whose weekly discrimination \emph{trend} is falling from one that holds level. The Fannie macro-conditioning gain ($+0.015$ to $+0.021$ AUC on the crisis cohort, at all five RNN seeds) exceeds the entire cross-seed instability band ($0.014$ AUC) that sank the unconditioned claim, on a test cohort drawn from vintages spanning a $7{\times}$ default-rate swing; the same block is worth a quarter of that under a mild modern shift. For calibration, the companion synthetic study bounds the constraint's own predictive cost below $0.0017$ AUC on in-family mechanisms \citep{instrument2026}---below every effect claimed here, and an order of magnitude below the crisis-cohort conditioning gain (noting the AMEX and stability metrics are not on the AUC scale).

\paragraph{Economic translation.}
The uplifts reported here are ranking-metric differences, and the two internal anchors of the preceding paragraph govern their economic reading: adopting the three learned features does not repay the cost of governance, it reduces it (each frame's uplift is a fraction of that frame's own surrendered accuracy), and any gain claimed under the crisis shift must exceed the baseline's own $0.014$ AUC cross-seed band, which the conditioning gain does and the unconditioned contrast does not. For the headline crisis result, Table~\ref{tab:real-fannie-econ} translates the ranking gain into fixed-cutoff economics on the crisis test cohort ($134{,}556$ loans, $4.4\%$ default rate), using the same scored files behind Table~\ref{tab:real-fannie-macro} (reproduced exactly before translation). At an $80\%$ approval rate the conditioned score re-decides $3.4$--$4.8\%$ of the cohort, and the loans it swaps in default at $52$--$67\%$ of the rate of the loans it swaps out, per seed ($3.3$--$4.4\%$ against $5.5$--$7.2\%$): $9$--$21$ net defaults avoided per ten thousand approvals and $10$--$27$ basis points of defaulted unpaid balance removed from the approved book, positive at all five RNN seeds. The gain attenuates as the approval rate rises---the net-avoidance range already straddles zero at $90\%$, and at $95\%$ the swap set shrinks to ${\sim}1\%$---so the economic value concentrates where the decline region is deep enough for re-ranking to matter. These are descriptive quantities on the scored cohort: the seed range is the only uncertainty attached, no bootstrap intervals accompany the swap metrics, and the swap sets themselves are majority-stable across seeds (pairwise Jaccard overlap $0.63$--$0.81$ at the $80\%$ cutoff), so the re-decided loans are substantially the same loans at every seed. Two pieces of a full business case remain outside these numbers: they price defaults avoided at par (no loss-given-default or funding assumptions), and they take the score as a ranking, not a calibrated probability of default; cutoff placement in production would be set on a calibrated scale.

\begin{table}[tp]
\centering
\caption{Economic translation of the crisis-cohort conditioning gain (Fannie Mae vintage test cohort: $134{,}556$ loans, $4.4\%$ default rate). Fixed approval-rate cutoffs on the same ensembled scores behind Table~\ref{tab:real-fannie-macro}: the lowest-scored fraction of the cohort is approved. Frame rows are RNN-seed-independent; $+$macro and swap rows report the min--max over the five RNN seeds. The swap set is the loans the two models decide differently (equal size by construction); basis points are default-weighted unpaid principal balance in the approved book.}
\label{tab:real-fannie-econ}
\footnotesize
\setlength{\tabcolsep}{5pt}
\adjustbox{max width=\linewidth}{\input{tables/real_fannie_econ}}
\end{table}

\paragraph{Governance value beyond accuracy.}
Two governance properties are independent of the accuracy verdict. First, parsimony has an operating-cost reading: under SR 11-7-style validation, every admitted feature carries its own conceptual-soundness documentation, ongoing monitoring, and change control, so replacing hundreds of engineered columns (the hand-interaction control alone spans $420$) with three provable states---each a monotone function with a one-line guarantee and plottable transforms---reduces a recurring validation workload, not just a parameter count. The upstream producer is of course itself a model in the validation inventory, with its own training-data governance and change control; the arithmetic above nets that cost against the per-component one-line proofs and applies to the three-state AMEX and Home Credit Stability deployments---the mortgage deployment consumes six per-channel bank scores, a larger (still admissible) upstream surface. Second, the upstream-score deployment shape mirrors the sub-score stacking already standard in credit pipelines (behavioral scores and trended-data attributes feeding downstream models), so adopting the learned channel changes what the upstream component is, not where it sits.

\paragraph{Admissibility is necessary, not sufficient.}
The proofs of Section~\ref{sec:method} answer the monotonicity requirement; they do not close the deployment file. A macro-conditioned score is point-in-time by construction \citep{aguais2004pit}---its calendar movement at fixed borrower behavior is exactly the mechanism that pays under shift---and that movement carries obligations this paper does not evaluate: adverse-action reason codes must stay stable and explainable when the score moves with the regime rather than the borrower; state-level conditioning adds a regional axis for fair-lending monitoring; a deliberately through-the-cycle rating policy may prefer the static bank at the measured accuracy price. The proofs settle the monotonicity question of model validation, not the whole of model governance.

\paragraph{A practitioner's decision table.}
Table~\ref{tab:decision} compresses the empirical answer of these studies to ``should a governed pipeline adopt learned monotone recurrent features?'' into its three observable coordinates: how strictly the consuming frame is governed, how many interacting exposure channels the portfolio carries, and what kind of shift deployment must survive. Each cell cites the table or figure that measured it.

\begin{table}[tp]
\centering
\caption{When learned monotone recurrent features pay: expected value by governance frame, channel breadth, and shift type, with the evidence for each cell. ``Conditioned'' means exogenous context conditioning the recurrence (gates, severities, thresholds); magnitudes are $\Delta M$ (AMEX), stability (Home Credit Stability, HCS), or $\Delta$AUC (others). Tables S1, S2, and S4 reside in the Online Supplement.}
\label{tab:decision}
\scriptsize
\setlength{\tabcolsep}{3pt}
\begin{tabular}{p{0.24\linewidth}p{0.13\linewidth}p{0.17\linewidth}p{0.22\linewidth}p{0.13\linewidth}}
\toprule
Governance frame & Breadth & Shift & Expected value & Evidence \\
\midrule
Free engineered panel & any & any & none (guarantee only) & Tab.~\ref{tab:real-amex-panel}, Fig.~\ref{fig:strictness} \\
All-signed strict frame & $13$--$15$ & in-distribution & boundary at $15$ (${+}0.002$--$0.003\,M$); zero at $13$ & Suppl.\ Tab.~S1; Tab.~\ref{tab:hcs-ladder} \\
Summaries-only frame & $13$--$15$ & in-distribution & CI-clear (${+}0.004\,M$; ${+}0.05$--$0.09$ stability) & Tab.~\ref{tab:real-amex-uplift}, \ref{tab:hcs-ladder} \\
Summaries-only frame & $2$--$8$ & in-distribution & zero (breadth boundary) & Tab.~\ref{tab:real-hc}, \ref{tab:real-fannie-wide} \\
Uncompressed aggregates, all signed & $15$ & in-distribution & CI-clear (${+}0.005$--$0.009\,M$); zero if volatility columns stay sign-free & Sec.~\ref{sec:real-strictness} \\
Single channel in isolation (exploratory) & $1$ & in-distribution & unreliable: sign is a channel property ($\pm 0.02$ AUC, CI-clear both ways) & Suppl.\ Tab.~S2 \\
Mortgage design (masked recurrence, gap decay) & $3$--$6$ & short/gappy histories & robust (${+}0.004$--$0.023$; ${+}0.055$ conditioned); negative on the card panel's history split & Tab.~\ref{tab:real-fannie-wide}, \ref{tab:real-fannie-macro}; contra \ref{tab:real-hc} \\
Governed frame, unconditioned & any & temporal regime shift & do not trust (sign-unstable; degrades externally) & Suppl.\ Tab.~S4, Sec.~\ref{sec:real-hcs} \\
Governed frame, conditioned & $6$ & temporal regime shift & ${+}0.015$--$0.021$ Fannie (seeds $2$--$4$ post-hoc); ${+}0.006$--$0.013$ Freddie (pre-registered) & Tab.~\ref{tab:real-fannie-macro}, \ref{tab:real-freddie} \\
Governed frame, conditioned & $6$ & pandemic-era shift (forbearance-distorted label) & ${+}0.004$--$0.005$ Fannie (all five seeds, pre-registered rule met); Freddie primary straddles, conditioning clears & Tab.~\ref{tab:real-fannie-pandemic} \\
\bottomrule
\end{tabular}
\end{table}

\paragraph{Objections from practice.}
Two referee-shaped objections deserve answers from controls already run.

\paragraph{Objection: the macro gain is just data sourcing.}
\emph{``Banks already source external and bureau data.''} Bureau attributes are borrower-level; the macro handle is regime-level, and the macro-direct control \emph{is} the sourcing scenario run to completion: the same two public series handed to the downstream model as columns---exactly how sourced data enters a scorecard---hurt the governed frame under the crisis shift while conditioning the recurrence on them clears the bar, a result replicated under pre-registration at state granularity (Section~\ref{sec:fannie-controls}, Figure~\ref{fig:conditioning-delivery}). The content control closes the objection from the other side: conditioning on calendar time instead of the macro series is CI-clear negative under the shift, so it is the published macro content, not the conditioning machinery, that pays. In distribution the objection is \emph{correct}: the free columns beat the conditioned encoding there, which is why the claim is scoped to governed frames under shift rather than asserted unconditionally.

\paragraph{Objection: calibration transfer is just recalibration.}
\emph{``Score alignment across cohorts is standard practice.''} Post-hoc recalibration maps scores to outcomes using labels from the regime being calibrated to; in the deployment this paper simulates, those labels are two years in the future when the model ships. Conditioning transfers the calibration \emph{ex ante}, from published exogenous series available contemporaneously---the distinction is not the size of the adjustment but when it can be made, and the modern-band cell (a quarter of the crisis-sized gain under a mild shift) prices how much of it a benign decade would have left on the table.

\paragraph{Limitations and future work.}
Five limitations bound the claims. (i) The severe-shift evidence covers two macroeconomic episodes, and the second is measured under CARES forbearance (label distortion pre-declared, direction unknown) with a verdict split across GSE populations; the archive contains no third severe episode, so episode generality beyond these two is untested. (ii) The all-seeds-clear criterion is a robustness standard, not a familywise-error-controlled test; the random-effects pool of Table~\ref{tab:seed-meta} puts the seed axis on formal footing, but its interval shares test rows across seeds and is anti-conservative on that component. Beyond the seed axis, the confirmatory record now spans seven sequential amendments carrying roughly a dozen decision rules across two GSEs and multiple regimes, each adjudicated by uncorrected per-contrast intervals; no across-family correction is applied, and the sequential, disclosed structure of the amendments (each motivated by the preceding verdict; outcomes enumerated in Table~\ref{tab:prereg-audit}) is the mitigation offered rather than a formal one. A crude family-level read is nonetheless available: recovering the implied standard errors from the pooled intervals of Table~\ref{tab:seed-meta}, every headline pooled contrast---primary and conditioning on the shifted cohorts, both episodes, both GSEs---carries a $t$-statistic of at least $9.3$, against a Bonferroni-corrected critical value of about $5.9$ at $\alpha = 0.05/12$ on $t_4$, so the headline verdicts survive correction across the full rule family with room to spare. (iii) Pre-registration was internal (version-frozen, publicly auditable; the freeze evidence and two disclosed timestamp caveats are laid out in Appendix~\ref{app:prereg-audit}) rather than external, so the recommendation to register externally stands unmet for future confirmatory work. (iv) The economic translation of Table~\ref{tab:real-fannie-econ} is fixed-cutoff and rank-based; expected-loss pricing under loss-given-default and calibration assumptions is not performed, and the prepayment sensitivity of Section~\ref{sec:fannie-macro} is an exclusion test, not a joint competing-risks model of default and prepayment. (v) The AMEX \textsc{time} instability and the Home Credit Stability external degradation both concern short-horizon temporal fragility of unconditioned recurrence; diagnosing whether the former is estimation noise in a small cohort or the same fragility is open, and the Stability dataset structurally forbids the conditioning repair, so that hypothesis is untestable there.

\section{Conclusion}
\label{sec:conclusion}

This paper asked what learned monotone temporal aggregation is worth inside a governed credit-scoring pipeline, and answered on production data at matched admissibility, with proofs that the vector-input and conditioning extensions cost nothing in admissibility. The answer is conditional, and the conditions are the contribution. The value of learned monotone recurrence is the price of the freedom the governance frame removes: statistically zero where the consuming model is free, maximal where only auditable summaries are admitted, scaling with channel breadth, and replicated across two production-scale datasets and an official temporal-stability metric. Under temporal regime shift the answer sharpens to a deployment rule. Unconditioned learned recurrence does not survive a genuine crisis shift; the same recurrence conditioned on published exogenous series clears the crisis-cohort bar at every seed under this paper's seed-robustness standard---confirmed at $+0.006$ to $+0.013$ AUC on the fully held-out Freddie Mac replication, with the larger Fannie Mae range the discovery estimate---beats the identical information delivered as free columns, survives the exclusion of early-prepaid loans, and transfers between-cohort calibration ex ante, before crisis labels exist, while a second, forbearance-distorted episode bounds the payoff: crisis-sized only when the macro shock reaches default outcomes. Learned monotone recurrence deployed across a temporal shift needs an exogenous handle on the regime, or its temporal contribution should not be trusted beyond the training window. The measurements, their boundaries, and the audit trail that disciplines them are what this paper offers to practice.

\FloatBarrier

\appendix

\section{Proofs for the vector-input extension}
\label{app:bases}

Throughout, $x \preceq x'$ denotes the coordinatewise order of Section~\ref{sec:setup} (applied per coordinate and per step under vector input).

\begin{lemma}[Low-water mark]
\label{lem:lwm}
For monotone non-decreasing $g$, the low-water mark $\ell_t = \min\big(\ell_{t-1},\, g(x_t)\big)$ with fixed $\ell_0$ is monotone non-decreasing in each $x_t$.
\end{lemma}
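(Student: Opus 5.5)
The plan is to mirror the proof of Lemma~\ref{lem:hwm}, replacing $\max$ by $\min$. The statement concerns the terminal state, or any $\ell_t$, as a function of the path $x_{1:t}$. The claim to prove is that $x_{1:t}\preceq x'_{1:t}$ implies $\ell_t(x_{1:t})\le \ell_t(x'_{1:t})$ for every $t$. Monotonicity in each individual input $x_k$ then follows by taking $x'$ to differ from $x$ only at step $k$.

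The argument is an induction on $t$. At the base, $\ell_0$ is a fixed constant that does not depend on the path, so $\ell_0(x)=\ell_0(x')$. For the step, assume $\ell_{t-1}(x)\le \ell_{t-1}(x')$ and $x_t\le x'_t$. Since $g$ is monotone non-decreasing, $g(x_t)\le g(x'_t)$. The map $(a,b)\mapsto\min(a,b)$ is non-decreasing in each argument: if $a\le a'$ and $b\le b'$, then $\min(a,b)\le a\le a'$ and $\min(a,b)\le b\le b'$, so $\min(a,b)\le\min(a',b')$. Applying this to $\ell_{t-1}$ and $g(x_t)$ gives $\ell_t(x)\le\ell_t(x')$.

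As a cross-check, unrolling the recursion gives the closed form $\ell_T=\min\big(\ell_0,\min_{1\le t\le T} g(x_t)\big)$. This is a minimum of non-decreasing functions of separate coordinates, hence non-decreasing in every $x_k$. The same form shows the vector-input case is immediate: if $g=g_j$ is an isotone map on $\mathbb{R}^C_{\ge 0}$, the identical induction holds under the coordinatewise order. That is exactly what Proposition~\ref{prop:vector} will need.

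No step is a real obstacle. The only point needing care is that $\ell_0$ must be fixed, i.e.\ not dependent on $x$, for the base case to hold. A data-dependent initialization such as $\ell_0=g(x_1)$ is also fine, since it is itself monotone, and I would remark on this.
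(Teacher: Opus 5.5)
Your proposal is correct and matches the paper's proof. The paper gives the induction of Lemma~\ref{lem:hwm} with $\min$ in place of $\max$, relying on $\min(\cdot,\cdot)$ being non-decreasing in each argument; your closed-form cross-check and your remark on a monotone data-dependent $\ell_0$ are valid but not needed.
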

\begin{proof}
Identical to Lemma~\ref{lem:hwm} with $\min$ in place of $\max$: $\min(\cdot,\cdot)$ is non-decreasing in each argument.
\end{proof}

\begin{lemma}[Non-negative-weight MLPs are isotone on vectors]
\label{lem:vector-mlp}
A feed-forward network with elementwise non-negative weight matrices, arbitrary biases, elementwise monotone non-decreasing activations, and a final softplus is monotone non-decreasing as a map $\mathbb{R}^{C}\to\mathbb{R}^{J}_{\ge 0}$ under the coordinatewise order on both sides, and its outputs are non-negative.
\end{lemma}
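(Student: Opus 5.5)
The plan is to prove Lemma~\ref{lem:vector-mlp} layer by layer and then close the argument by composition, the same way the paper justifies the monotone-MLP case after Theorem~\ref{thm:score}. Write the network as $N = \phi_L \circ A_L \circ \cdots \circ \phi_1 \circ A_1$, where $A_\ell(u) = W_\ell u + b_\ell$ with $W_\ell \ge 0$ elementwise and $\phi_\ell$ applies a monotone non-decreasing scalar activation coordinatewise. The last activation $\phi_L$ is the softplus. The order throughout is the coordinatewise order $\preceq$ on each intermediate space $\mathbb{R}^{d_\ell}$.

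\textbf{Step 1: affine layers.} I will show each affine layer is isotone. If $u \preceq u'$, then $u' - u \ge 0$ coordinatewise. For each output row $i$,
\[
(A_\ell u')_i - (A_\ell u)_i = \sum_j (W_\ell)_{ij}(u'_j - u_j).
\]
This is a sum of products of non-negative numbers, so it is $\ge 0$. The bias cancels in the difference, which is why it may be arbitrary.

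\textbf{Step 2: activation layers.} Each activation layer is isotone because it acts coordinatewise: $u_i \le u'_i$ implies $\phi(u_i) \le \phi(u'_i)$ by scalar monotonicity. This covers tanh and the softplus alike. For $\mathrm{softplus}(v) = \log(1+e^v)$, the derivative $\sigma(v) > 0$ gives monotonicity.

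\textbf{Step 3: composition and non-negativity.} A composition of isotone maps between partially ordered sets is isotone, by induction on the number of layers. Hence $x \preceq x'$ in $\mathbb{R}^C$ gives $N(x) \preceq N(x')$ in $\mathbb{R}^J$. Non-negativity of the outputs follows because the final map is softplus, whose range is $(0,\infty)$. So $N$ maps into $\mathbb{R}^J_{\ge 0}$ regardless of the preceding layers. The same argument also covers weights parameterized as $W = \mathrm{softplus}(\Theta)$, since those are non-negative.

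\textbf{Main obstacle.} There is no real technical obstacle here. The only care point is to state the order on each intermediate layer, and to note that weights must be non-negative in \emph{every} layer: a single negative entry breaks Step~1 for that row. If desired, I would add a remark that differentiability is not needed, since the argument is purely order-theoretic. This also makes the lemma robust to non-smooth monotone activations and to MC-dropout masks, which only multiply $W_\ell$ by non-negative constants.
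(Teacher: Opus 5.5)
Your proof is correct and takes essentially the same route as the paper's: non-negative-weight affine layers and coordinatewise non-decreasing activations each preserve the coordinatewise order, compositions of order-preserving maps are order-preserving, and the final softplus makes the outputs positive. Your row-wise difference computation, done in arbitrary dimension, spells out the affine-layer step that the paper asserts in one line.
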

\begin{proof}
An affine layer $u \mapsto Au + b$ with $A \ge 0$ elementwise preserves the coordinatewise order, elementwise non-decreasing activations preserve it, and compositions of order-preserving maps are order-preserving; nothing in this argument uses input or output dimension one. The final softplus is increasing and positive.
\end{proof}

\begin{proof}[Proof of Proposition~\ref{prop:vector}]
Fix an output coordinate $j$ and define the scalar sequence $u^{(j)}_t = g_j(x_t)$. By Lemma~\ref{lem:vector-mlp}, raising any coordinate $x_{t,i}$ of the path can only raise $u^{(j)}_t$, and leaves $u^{(j)}_s$ for $s \ne t$ unchanged. Each state channel is then exactly the scalar recurrence of Lemma~\ref{lem:ewma} (EWMA), Lemma~\ref{lem:hwm} (HWM), or Lemma~\ref{lem:lwm} (LWM) driven by the inputs $u^{(j)}_{1:T}$, hence non-decreasing in every $x_{t,i}$ by composition of non-decreasing maps. The monotone head composes as in Theorem~\ref{thm:score}. For the conditioned multihead bank, the same substitution applies to Theorem~\ref{thm:cond-bank}: each bank channel's increment is one coordinate of the Lemma~\ref{lem:vector-mlp} map, and every step of that proof is unchanged.
\end{proof}

\section{Proofs for exogenous conditioning}
\label{app:cond-proofs}

\begin{proof}[Proof of Lemma~\ref{lem:gated-ewma}]
Unrolling the recurrence (with the empty-product convention $\prod_{r=T+1}^T a_r = 1$) gives
\begin{equation*}
e_T = \Big(\prod_{r=1}^T a_r\Big) e_0 \;+\; \sum_{s=1}^T \Big(\prod_{r=s+1}^T a_r\Big) g_1(x_s).
\end{equation*}
By (D2) the gate is causal-current, so among these factors $x_k$ appears only in $a_k$ and in the increment $g_1(x_k)$. Collect the terms containing $a_k$ and factor it out; the $s=k$ increment carries coefficient $C_k := \prod_{r=k+1}^T a_r$; the rest is independent of $x_k$:
\begin{equation*}
e_T = a_k\, P + C_k\, g_1(x_k) + R,
\end{equation*}
where $P, C_k, R$ involve only $\{a_r : r \ne k\}$ and $\{g_1(x_s) : s \ne k\}$, hence are independent of $x_k$, and are each $\ge 0$ (sums of products of non-negative terms, using $a_r>0$, $e_0\ge0$, $g_1\ge0$). A direct identification of coefficients gives $P = C_k\, e_{k-1}$, so
\begin{equation*}
\frac{\partial e_T}{\partial x_k}
= C_k\Big[\, e_{k-1}\,\frac{\partial a_k}{\partial x_k} \;+\; g_1'(x_k) \,\Big] \;\ge\; 0,
\end{equation*}
since $C_k \ge 0$, $e_{k-1} \ge 0$ (a non-negative accumulant), $\partial a_k/\partial x_k \ge 0$ by (D2), and $g_1'(x_k) \ge 0$. Where derivatives do not exist (e.g., ReLU gates), the same conclusion follows directly: $e_T$ is an affine combination, with non-negative coefficients $P$ and $C_k$, of the two functions $a_k$ and $g_1(x_k)$, both non-decreasing in $x_k$. The edge cases $k=1$ (where $P=C_1 e_0$) and $k=T$ (where $C_T=1$, $R=0$) are covered by the empty-product convention.
\end{proof}

\paragraph{The direction condition is necessary.}
The requirement in (D2) that the gate be non-decreasing in $x_t$ cannot be dropped. Take $T=2$, $g_1=\mathrm{id}$, $e_0=0$, and a \emph{decreasing} gate $a_2 = \sigma(-x_2)$. Then $e_2 = \sigma(-x_2)\,x_1 + x_2$ and
\begin{equation*}
\frac{\partial e_2}{\partial x_2} = -\,\sigma'(-x_2)\,x_1 + 1,
\end{equation*}
which is negative whenever $x_1$ is large (e.g., $x_1=100$, $x_2\approx0$: $-25+1=-24<0$). A gate that decreases in a protected input therefore breaks monotonicity: raising a late exposure can shrink the retained weight on a large past by more than its own increment adds.

\paragraph{Shape-freedom of purely exogenous gates.}
If the gate depends only on the exogenous $(t, z_t)$ and not on any protected input, then $\partial a_k/\partial x_k = 0$ and the display above reduces to $\partial e_T/\partial x_k = C_k\, g_1'(x_k) \ge 0$ for \emph{any} shape of $a(t, z_t)$. Monotonicity in exposure then requires only $a_t \ge 0$; the upper bound $a_t < 1$ in (D1) is needed for contraction and boundedness, not for monotonicity. This licenses macro-conditioned decay with full functional freedom. Boundedness itself survives gating: for finite $T$ each product is at most $1$, so $e_T \le e_0 + \sum_{s} g_1(x_s)$, and if $\sup_t a_t \le \bar a < 1$ then $e_T \le e_0 + \max_s g_1(x_s)/(1-\bar a)$, recovering the geometric stability bound of the constant-$\alpha$ EWMA.

\begin{proof}[Proof of Lemma~\ref{lem:severity}]
By exogeneity, $c(z_t)$ is constant in every $x_k$, so $u_t = c(z_t)\, g(x_t)$ satisfies $u_t \ge 0$ and $\partial u_t/\partial x_t = c(z_t)\, g'(x_t) \ge 0$. Among the unrolled factors of the gated EWMA, $x_k$ appears exactly where $g(x_k)$ did, now scaled by the non-negative constant $c(z_k)$; the coefficients in the proof of Lemma~\ref{lem:gated-ewma} are unchanged up to that non-negative factor. For the leaky-max, $r_t = \max(\beta r_{t-1},\, u_t)$ is a max of non-negative, non-decreasing terms, and the composition argument of Lemma~\ref{lem:hwm} applies term by term. Non-negativity of $c$ is necessary, not cosmetic: it keeps $u_t \ge 0$, hence the accumulant $e_{t-1} \ge 0$ on which the gated-EWMA proof relies. On a gap step the increment is dropped, so severity is correctly absent there.
\end{proof}

\begin{proof}[Proof of Lemma~\ref{lem:threshold}]
For fixed $\tau$, $g(x;\tau)=\max(0,\, x-\tau)$ is non-decreasing in $x$ with slope in $\{0,1\}$ and is non-negative. Since $\tau(z_t)$ does not depend on any $x_s$, the per-step derivative in $x_t$ is unaffected by its value, so $\partial g/\partial x_t \in \{0,1\}$ independent of $\tau$'s shape. Substituting this increment into either channel preserves monotonicity by the composition in the proof of Lemma~\ref{lem:severity}. The condition $\tau(z_t) \ge 0$ is a semantic convention (keep the threshold inside the non-negative exposure range), not required for monotonicity or non-negativity.
\end{proof}

\begin{proof}[Proof of Lemma~\ref{lem:gated-leakymax}]
Unrolling from $r_0 = 0$ (so the $b_1 r_0 = 0$ branch is dominated by $g_2(x_1) \ge 0$) gives, by induction and with the empty-product convention $\prod_{s > T} = 1$,
\begin{equation*}
r_T = \max_{1 \le t \le T}\Big(\prod_{s=t+1}^T b(z_s)\Big)\, g_2(x_t).
\end{equation*}
Each term is a non-negative ($\prod b \ge 0$, $g_2 \ge 0$) multiple of $g_2(x_t)$, non-decreasing in $x_t$; a max of non-decreasing functions is non-decreasing. By exogeneity, $b$ is constant in $x_k$, so raising $x_k$ moves only the single factor $g_2(x_k)$ and no $\prod b$ coefficient, whence $\partial r_T/\partial x_k \ge 0$ for any shape of $b$. The constant-$\beta$ leaky-max is the member $b \equiv \beta$; a pinned $b \equiv 1$ channel is the never-forgotten HWM and is left ungated. On a gap step the update $r_t = b(z_t)\, r_{t-1}$ (no observed increment) is monotone by the same argument, and a pinned channel is unchanged, preserving the peak across the gap.
\end{proof}

\begin{proof}[Proof of Theorem~\ref{thm:cond-bank}]
Each gated or constant-decay EWMA channel is monotone in every $x_k$ by Lemma~\ref{lem:gated-ewma} (with severity- or threshold-conditioned increments handled by Lemmas~\ref{lem:severity} and~\ref{lem:threshold}); each gated or constant leaky-max channel, including the pinned HWM, is monotone by Lemma~\ref{lem:gated-leakymax}. The head $F$ is monotone non-decreasing in all channels, and compositions of order-preserving maps are order-preserving, exactly as in Theorem~\ref{thm:score}.
\end{proof}

\section{Pre-registration audit trail}
\label{app:prereg-audit}

The design file is \texttt{docs/fannie\_hc\_\-prereg.md} in the project repository, with each freeze evidenced by version-control history (the definition of internal pre-registration is in Section~\ref{sec:real-setup}). Table~\ref{tab:prereg-audit} maps each confirmatory design to its prereg section, freeze commit, and the commit and push timestamps (UTC); the amendment chronicle and the intermediate tables it cites are in the Online Supplement (\S S3). Two rows require disclosure. The 2026-07-07 amendment (follow-on studies A--D) was committed before its fits but pushed only the following day, so its external timestamp postdates some of those fits; the freeze evidence for that row is the local commit chain (the commit is an ancestor of every subsequently pushed head). The five-RNN-seed hardening of the macro result was \emph{not} pre-registered---the base file fixes RNN seeds $\{0,1\}$ and its ensemble amendment expands only the downstream seeds---and is therefore reported as a post-hoc seed-robustness expansion wherever it appears; the all-five-seeds decision bar first becomes a pre-registered primary in the Freddie Mac amendment, where every seed is genuinely held out. The two recorded protocol deviations (a tie-degenerate quantile rule for the short-history split replaced by the data-defined boundary, and the experiment-level subsample size) were written into the prereg file before the corresponding fits. The pandemic-band amendment carries two further dated pre-fit deviation notes: its declared external (OSF) registration was waived before any fit (so its registration status is internal, like every other design here), and a label-free degeneracy-guard floor was lowered to admit the valid $1.7\%$ Freddie short-history cohort. Its push postdates its decision-rule fits (the seed-extension fits postdate the push and so carry full external freeze evidence); the freeze evidence for the decision-rule fits is the local commit chain. Amendment names use local dates and commit timestamps are UTC (Table~\ref{tab:prereg-audit}).

\begin{table}[H]
\centering
\caption{Pre-registration audit trail: confirmatory designs, freeze commits, timestamps (UTC), and decision-rule outcomes (the chronicle, Supplement Table~S3, keeps the per-rule detail). Commits are in the public project repository; push timestamps are from the hosting service's event log.}
\label{tab:prereg-audit}
\scriptsize
\setlength{\tabcolsep}{3pt}
\begin{tabular}{p{0.28\linewidth}p{0.16\linewidth}lllp{0.16\linewidth}}
\toprule
Design & Prereg section & Commit & Committed & Pushed & Outcome \\
\midrule
Fannie base design: unit, label, splits, primary contrast, decision rule; HC Stability ladder design & base file (2026-07-04) & \texttt{416e172} & 07-06 03:57 & 07-06 16:02 & Fannie not met; HCS \textsc{iid} met \\
Seed-ensembled downstream protocol; six-channel breadth amendment (label-free audit) & 2026-07-06 & \texttt{6400008} & 07-06 12:27 & 07-06 16:02 & not met (contrast not revived) \\
Macro-conditioned recurrence: exogenous series, producer arms, decision rule & 2026-07-06b & \texttt{03be486} & 07-06 16:01 & 07-06 16:02 & met \\
Follow-on studies A--D: replacement promotion (held-out seeds), state-level HPI test, modern band, dev-cutoff sweep & 2026-07-07 & \texttt{df4b75a} & 07-07 14:15 & 07-08 04:58 & A met; B not met (qualifier survives); C--D scope as measured \\
Five-RNN-seed hardening of the macro result & \emph{not pre-registered} (post-hoc robustness expansion) & --- & --- & --- & met (post hoc) \\
Freddie Mac confirmatory replication: field mapping, reconstructed channel, sensitivity arm, all-five-seeds rule & 2026-07-08 & \texttt{812b0b1} & 07-08 04:58 & 07-08 04:58 & macro met (both arms); replacement not met \\
Pandemic-band macro episode, both GSEs: band, splits, CARES caveat, decision and extension rules & 2026-07-11 & \texttt{f6815fa} & 07-10 19:10 & 07-11 16:45 & Fannie met (ext.\ incl.); Freddie not met \\
\bottomrule
\end{tabular}
\end{table}

\input{main.bbl}
\end{document}

%% file: tables/real_amex_uplift.tex
%
\begin{tabular}{cccccccc}
\toprule
 &  & \multicolumn{2}{c}{IID} & \multicolumn{2}{c}{Time} & \multicolumn{2}{c}{Peak} \\
\cmidrule(lr){3-4}\cmidrule(lr){5-6}\cmidrule(lr){7-8}
RNN seed & XGB seed & $\Delta M$ & 95\% CI & $\Delta M$ & 95\% CI & $\Delta M$ & 95\% CI \\
\midrule
0 & 0 & $+0.0044$ & $[+0.0009, +0.0080]$ & $+0.0046$ & $[+0.0014, +0.0077]$ & $+0.0040$ & $[+0.0019, +0.0060]$ \\
1 & 0 & $+0.0043$ & $[+0.0012, +0.0072]$ & $+0.0009$ & $[-0.0029, +0.0048]$ & $+0.0026$ & $[+0.0007, +0.0045]$ \\
2 & 0 & $+0.0049$ & $[+0.0017, +0.0079]$ & $+0.0031$ & $[+0.0001, +0.0062]$ & $+0.0047$ & $[+0.0028, +0.0067]$ \\
0 & 1 & $+0.0044$ & $[+0.0013, +0.0076]$ & $+0.0070$ & $[+0.0042, +0.0099]$ & $+0.0032$ & $[+0.0012, +0.0051]$ \\
0 & 2 & $+0.0044$ & $[+0.0015, +0.0073]$ & $+0.0075$ & $[+0.0046, +0.0106]$ & $+0.0033$ & $[+0.0014, +0.0053]$ \\
0 & 3 & $+0.0051$ & $[+0.0018, +0.0083]$ & $+0.0060$ & $[+0.0028, +0.0089]$ & $+0.0037$ & $[+0.0016, +0.0059]$ \\
\bottomrule
\end{tabular}

%% file: tables/real_amex_controls.tex
%
\begin{tabular}{lcccccc}
\toprule
 & \multicolumn{2}{c}{IID} & \multicolumn{2}{c}{Time} & \multicolumn{2}{c}{Peak} \\
\cmidrule(lr){2-3}\cmidrule(lr){4-5}\cmidrule(lr){6-7}
Contrast & $\Delta M$ & 95\% CI & $\Delta M$ & 95\% CI & $\Delta M$ & 95\% CI \\
\midrule
states${}+{}$joint (48) $-$ supervised summaries (15) & $+0.0042$ & $[+0.0006, +0.0077]$ & $-0.0119$ & $[-0.0158, -0.0082]$ & $+0.0072$ & $[+0.0048, +0.0096]$ \\
states${}+{}$joint (48) $-$ product aggregates (420) & $+0.0022$ & $[-0.0013, +0.0055]$ & $-0.0023$ & $[-0.0060, +0.0011]$ & $+0.0058$ & $[+0.0034, +0.0082]$ \\
product aggregates (420) $-$ supervised summaries (15) & $+0.0020$ & $[-0.0011, +0.0052]$ & $-0.0095$ & $[-0.0132, -0.0061]$ & $+0.0014$ & $[-0.0008, +0.0038]$ \\
\bottomrule
\end{tabular}

%% file: tables/real_amex_panel.tex
%
\begin{tabular}{lcccccc}
\toprule
 & \multicolumn{2}{c}{IID} & \multicolumn{2}{c}{Time} & \multicolumn{2}{c}{Peak} \\
\cmidrule(lr){2-3}\cmidrule(lr){4-5}\cmidrule(lr){6-7}
 & $\Delta M$ & 95\% CI & $\Delta M$ & 95\% CI & $\Delta M$ & 95\% CI \\
\midrule
Panel base $M$ (188 features) & \multicolumn{2}{c}{$0.7829$} & \multicolumn{2}{c}{$0.6485$} & \multicolumn{2}{c}{$0.6352$} \\
\midrule
$+$ all 48 RNN features & $+0.0011$ & $[-0.0020, +0.0041]$ & $+0.0020$ & $[-0.0009, +0.0047]$ & $-0.0012$ & $[-0.0033, +0.0009]$ \\
$+$ 3 joint features & $+0.0010$ & $[-0.0022, +0.0041]$ & $+0.0032$ & $[+0.0003, +0.0063]$ & $-0.0010$ & $[-0.0032, +0.0013]$ \\
$+$ 15 supervised summaries & $+0.0008$ & $[-0.0018, +0.0034]$ & $+0.0048$ & $[+0.0024, +0.0071]$ & $-0.0011$ & $[-0.0031, +0.0009]$ \\
$+$ 3 joint features given summaries & $-0.0011$ & $[-0.0038, +0.0015]$ & $+0.0000$ & $[-0.0030, +0.0029]$ & $-0.0003$ & $[-0.0023, +0.0016]$ \\
\bottomrule
\end{tabular}

%% file: tables/real_hc_ladders.tex
%
\begin{tabular}{lcccc}
\toprule
 & \multicolumn{2}{c}{IID} & \multicolumn{2}{c}{Short-history} \\
\cmidrule(lr){2-3}\cmidrule(lr){4-5}
Contrast & $\Delta$AUC & 95\% CI & $\Delta$AUC & 95\% CI \\
\midrule
\multicolumn{5}{l}{\emph{Card panel (8 channels)}} \\
hand aggregates $-$ covariates & $+0.0228$ & $[+0.0148, +0.0308]$ & $+0.0224$ & $[+0.0190, +0.0258]$ \\
$+$ per-channel RNN scalars & $-0.0008$ & $[-0.0033, +0.0017]$ & $-0.0031$ & $[-0.0045, -0.0018]$ \\
$+$ per-channel RNN states & $+0.0013$ & $[-0.0011, +0.0040]$ & $-0.0022$ & $[-0.0040, -0.0004]$ \\
$+$ joint states & $+0.0011$ & $[-0.0014, +0.0036]$ & $-0.0006$ & $[-0.0020, +0.0009]$ \\
joint $-$ per-channel states & $-0.0003$ & $[-0.0028, +0.0021]$ & $+0.0017$ & $[+0.0001, +0.0033]$ \\
\midrule
\multicolumn{5}{l}{\emph{Installments panel (2 raw timing channels)}} \\
hand aggregates $-$ covariates & $+0.0094$ & $[+0.0067, +0.0124]$ & $+0.0069$ & $[+0.0051, +0.0087]$ \\
$+$ per-channel RNN states & $+0.0005$ & $[-0.0006, +0.0017]$ & $+0.0001$ & $[-0.0006, +0.0008]$ \\
$+$ joint states & $-0.0000$ & $[-0.0012, +0.0012]$ & $-0.0000$ & $[-0.0006, +0.0006]$ \\
$+$ states and joint & $+0.0002$ & $[-0.0009, +0.0014]$ & $+0.0005$ & $[-0.0002, +0.0012]$ \\
\bottomrule
\end{tabular}

%% file: tables/real_fannie_wide.tex
%
\begin{tabular}{lcccccc}
\toprule
 & \multicolumn{2}{c}{IID} & \multicolumn{2}{c}{Vintage OOD (crisis)} & \multicolumn{2}{c}{Short-history} \\
\cmidrule(lr){2-3}\cmidrule(lr){4-5}\cmidrule(lr){6-7}
Primary contrast & $\Delta$AUC & 95\% CI & $\Delta$AUC & 95\% CI & $\Delta$AUC & 95\% CI \\
\midrule
Base AUC (covariates $+$ 6 summaries) & \multicolumn{2}{c}{$0.8847$} & \multicolumn{2}{c}{$0.8570$} & \multicolumn{2}{c}{$0.7813$} \\
\midrule
seeds 0/0 & $-0.0000$ & $[-0.0012, +0.0012]$ & $-0.0007$ & $[-0.0016, +0.0001]$ & $+0.0063$ & $[+0.0034, +0.0094]$ \\
seeds 0/1 & $+0.0006$ & $[-0.0004, +0.0016]$ & $-0.0003$ & $[-0.0012, +0.0005]$ & $+0.0050$ & $[+0.0015, +0.0086]$ \\
seeds 1/0 & $+0.0006$ & $[-0.0003, +0.0017]$ & $-0.0019$ & $[-0.0030, -0.0009]$ & $+0.0088$ & $[+0.0056, +0.0122]$ \\
seeds 1/1 & $+0.0008$ & $[-0.0000, +0.0017]$ & $+0.0015$ & $[+0.0005, +0.0024]$ & $+0.0039$ & $[+0.0006, +0.0072]$ \\
ensemble / RNN 0 & $+0.0005$ & $[-0.0004, +0.0015]$ & $+0.0002$ & $[-0.0004, +0.0008]$ & $+0.0076$ & $[+0.0049, +0.0105]$ \\
ensemble / RNN 1 & $+0.0007$ & $[-0.0000, +0.0014]$ & $+0.0014$ & $[+0.0009, +0.0019]$ & $+0.0048$ & $[+0.0019, +0.0077]$ \\
\bottomrule
\end{tabular}

%% file: tables/real_fannie_macro.tex
%
\begin{tabular}{lcccccc}
\toprule
 & \multicolumn{2}{c}{IID} & \multicolumn{2}{c}{Vintage OOD (crisis)} & \multicolumn{2}{c}{Short-history} \\
\cmidrule(lr){2-3}\cmidrule(lr){4-5}\cmidrule(lr){6-7}
Contrast / seeds (XGB/RNN) & $\Delta$AUC & 95\% CI & $\Delta$AUC & 95\% CI & $\Delta$AUC & 95\% CI \\
\midrule
\multicolumn{7}{l}{\emph{macro bank $-$ frame (primary)}} \\
seeds 0/0 & $+0.0141$ & $[+0.0105, +0.0180]$ & $-0.0106$ & $[-0.0143, -0.0069]$ & $+0.0547$ & $[+0.0462, +0.0630]$ \\
seeds 0/1 & $+0.0143$ & $[+0.0108, +0.0181]$ & $+0.0134$ & $[+0.0113, +0.0156]$ & $+0.0548$ & $[+0.0461, +0.0630]$ \\
seeds 0/2 & $+0.0137$ & $[+0.0100, +0.0176]$ & $-0.0057$ & $[-0.0094, -0.0018]$ & $+0.0571$ & $[+0.0482, +0.0655]$ \\
seeds 0/3 & $+0.0144$ & $[+0.0109, +0.0182]$ & $+0.0152$ & $[+0.0132, +0.0173]$ & $+0.0553$ & $[+0.0467, +0.0636]$ \\
seeds 0/4 & $+0.0139$ & $[+0.0102, +0.0176]$ & $+0.0002$ & $[-0.0029, +0.0035]$ & $+0.0582$ & $[+0.0494, +0.0664]$ \\
seeds 1/0 & $+0.0134$ & $[+0.0097, +0.0173]$ & $+0.0184$ & $[+0.0162, +0.0206]$ & $+0.0567$ & $[+0.0486, +0.0645]$ \\
seeds 1/1 & $+0.0140$ & $[+0.0103, +0.0178]$ & $+0.0125$ & $[+0.0105, +0.0145]$ & $+0.0554$ & $[+0.0469, +0.0633]$ \\
seeds 1/2 & $+0.0133$ & $[+0.0096, +0.0174]$ & $+0.0149$ & $[+0.0132, +0.0167]$ & $+0.0561$ & $[+0.0477, +0.0640]$ \\
seeds 1/3 & $+0.0140$ & $[+0.0104, +0.0178]$ & $-0.0098$ & $[-0.0139, -0.0058]$ & $+0.0564$ & $[+0.0482, +0.0642]$ \\
seeds 1/4 & $+0.0139$ & $[+0.0103, +0.0177]$ & $+0.0125$ & $[+0.0111, +0.0139]$ & $+0.0597$ & $[+0.0513, +0.0674]$ \\
ensemble / RNN 0 & $+0.0136$ & $[+0.0100, +0.0174]$ & $+0.0213$ & $[+0.0188, +0.0240]$ & $+0.0560$ & $[+0.0476, +0.0639]$ \\
ensemble / RNN 1 & $+0.0143$ & $[+0.0108, +0.0181]$ & $+0.0156$ & $[+0.0135, +0.0178]$ & $+0.0557$ & $[+0.0473, +0.0634]$ \\
ensemble / RNN 2 & $+0.0137$ & $[+0.0101, +0.0177]$ & $+0.0162$ & $[+0.0137, +0.0188]$ & $+0.0574$ & $[+0.0488, +0.0654]$ \\
ensemble / RNN 3 & $+0.0143$ & $[+0.0108, +0.0180]$ & $+0.0170$ & $[+0.0146, +0.0194]$ & $+0.0570$ & $[+0.0486, +0.0649]$ \\
ensemble / RNN 4 & $+0.0141$ & $[+0.0106, +0.0178]$ & $+0.0152$ & $[+0.0132, +0.0172]$ & $+0.0599$ & $[+0.0517, +0.0677]$ \\
\midrule
\multicolumn{7}{l}{\emph{macro bank $-$ static bank}} \\
ensemble / RNN 0 & $+0.0134$ & $[+0.0097, +0.0172]$ & $+0.0198$ & $[+0.0172, +0.0222]$ & $+0.0479$ & $[+0.0395, +0.0560]$ \\
ensemble / RNN 1 & $+0.0139$ & $[+0.0103, +0.0176]$ & $+0.0153$ & $[+0.0133, +0.0175]$ & $+0.0472$ & $[+0.0387, +0.0553]$ \\
ensemble / RNN 2 & $+0.0128$ & $[+0.0091, +0.0168]$ & $+0.0158$ & $[+0.0133, +0.0184]$ & $+0.0504$ & $[+0.0421, +0.0582]$ \\
ensemble / RNN 3 & $+0.0134$ & $[+0.0097, +0.0172]$ & $+0.0177$ & $[+0.0153, +0.0201]$ & $+0.0484$ & $[+0.0400, +0.0563]$ \\
ensemble / RNN 4 & $+0.0139$ & $[+0.0103, +0.0177]$ & $+0.0149$ & $[+0.0129, +0.0168]$ & $+0.0510$ & $[+0.0429, +0.0591]$ \\
\midrule
\multicolumn{7}{l}{\emph{macro bank on top of joint}} \\
ensemble / RNN 0 & $+0.0131$ & $[+0.0096, +0.0169]$ & $+0.0161$ & $[+0.0145, +0.0177]$ & $+0.0514$ & $[+0.0429, +0.0596]$ \\
ensemble / RNN 1 & $+0.0139$ & $[+0.0104, +0.0175]$ & $+0.0121$ & $[+0.0106, +0.0137]$ & $+0.0540$ & $[+0.0455, +0.0617]$ \\
ensemble / RNN 2 & $+0.0134$ & $[+0.0097, +0.0172]$ & $+0.0154$ & $[+0.0132, +0.0176]$ & $+0.0544$ & $[+0.0457, +0.0627]$ \\
ensemble / RNN 3 & $+0.0135$ & $[+0.0100, +0.0171]$ & $+0.0146$ & $[+0.0133, +0.0159]$ & $+0.0502$ & $[+0.0414, +0.0588]$ \\
ensemble / RNN 4 & $+0.0134$ & $[+0.0099, +0.0172]$ & $+0.0086$ & $[+0.0075, +0.0096]$ & $+0.0543$ & $[+0.0456, +0.0624]$ \\
\bottomrule
\end{tabular}

%% file: tables/real_fannie_bands.tex
%
\begin{tabular}{lccccc}
\toprule
 &  & \multicolumn{2}{c}{macro $-$ frame} & \multicolumn{2}{c}{macro $-$ static} \\
\cmidrule(lr){3-4}\cmidrule(lr){5-6}
Cell & RNN & $\Delta$AUC & 95\% CI & $\Delta$AUC & 95\% CI \\
\midrule
modern vintage (2015/2017) & 0 & $+0.0055$ & $[+0.0035, +0.0075]$ & $+0.0030$ & $[+0.0011, +0.0050]$ \\
 & 1 & $+0.0048$ & $[+0.0025, +0.0070]$ & $+0.0031$ & $[+0.0010, +0.0054]$ \\
modern iid & 0 & $+0.0545$ & $[+0.0418, +0.0661]$ & $+0.0550$ & $[+0.0425, +0.0675]$ \\
 & 1 & $+0.0530$ & $[+0.0397, +0.0654]$ & $+0.0527$ & $[+0.0394, +0.0652]$ \\
cutoff 2004 & 0 & $+0.0086$ & $[+0.0070, +0.0103]$ & $+0.0082$ & $[+0.0066, +0.0099]$ \\
 & 1 & $+0.0042$ & $[+0.0035, +0.0049]$ & $+0.0013$ & $[+0.0007, +0.0019]$ \\
cutoff 2005 (anchor) & 0 & $+0.0213$ & $[+0.0188, +0.0240]$ & $+0.0198$ & $[+0.0172, +0.0222]$ \\
 & 1 & $+0.0156$ & $[+0.0135, +0.0178]$ & $+0.0153$ & $[+0.0133, +0.0175]$ \\
cutoff 2006 & 0 & $+0.0093$ & $[+0.0077, +0.0108]$ & $+0.0092$ & $[+0.0076, +0.0107]$ \\
 & 1 & $+0.0118$ & $[+0.0100, +0.0137]$ & $+0.0123$ & $[+0.0104, +0.0141]$ \\
\bottomrule
\end{tabular}

%% file: tables/real_freddie.tex
%
\begin{tabular}{lcccccc}
\toprule
 & \multicolumn{2}{c}{IID} & \multicolumn{2}{c}{Vintage OOD (crisis)} & \multicolumn{2}{c}{Short-history} \\
\cmidrule(lr){2-3}\cmidrule(lr){4-5}\cmidrule(lr){6-7}
Contrast / RNN seed & $\Delta$AUC & 95\% CI & $\Delta$AUC & 95\% CI & $\Delta$AUC & 95\% CI \\
\midrule
\multicolumn{7}{l}{\emph{macro bank $-$ frame (primary)}} \\
RNN 0 & $+0.0134$ & $[+0.0097, +0.0174]$ & $+0.0119$ & $[+0.0094, +0.0144]$ & $+0.0428$ & $[+0.0334, +0.0522]$ \\
RNN 1 & $+0.0134$ & $[+0.0097, +0.0175]$ & $+0.0100$ & $[+0.0076, +0.0124]$ & $+0.0411$ & $[+0.0309, +0.0507]$ \\
RNN 2 & $+0.0134$ & $[+0.0097, +0.0173]$ & $+0.0103$ & $[+0.0080, +0.0127]$ & $+0.0461$ & $[+0.0364, +0.0553]$ \\
RNN 3 & $+0.0135$ & $[+0.0100, +0.0172]$ & $+0.0062$ & $[+0.0035, +0.0090]$ & $+0.0435$ & $[+0.0338, +0.0527]$ \\
RNN 4 & $+0.0138$ & $[+0.0101, +0.0178]$ & $+0.0127$ & $[+0.0100, +0.0156]$ & $+0.0445$ & $[+0.0349, +0.0535]$ \\
\midrule
\multicolumn{7}{l}{\emph{macro bank $-$ static bank}} \\
RNN 0 & $+0.0111$ & $[+0.0075, +0.0148]$ & $+0.0098$ & $[+0.0074, +0.0124]$ & $+0.0242$ & $[+0.0160, +0.0324]$ \\
RNN 1 & $+0.0112$ & $[+0.0077, +0.0149]$ & $+0.0081$ & $[+0.0059, +0.0106]$ & $+0.0216$ & $[+0.0131, +0.0301]$ \\
RNN 2 & $+0.0109$ & $[+0.0072, +0.0146]$ & $+0.0083$ & $[+0.0061, +0.0105]$ & $+0.0279$ & $[+0.0198, +0.0363]$ \\
RNN 3 & $+0.0117$ & $[+0.0082, +0.0152]$ & $+0.0048$ & $[+0.0020, +0.0076]$ & $+0.0199$ & $[+0.0118, +0.0279]$ \\
RNN 4 & $+0.0146$ & $[+0.0109, +0.0185]$ & $+0.0101$ & $[+0.0074, +0.0129]$ & $+0.0193$ & $[+0.0115, +0.0270]$ \\
\midrule
\multicolumn{7}{l}{\emph{macro bank replaces frame}} \\
RNN 0 & $+0.0089$ & $[+0.0037, +0.0145]$ & $-0.0274$ & $[-0.0312, -0.0236]$ & $+0.0873$ & $[+0.0679, +0.1060]$ \\
RNN 1 & $+0.0082$ & $[+0.0029, +0.0139]$ & $-0.0009$ & $[-0.0043, +0.0025]$ & $+0.0883$ & $[+0.0691, +0.1074]$ \\
RNN 2 & $+0.0087$ & $[+0.0033, +0.0143]$ & $-0.0002$ & $[-0.0033, +0.0031]$ & $+0.0901$ & $[+0.0703, +0.1093]$ \\
RNN 3 & $+0.0087$ & $[+0.0035, +0.0144]$ & $-0.0006$ & $[-0.0036, +0.0024]$ & $+0.0889$ & $[+0.0689, +0.1082]$ \\
RNN 4 & $+0.0085$ & $[+0.0030, +0.0143]$ & $+0.0068$ & $[+0.0038, +0.0098]$ & $+0.0908$ & $[+0.0707, +0.1096]$ \\
\midrule
\multicolumn{7}{l}{\emph{static bank replaces frame (control)}} \\
RNN 0 & $-0.0160$ & $[-0.0216, -0.0107]$ & $-0.0950$ & $[-0.1000, -0.0900]$ & $+0.0749$ & $[+0.0575, +0.0917]$ \\
RNN 1 & $-0.0177$ & $[-0.0236, -0.0119]$ & $-0.0747$ & $[-0.0790, -0.0706]$ & $+0.0717$ & $[+0.0548, +0.0882]$ \\
RNN 2 & $-0.0163$ & $[-0.0222, -0.0109]$ & $-0.0247$ & $[-0.0273, -0.0224]$ & $+0.0727$ & $[+0.0554, +0.0894]$ \\
RNN 3 & $-0.0163$ & $[-0.0220, -0.0110]$ & $-0.0255$ & $[-0.0281, -0.0232]$ & $+0.0737$ & $[+0.0564, +0.0901]$ \\
RNN 4 & $-0.0211$ & $[-0.0271, -0.0154]$ & $-0.0995$ & $[-0.1051, -0.0940]$ & $+0.0733$ & $[+0.0560, +0.0901]$ \\
\midrule
\multicolumn{7}{l}{\emph{primary, 5-channel arm}} \\
RNN 0 & $+0.0135$ & $[+0.0099, +0.0174]$ & $+0.0093$ & $[+0.0073, +0.0114]$ & $+0.0438$ & $[+0.0340, +0.0534]$ \\
RNN 1 & $+0.0134$ & $[+0.0096, +0.0174]$ & $+0.0079$ & $[+0.0056, +0.0102]$ & $+0.0419$ & $[+0.0318, +0.0516]$ \\
RNN 2 & $+0.0133$ & $[+0.0097, +0.0173]$ & $+0.0100$ & $[+0.0079, +0.0121]$ & $+0.0461$ & $[+0.0359, +0.0555]$ \\
RNN 3 & $+0.0135$ & $[+0.0100, +0.0173]$ & $+0.0049$ & $[+0.0026, +0.0075]$ & $+0.0455$ & $[+0.0356, +0.0550]$ \\
RNN 4 & $+0.0138$ & $[+0.0101, +0.0179]$ & $+0.0114$ & $[+0.0089, +0.0139]$ & $+0.0449$ & $[+0.0351, +0.0541]$ \\
\midrule
\multicolumn{7}{l}{\emph{conditioning, 5-channel arm}} \\
RNN 0 & $+0.0111$ & $[+0.0075, +0.0148]$ & $+0.0092$ & $[+0.0072, +0.0113]$ & $+0.0238$ & $[+0.0153, +0.0323]$ \\
RNN 1 & $+0.0111$ & $[+0.0074, +0.0148]$ & $+0.0079$ & $[+0.0057, +0.0103]$ & $+0.0157$ & $[+0.0071, +0.0243]$ \\
RNN 2 & $+0.0109$ & $[+0.0072, +0.0147]$ & $+0.0098$ & $[+0.0076, +0.0121]$ & $+0.0250$ & $[+0.0166, +0.0336]$ \\
RNN 3 & $+0.0114$ & $[+0.0080, +0.0150]$ & $+0.0038$ & $[+0.0014, +0.0063]$ & $+0.0184$ & $[+0.0103, +0.0269]$ \\
RNN 4 & $+0.0146$ & $[+0.0109, +0.0185]$ & $+0.0106$ & $[+0.0081, +0.0131]$ & $+0.0160$ & $[+0.0081, +0.0238]$ \\
\bottomrule
\end{tabular}

%% file: tables/real_fannie_pandemic.tex
%
\begin{tabular}{lcccccc}
\toprule
 & \multicolumn{2}{c}{IID} & \multicolumn{2}{c}{Vintage OOD (pandemic)} & \multicolumn{2}{c}{Short-history} \\
\cmidrule(lr){2-3}\cmidrule(lr){4-5}\cmidrule(lr){6-7}
Contrast / RNN seed & $\Delta$AUC & 95\% CI & $\Delta$AUC & 95\% CI & $\Delta$AUC & 95\% CI \\
\midrule
\multicolumn{7}{l}{\emph{Fannie Mae: macro bank $-$ frame (primary)}} \\
RNN 0 & $+0.0070$ & $[+0.0023, +0.0121]$ & $+0.0049$ & $[+0.0029, +0.0070]$ & $+0.0170$ & $[+0.0062, +0.0286]$ \\
RNN 1 & $+0.0066$ & $[+0.0021, +0.0115]$ & $+0.0049$ & $[+0.0028, +0.0070]$ & $+0.0173$ & $[+0.0068, +0.0291]$ \\
RNN 2 & $+0.0068$ & $[+0.0025, +0.0113]$ & $+0.0050$ & $[+0.0029, +0.0071]$ & $+0.0164$ & $[+0.0040, +0.0299]$ \\
RNN 3 & $+0.0068$ & $[+0.0026, +0.0117]$ & $+0.0044$ & $[+0.0026, +0.0063]$ & $+0.0187$ & $[+0.0075, +0.0307]$ \\
RNN 4 & $+0.0080$ & $[+0.0030, +0.0135]$ & $+0.0047$ & $[+0.0027, +0.0068]$ & $+0.0190$ & $[+0.0084, +0.0313]$ \\
\midrule
\multicolumn{7}{l}{\emph{Fannie Mae: macro bank $-$ static bank}} \\
RNN 0 & $+0.0065$ & $[+0.0015, +0.0118]$ & $+0.0040$ & $[+0.0021, +0.0060]$ & $+0.0139$ & $[+0.0020, +0.0268]$ \\
RNN 1 & $+0.0061$ & $[+0.0013, +0.0112]$ & $+0.0039$ & $[+0.0018, +0.0061]$ & $+0.0129$ & $[+0.0019, +0.0260]$ \\
RNN 2 & $+0.0066$ & $[+0.0024, +0.0113]$ & $+0.0041$ & $[+0.0022, +0.0061]$ & $+0.0106$ & $[-0.0032, +0.0261]$ \\
RNN 3 & $+0.0062$ & $[+0.0017, +0.0111]$ & $+0.0037$ & $[+0.0017, +0.0057]$ & $+0.0143$ & $[+0.0028, +0.0275]$ \\
RNN 4 & $+0.0079$ & $[+0.0027, +0.0137]$ & $+0.0036$ & $[+0.0015, +0.0058]$ & $+0.0151$ & $[+0.0029, +0.0290]$ \\
\midrule
\multicolumn{7}{l}{\emph{Fannie Mae: joint-3 additive (secondary)}} \\
RNN 0 & $+0.0068$ & $[+0.0024, +0.0118]$ & $+0.0041$ & $[+0.0023, +0.0060]$ & $+0.0122$ & $[+0.0005, +0.0247]$ \\
RNN 1 & $+0.0049$ & $[+0.0008, +0.0096]$ & $+0.0034$ & $[+0.0015, +0.0055]$ & $+0.0111$ & $[-0.0030, +0.0269]$ \\
RNN 2 & $+0.0052$ & $[+0.0013, +0.0095]$ & $+0.0028$ & $[+0.0006, +0.0052]$ & $+0.0119$ & $[+0.0000, +0.0256]$ \\
RNN 3 & $+0.0053$ & $[+0.0009, +0.0106]$ & $+0.0030$ & $[+0.0010, +0.0049]$ & $+0.0148$ & $[+0.0013, +0.0300]$ \\
RNN 4 & $+0.0054$ & $[+0.0011, +0.0104]$ & $+0.0036$ & $[+0.0014, +0.0059]$ & $+0.0117$ & $[-0.0013, +0.0267]$ \\
\midrule
\multicolumn{7}{l}{\emph{Freddie Mac: macro bank $-$ frame (primary)}} \\
RNN 0 & $+0.0050$ & $[-0.0003, +0.0105]$ & $+0.0009$ & $[-0.0012, +0.0031]$ & $+0.0132$ & $[-0.0068, +0.0370]$ \\
RNN 1 & $+0.0062$ & $[+0.0003, +0.0125]$ & $+0.0024$ & $[-0.0002, +0.0053]$ & $+0.0272$ & $[+0.0048, +0.0541]$ \\
\midrule
\multicolumn{7}{l}{\emph{Freddie Mac: macro bank $-$ static bank}} \\
RNN 0 & $+0.0061$ & $[+0.0004, +0.0117]$ & $+0.0046$ & $[+0.0023, +0.0070]$ & $+0.0129$ & $[-0.0077, +0.0372]$ \\
RNN 1 & $+0.0068$ & $[+0.0010, +0.0132]$ & $+0.0046$ & $[+0.0016, +0.0077]$ & $+0.0346$ & $[+0.0124, +0.0626]$ \\
\midrule
\multicolumn{7}{l}{\emph{Freddie Mac: joint-3 additive (secondary)}} \\
RNN 0 & $+0.0061$ & $[-0.0013, +0.0142]$ & $+0.0019$ & $[-0.0016, +0.0055]$ & $+0.0689$ & $[+0.0262, +0.1182]$ \\
RNN 1 & $+0.0045$ & $[-0.0005, +0.0107]$ & $+0.0085$ & $[+0.0056, +0.0116]$ & $+0.0777$ & $[+0.0377, +0.1207]$ \\
\bottomrule
\end{tabular}

%% file: tables/real_seed_meta.tex
%
\begin{tabular}{llcccc}
\toprule
Contrast & Regime & Pooled $\Delta$AUC & 95\% CI & $\hat\tau$ & Per-seed range \\
\midrule
Fannie: macro bank $-$ frame (primary) & IID & $+0.0140$ & $[+0.0123, +0.0156]$ & $0.0000$ & $[+0.0136, +0.0143]$ \\
 & Vintage OOD (crisis) & $+0.0170$ & $[+0.0139, +0.0200]$ & $0.0020$ & $[+0.0152, +0.0213]$ \\
 & Short-history & $+0.0572$ & $[+0.0536, +0.0608]$ & $0.0000$ & $[+0.0557, +0.0599]$ \\
\midrule
Fannie: macro bank $-$ static bank & IID & $+0.0135$ & $[+0.0118, +0.0152]$ & $0.0000$ & $[+0.0128, +0.0139]$ \\
 & Vintage OOD (crisis) & $+0.0166$ & $[+0.0141, +0.0191]$ & $0.0016$ & $[+0.0149, +0.0198]$ \\
 & Short-history & $+0.0490$ & $[+0.0454, +0.0527]$ & $0.0000$ & $[+0.0472, +0.0510]$ \\
\midrule
Freddie: macro bank $-$ frame (primary) & IID & $+0.0135$ & $[+0.0118, +0.0152]$ & $0.0000$ & $[+0.0134, +0.0138]$ \\
 & Vintage OOD (crisis) & $+0.0102$ & $[+0.0072, +0.0133]$ & $0.0019$ & $[+0.0062, +0.0127]$ \\
 & Short-history & $+0.0436$ & $[+0.0394, +0.0479]$ & $0.0000$ & $[+0.0411, +0.0461]$ \\
\midrule
Fannie pandemic: macro bank $-$ frame (primary) & IID & $+0.0070$ & $[+0.0049, +0.0091]$ & $0.0000$ & $[+0.0066, +0.0080]$ \\
 & Vintage OOD (pandemic) & $+0.0048$ & $[+0.0039, +0.0057]$ & $0.0000$ & $[+0.0044, +0.0050]$ \\
 & Short-history & $+0.0177$ & $[+0.0125, +0.0229]$ & $0.0000$ & $[+0.0164, +0.0190]$ \\
\midrule
Fannie pandemic: macro bank $-$ static bank & IID & $+0.0066$ & $[+0.0044, +0.0088]$ & $0.0000$ & $[+0.0061, +0.0079]$ \\
 & Vintage OOD (pandemic) & $+0.0039$ & $[+0.0030, +0.0048]$ & $0.0000$ & $[+0.0036, +0.0041]$ \\
 & Short-history & $+0.0135$ & $[+0.0078, +0.0192]$ & $0.0000$ & $[+0.0106, +0.0151]$ \\
\bottomrule
\end{tabular}

%% file: tables/hc_stability_ladder.tex
%
\begin{tabular}{lcccccc}
\toprule
 & \multicolumn{3}{c}{Time (late weeks)} & \multicolumn{3}{c}{IID} \\
\cmidrule(lr){2-4}\cmidrule(lr){5-7}
Frame & Base & $\Delta$stab. & 95\% CI & Base & $\Delta$stab. & 95\% CI \\
\midrule
\multicolumn{7}{l}{\emph{Stability metric}} \\
Free & $0.4272$ & $+0.0259$ & $[-0.0569, +0.1276]$ & $0.5531$ & $-0.0020$ & $[-0.0259, +0.0219]$ \\
Confident & $0.3740$ & $+0.0270$ & $[-0.0449, +0.1267]$ & $0.5474$ & $+0.0007$ & $[-0.0193, +0.0219]$ \\
Strict & $0.3883$ & $-0.0259$ & $[-0.1141, +0.0660]$ & $0.5351$ & $-0.0102$ & $[-0.0374, +0.0016]$ \\
Summaries & $-0.0869$ & $+0.2192$ & $[-0.1757, +0.5678]$ & $0.1944$ & $+0.0920$ & $[+0.0534, +0.1329]$ \\
\midrule
\multicolumn{7}{l}{\emph{Raw AUC}} \\
Free & $0.8306$ & $-0.0016$ & $[-0.0052, +0.0017]$ & $0.8141$ & $-0.0010$ & $[-0.0033, +0.0012]$ \\
Confident & $0.8264$ & $-0.0041$ & $[-0.0073, -0.0009]$ & $0.8092$ & $+0.0002$ & $[-0.0020, +0.0024]$ \\
Strict & $0.8210$ & $-0.0052$ & $[-0.0086, -0.0018]$ & $0.8014$ & $+0.0001$ & $[-0.0019, +0.0022]$ \\
Summaries & $0.6944$ & $+0.0127$ & $[-0.0009, +0.0262]$ & $0.6459$ & $+0.0407$ & $[+0.0304, +0.0511]$ \\
\midrule
\multicolumn{7}{l}{\emph{Stability metric, 5-seed XGB ensemble (per RNN seed)}} \\
Free (RNN 0) & $0.4425$ & $+0.0137$ & $[-0.0452, +0.0785]$ & $0.5606$ & $-0.0012$ & $[-0.0189, +0.0203]$ \\
Confident (RNN 0) & $0.4204$ & $+0.0134$ & $[-0.0381, +0.0824]$ & $0.5510$ & $-0.0037$ & $[-0.0211, +0.0070]$ \\
Strict (RNN 0) & $0.4046$ & $-0.0086$ & $[-0.0765, +0.0615]$ & $0.5354$ & $-0.0059$ & $[-0.0248, +0.0033]$ \\
Summaries (RNN 0) & $-0.1161$ & $+0.2573$ & $[-0.1273, +0.6141]$ & $0.2046$ & $+0.0666$ & $[+0.0267, +0.1099]$ \\
Free (RNN 1) & $0.4425$ & $+0.0307$ & $[-0.0208, +0.0897]$ & $0.5606$ & $-0.0050$ & $[-0.0206, +0.0108]$ \\
Confident (RNN 1) & $0.4204$ & $+0.0113$ & $[-0.0432, +0.0688]$ & $0.5510$ & $-0.0023$ & $[-0.0127, +0.0118]$ \\
Strict (RNN 1) & $0.4046$ & $+0.0050$ & $[-0.0413, +0.0589]$ & $0.5354$ & $-0.0073$ & $[-0.0241, +0.0014]$ \\
Summaries (RNN 1) & $-0.1161$ & $+0.2944$ & $[-0.0350, +0.6413]$ & $0.2046$ & $+0.0620$ & $[+0.0267, +0.1001]$ \\
\bottomrule
\end{tabular}

%% file: tables/real_fannie_econ.tex
%
\begin{tabular}{lccc}
\toprule
Metric at approval rate & 80\% & 90\% & 95\% \\
\midrule
Approved default rate, frame (\%) & $1.60$ & $2.25$ & $2.81$ \\
Approved default rate, $+$macro (\%) & $1.39$--$1.50$ & $2.19$--$2.26$ & $2.76$--$2.83$ \\
Default capture in declined set, frame (\%) & $71.0$ & $54.0$ & $39.4$ \\
Default capture in declined set, $+$macro (\%) & $72.7$--$74.8$ & $53.9$--$55.2$ & $39.0$--$40.4$ \\
Swap set (\% of cohort) & $3.4$--$4.8$ & $1.7$--$2.3$ & $0.9$--$1.0$ \\
Swap-in default rate (\%) & $3.3$--$4.4$ & $8.4$--$9.3$ & $11.9$--$15.8$ \\
Swap-out default rate (\%) & $5.5$--$7.2$ & $9.1$--$10.8$ & $14.1$--$16.7$ \\
Net defaults avoided per 10k approved & $9.2$--$20.9$ & $-0.4$ to $+5.8$ & $-1.6$ to $+4.9$ \\
Defaulted-UPB bps avoided in approved book & $10$--$27$ & $-3$ to $+9$ & $-1$ to $+7$ \\
\bottomrule
\end{tabular}

%% file: main.bbl
\begin{thebibliography}{10}
\expandafter\ifx\csname url\endcsname\relax
  \def\url#1{\texttt{#1}}\fi
\expandafter\ifx\csname urlprefix\endcsname\relax\def\urlprefix{URL }\fi
\expandafter\ifx\csname href\endcsname\relax
  \def\href#1#2{#2} \def\path#1{#1}\fi

\bibitem{fed2011_sr11_7}
{SR 11-7}: Guidance on model risk management, Board of Governors of the Federal
  Reserve System Supervisory Letter, april 4, 2011 (2011).

\bibitem{occ2011_12}
Occ bulletin 2011-12: Sound practices for model risk management: Supervisory
  guidance on model risk management, Office of the Comptroller of the Currency
  Bulletin, april 4, 2011 (2011).

\bibitem{experian_trended3d}
{Experian Information Solutions},
  \href{https://www.experian.com/content/dam/marketing/na/assets/im/decision-analytics/product-sheets/trended-3d-product-sheet.pdf}{Trended
  3d: Trended attributes synthesizing 24 months of credit history}, Product
  sheet (2020).
\newline\urlprefix\url{https://www.experian.com/content/dam/marketing/na/assets/im/decision-analytics/product-sheets/trended-3d-product-sheet.pdf}

\bibitem{vantagescore4_userguide}
{VantageScore Solutions}, \href{https://www.vantagescore.com}{Vantagescore 4.0
  user guide}, Technical documentation (2022).
\newline\urlprefix\url{https://www.vantagescore.com}

\bibitem{chen2016xgboost}
T.~Chen, C.~Guestrin, Xgboost: A scalable tree boosting system, in: Proceedings
  of the 22nd ACM SIGKDD International Conference on Knowledge Discovery and
  Data Mining, 2016, pp. 785--794.

\bibitem{ke2017lightgbm}
G.~Ke, Q.~Meng, T.~Finley, T.~Wang, W.~Chen, W.~Ma, Q.~Ye, T.-Y. Liu, Lightgbm:
  A highly efficient gradient boosting decision tree, in: Advances in Neural
  Information Processing Systems, 2017, pp. 3146--3154.

\bibitem{lightgbm_monotone}
Lightgbm documentation: monotone constraints, LightGBM Parameters documentation
  (\texttt{monotone\_constraints}), accessed 2026-01-29 (2021).

\bibitem{instrument2026}
Y.~L. Tan, Measuring learned monotone temporal aggregation at matched
  admissibility, arXiv preprint arXiv:2610.05196 (2026).

\bibitem{thomas2017credit_scoring}
L.~C. Thomas, J.~N. Crook, D.~B. Edelman, Credit Scoring and Its Applications,
  2nd Edition, Society for Industrial and Applied Mathematics, Philadelphia,
  2017.

\bibitem{crook2007consumer_credit}
J.~N. Crook, D.~B. Edelman, L.~C. Thomas, Recent developments in consumer
  credit risk assessment, European Journal of Operational Research 183~(3)
  (2007) 1447--1465.

\bibitem{archer1993monotonic}
N.~P. Archer, S.~Wang, Application of the back propagation neural network
  algorithm with monotonicity constraints for two-group classification
  problems, Decision Sciences 24~(1) (1993) 60--75.

\bibitem{sill1997monotonic}
J.~Sill, Monotonic networks, in: Advances in Neural Information Processing
  Systems, 1997.

\bibitem{gupta2016lattice}
M.~Gupta, A.~Cotter, J.~Pfeifer, K.~Voevodski, K.~Canini, A.~Mangylov,
  W.~Moczydlowski, A.~van Esbroeck, Monotonic calibrated interpolated look-up
  tables, Journal of Machine Learning Research 17~(109) (2016) 1--47.

\bibitem{you2017deep}
S.~You, D.~Ding, K.~Canini, J.~Pfeifer, M.~R. Gupta, Deep lattice networks and
  partial monotonic functions, in: Advances in Neural Information Processing
  Systems, 2017.

\bibitem{wehenkel2019umnn}
A.~Wehenkel, G.~Louppe, Unconstrained monotonic neural networks, in: Advances
  in Neural Information Processing Systems, 2019.

\bibitem{liu2020certified_monotonic}
X.~Liu, X.~Han, N.~Zhang, Q.~Liu, Certified monotonic neural networks, in:
  Advances in Neural Information Processing Systems, Vol.~33, 2020, pp.
  15427--15438.

\bibitem{runje2023cmnn}
D.~Runje, S.~M. Shankaranarayana, Constrained monotonic neural networks, in:
  Proceedings of the 40th International Conference on Machine Learning, Vol.
  202 of Proceedings of Machine Learning Research, 2023, pp. 29338--29353.

\bibitem{chen2022mnam}
D.~Chen, W.~Ye, Monotonic neural additive models: Pursuing regulated machine
  learning models for credit scoring, in: Proceedings of the Third ACM
  International Conference on AI in Finance, 2022, pp. 70--78.

\bibitem{chen2023monotonicity_mrm}
D.~Chen, W.~Ye, How to address monotonicity for model risk management?, in:
  Proceedings of the 40th International Conference on Machine Learning, 2023,
  pp. 5282--5295.

\bibitem{boardman2024mrnn}
J.~Boardman, X.~Huang, Machine-learning techniques involving monotonic
  recurrent neural networks, U.S. Patent 11{,}960{,}993, assigned to Equifax
  Inc., filed 2020-11-10; granted 2024-04-16 (2024).

\bibitem{kvamme2018mortgage}
H.~Kvamme, N.~Sellereite, K.~Aas, S.~Sjursen, Predicting mortgage default using
  convolutional neural networks, Expert Systems with Applications 102 (2018)
  207--217.
\newblock \href {https://doi.org/10.1016/j.eswa.2018.02.029}
  {\path{doi:10.1016/j.eswa.2018.02.029}}.

\bibitem{babaev2019etrnn}
D.~Babaev, M.~Savchenko, A.~Tuzhilin, D.~Umerenkov, E.t.-{RNN}: Applying deep
  learning to credit loan applications, in: Proceedings of the 25th ACM SIGKDD
  International Conference on Knowledge Discovery \& Data Mining, 2019, pp.
  2183--2190.

\bibitem{gunnarsson2021deep}
B.~R. Gunnarsson, S.~vanden Broucke, B.~Baesens, M.~{\'O}skarsd{\'o}ttir,
  W.~Lemahieu, Deep learning for credit scoring: Do or don't?, European Journal
  of Operational Research 295~(1) (2021) 292--305.
\newblock \href {https://doi.org/10.1016/j.ejor.2021.03.006}
  {\path{doi:10.1016/j.ejor.2021.03.006}}.

\bibitem{hyndman2002ets}
R.~J. Hyndman, A.~B. Koehler, R.~D. Snyder, S.~Grose, A state space framework
  for automatic forecasting using exponential smoothing methods, International
  Journal of Forecasting 18~(3) (2002) 439--454.
\newblock \href {https://doi.org/10.1016/S0169-2070(01)00110-8}
  {\path{doi:10.1016/S0169-2070(01)00110-8}}.

\bibitem{riskmetrics1996}
{J.P. Morgan/Reuters}, {RiskMetrics}---technical document, Tech. rep., J.P.
  Morgan/Reuters, New York, fourth edition (1996).

\bibitem{goetzmann2003hwm}
W.~N. Goetzmann, J.~E. Ingersoll, S.~A. Ross, High-water marks and hedge fund
  management contracts, The Journal of Finance 58~(4) (2003) 1685--1718.

\bibitem{gal2016dropout}
Y.~Gal, Z.~Ghahramani, Dropout as a bayesian approximation: Representing model
  uncertainty in deep learning, in: Proceedings of the 33rd International
  Conference on Machine Learning, Vol.~48 of Proceedings of Machine Learning
  Research, 2016, pp. 1050--1059.

\bibitem{daw2019pga_uq}
A.~Daw, R.~Q. Thomas, C.~C. Carey, J.~S. Read, A.~P. Appling, A.~Karpatne,
  Physics-guided architecture (pga) of neural networks for quantifying
  uncertainty in lake temperature modeling, in: Proceedings of the 2020 SIAM
  International Conference on Data Mining (SDM), 2020, pp. 532--540.
\newblock \href {https://doi.org/10.1137/1.9781611976236.60}
  {\path{doi:10.1137/1.9781611976236.60}}.

\bibitem{wilson1998portfolio}
T.~C. Wilson, Portfolio credit risk, Federal Reserve Bank of New York Economic
  Policy Review 4~(3) (1998) 71--82.

\bibitem{vasicek2002distribution}
O.~A. Vasicek, Loan portfolio value, Risk 15~(12) (2002) 160--162.

\bibitem{bellotti2009credit}
T.~Bellotti, J.~Crook, Credit scoring with macroeconomic variables using
  survival analysis, Journal of the Operational Research Society 60~(12) (2009)
  1699--1707.
\newblock \href {https://doi.org/10.1057/jors.2008.130}
  {\path{doi:10.1057/jors.2008.130}}.

\bibitem{deng2000mortgage}
Y.~Deng, J.~M. Quigley, R.~Van~Order, Mortgage terminations, heterogeneity and
  the exercise of mortgage options, Econometrica 68~(2) (2000) 275--307.
\newblock \href {https://doi.org/10.1111/1468-0262.00110}
  {\path{doi:10.1111/1468-0262.00110}}.

\bibitem{aguais2004pit}
S.~D. Aguais, L.~R. Forest, E.~Y.~L. Wong, D.~Diaz-Ledezma, Point-in-time
  versus through-the-cycle ratings, in: M.~K. Ong (Ed.), The Basel Handbook: A
  Guide for Financial Practitioners, Risk Books, London, 2004.

\bibitem{krueger2018provisioning}
S.~Kr{\"u}ger, D.~R{\"o}sch, H.~Scheule, The impact of loan loss provisioning
  on bank capital requirements, Journal of Financial Stability 36 (2018)
  114--129.
\newblock \href {https://doi.org/10.1016/j.jfs.2018.02.009}
  {\path{doi:10.1016/j.jfs.2018.02.009}}.

\bibitem{bellotti2013stress}
T.~Bellotti, J.~Crook, Forecasting and stress testing credit card default using
  dynamic models, International Journal of Forecasting 29~(4) (2013) 563--574.
\newblock \href {https://doi.org/10.1016/j.ijforecast.2013.04.003}
  {\path{doi:10.1016/j.ijforecast.2013.04.003}}.

\end{thebibliography}
